\documentclass[mnsc,nonblindrev]{informs3} 

\OneAndAHalfSpacedXI 




\usepackage{natbib}
\bibpunct[, ]{(}{)}{,}{a}{}{,}%
\usepackage{booktabs}

\usepackage{mathtools}
\usepackage{multirow}
\usepackage{mathtools, bm, bbm, color, enumerate, amsbsy, amsfonts, amsopn, amssymb,  amsxtra, bezier, graphicx, latexsym, verbatim,
	pictexwd, supertabular, url, dsfont,leftidx, setspace}
%

\usepackage[ruled,vlined,noend]{algorithm2e}
\newtheorem{eg}{Example}[section]
\newcommand{\mb}[1]{\ensuremath{\boldsymbol{#1}}}

\newcommand{\onee}{\mathbbm{1}}

\newcommand{\ntr}{{\tt In}}
\newcommand{\xtr}{{\tt Ex}}

\def\opt{\textsc{OPT}}
\def\alg{\textsc{GPG}}
\def\pg{\textsc{GPG}}
\def\aalg{\textsc{ALG}}

\DeclarePairedDelimiter{\ceil}{\lceil}{\rceil}

\SetAlFnt{\small}
\SetAlCapFnt{\small}
\SetAlCapNameFnt{\small}
\SetAlCapHSkip{0pt}
\IncMargin{-\parindent}

\def\falg{f\textsc{-GPG}} 

\def\mns{\text{-}}

\def\argmax{\text{argmax}}


\TheoremsNumberedThrough     
\ECRepeatTheorems
%
\EquationsNumberedThrough 

\begin{document}
\TITLE{Adwords with Unknown Budgets and Beyond} 
\ARTICLEAUTHORS{%
\AUTHOR{	Rajan Udwani}
\AFF{University of California Berkeley,  Industrial Engineering and Operations Research, 
\EMAIL{rudwani@berkeley.edu}}
}
\ABSTRACT{
In the classic Adwords problem introduced by Mehta et al.\ (2007), we have a bipartite graph between advertisers and queries. Each advertiser has a maximum budget that is known a priori. Queries are unknown a priori and arrive sequentially. When a query arrives, advertisers make bids and we (immediately and irrevocably) decide which (if any) Ad to display based on the bids and advertiser budgets. The winning advertiser for each query pays their bid up to their remaining budget. Our goal is to maximize total budget utilized without any foreknowledge of the arrival sequence (which could be adversarial). We consider the setting where the online algorithm does not know the advertisers' budgets a priori and the budget of an advertiser is revealed to the algorithm only when it is exceeded. A na\"ive greedy algorithm is 0.5 competitive for this setting and finding an algorithm with better performance remained an open problem. We show that no deterministic algorithm has competitive ratio better than 0.5 and give the first (randomized) algorithm with strictly better performance guarantee. We show that the competitive ratio of our algorithm is at least 0.522 but also strictly less than $(1-1/e)$. We present novel applications of budget oblivious algorithms in search ads and beyond. In particular, we show that our algorithm achieves the best possible performance guarantee for deterministic online matching in the presence of multi-channel traffic (Manshadi et al. (2022)).  
}
\KEYWORDS{Adwords, Randomized Algorithms, Unknown Budgets, Competitive Ratio}
\maketitle

\section{Introduction}
Online advertising has emerged as the dominant marketing channel in many parts of the world. According to some estimates, in the year 2019, more than 450 billion USD were spent on online ads, which accounts for over 60\% of overall expenditure on ads~\footnote{Digital advertising spending worldwide from 2019 to 2024. https://www.statista.com/statistics/237974/online-advertising-spending-worldwide/}. Internet search is a prominent channel for online advertisement\footnote{In 2020 alone, Google made a revenue of over 104 billion USD from ``search \& other", which accounts for 70\% of their total revenue from advertising and exceeds half the total revenue of parent company Alphabet (see Alphabet Year in Review 2020. https://abc.xyz/investor/).}. 
In this medium, also called \emph{search ads}, advertisements are displayed alongside search results for key words relevant to the advertiser
Given a search request, advertisers make bids and then the platform 
decides which (if any) ads to show with the search results. 
The 
following model captures key elements of this problem.
\smallskip

\noindent \textbf{The Adwords problem~\citep{msvv}:}  At the beginning of the planning period (typically a day), the platform has a set $I$ of advertisers along with their budgets $(B_i)_{i\in I}$. Queries arrive sequentially on the platform and when a query $t$ arrives, advertisers make bids $(b_{i,t})_{i\in I}$. Given the bids and advertiser budgets, the platform immediately picks at most one advertisers' ad to display alongside the search results for the query\footnote{For simplicity, the model considers at most one ad slot per search result. A generalization to multiple ad slots is described in Section 6 of \cite{msvv}.}. The chosen advertiser pays their bid but only up to their remaining budget, i.e., at the end of the planning period, the total payment of an advertiser does not exceed his/her budget. The objective of the platform is to maximize the total advertiser budgets utilized without any foreknowledge of the arrival sequence (which could be adversarial). 
A typical modeling assumption is that the maximum individual bid by any advertiser is much smaller than the advertiser's initial budget. This 
is called the \emph{small bids} or \emph{large budgets} assumption and it is in line with the practice of search ads. 

\smallskip
%

In the face of uncertain future queries, which (if any) ad should the platform show for a given query? The greedy algorithm for this problem 
shows the ad with highest bid and (non-zero) available budget. In particular, if $I(t)$ is the set of advertisers which have non-zero remaining budget when query $t$ arrives, the greedy algorithm shows ad,
\[\underset{i\in I(t)}{\argmax}\quad  b_{i,t}.\]
This algorithm does not distinguish between two advertisers that make identical bids, even if the budget of one advertiser is nearly used up and the other has plenty of budget remaining. 
\cite{msvv} 
gave the state-of-the-art \emph{bid pricing} algorithm that 
carefully accounts for the fraction of remaining budgets. Given remaining budget $(B_i(t))_{i\in I}$ on arrival of query $t$, the algorithm of \cite{msvv} computes bid prices 
\[b_{i,t}\, (1-e^{-B_i(t)/B_i})\qquad \forall i\in I,\] 
and shows the ad with the largest bid price. Given two advertisers with similar bids, this algorithm may select the advertiser with smaller bid and higher fraction of remaining budget. The trade off between bid and (fraction of) remaining budget is governed by the function $(1-e^{-x})$ that arises quite naturally from the worst case performance analysis, a.k.a.\ \emph{competitive ratio} analysis, of this algorithm. The competitive ratio of an algorithm is the worst case relative performance gap between the online algorithm and (optimal) offline algorithm, \opt, that knows all the queries and advertiser bids. Let $G$ denote an instance of the problem (advertisers, arrivals, bids and budgets) and let $\mathcal{G}$ denote the set of all instances. Let $\textsc{ALG} (G)$ denote the expected total budget utilized by a (possibly) randomized online algorithm \textsc{ALG} on instance $G$. Similarly, let $\opt(G)$ denote the optimal offline value on instance $G$. 
\[\textbf{Competitive ratio of \textsc{ALG}:}\qquad  \min_{G\in \mathcal{G}}\frac{\textsc{ALG}(G)}{\opt(G)} \]
The algorithm of \cite{msvv} is $(1-1/e)$ competitive. In comparison, the greedy algorithm has a competitive ratio of 0.5. Remarkably, no (randomized) online algorithm has competitive ratio better than $(1-1/e)$ even for special cases of the Adwords problem~\citep{pruhs}.  

The knowledge of advertisers' initial budgets is essential for defining the bid pricing algorithm of \cite{msvv}. In fact, most (if not all) algorithms for Adwords in the literature (see \cite{devhay, alaei, mirrokni,survey, devsiv,lu}), rely critically on the knowledge of initial budgets for each advertiser. 
The na\"ive greedy algorithm is the one exception to this. For each query, greedy shows the ad with highest bid and (non-zero) available budget. Therefore, it is \emph{budget oblivious}, i.e., does not require any advance information about budgets except the knowledge of which advertisers are still participating (have non-zero remaining budget).  Clearly, a budget oblivious algorithm is more robust since it requires even less knowledge of the instance. 
Motivated by this, we consider the following question. 
\smallskip

\emph{
Is there a budget oblivious online algorithm for Adwords that outperforms greedy? What are the advantages of budget obliviousness? } 
\smallskip

We are not the first to ask these questions. In fact, the problem of Adwords with unknown budgets was first proposed by \cite{deb}, who used a special case of the problem to analyze 
a related setting called online matching with stochastic rewards (described in more detail in Section \ref{sec:stochrew}). Prompted by this intriguing connection, \cite{survey} posed the question of finding a budget oblivious algorithm that is better than greedy (or proving that no such algorithm exists) as an open problem (see Open Question 20 in \cite{survey}). To the best of our knowledge, this question remained open prior to our work.


\subsection{Our Contributions}\label{sec:ourc}
When budgets are unknown, we show that greedy 
is the best possible deterministic algorithm even when the budgets are large, i.e., no deterministic algorithm has competitive ratio better than 0.5 when budgets large but unknown. In fact, we show that this is true even when every bid is either 0 or 1, in which case the problem reduces to online matching (with large budgets). 


We show that one can do strictly better than greedy with randomized algorithms and give the first budget oblivious algorithm with competitive ratio better than 0.5. Our algorithm samples i.i.d.\ uniform random variables $x_i\in[0,1]\,\, \forall {i\in I}$, and computes random bid prices, 
{\color{black}\[b_{i,t}\, (1-e^{-\beta\, x_i})\qquad \forall i\in I,\] 
where $\beta>0$ is a parameter that can be chosen by the platform. 
At each query, the algorithm shows the ad with the largest (random) bid price from the set of advertisers with non-zero remaining budget. 
The key difference from the algorithm of \cite{msvv} is that bid prices are independent of budgets; the budget dependent factor, $(1-e^{-B_i(t)/B_i})$, in the algorithm of \cite{msvv}, 
is replaced with the random factor $(1-e^{-\beta\,x_i})$.} In the special case of Adwords where $b_{i,t}\in\{0,B_i\}\,\, \forall i\in I, t\in T$, our algorithm (with $\beta=1$) reduces to the 
Perturbed Greedy algorithm of \cite{goel}. Due to this connection, we refer to our algorithm as \emph{Generalized Perturbed Greedy} or GPG. 

Under the standard assumption of small bids, we show that for $\beta=1.15$, GPG is at least $0.522$ competitive against the optimal offline algorithm that knows all bids and budgets. For $\beta=1$, we show that the competitive ratio is least 0.508. Further, we prove that the competitive ratio of GPG is strictly less than $0.624$ ($<(1-1/e)$), for all $\beta\geq 1$. {\color{black}In fact, we numerically observe a stronger upper bound of 0.604 for $\beta=1$}\footnote{Very recently, \cite{zhihao} showed that for any function $f$, the algorithm that matches greedily using bid prices $b_{i,t} f(y_i)$ is strictly less than $(1-1/e)$ competitive.}. To prove the competitive ratio lower bound we overcome a novel obstacle that arises from combinatorial interactions between time varying bids and the randomness inherent in the algorithm. In particular, a ``dominance" property that is crucially used in most (if not all) previous analysis of (special cases) of GPG~\citep{goel, devanur, econ, albers, vazirani}, does not apply in the Adwords setting. 
We address this challenge by finding new structural insights into the problem. {\color{black} We also show that various parts of our analysis are individually tight but we believe that the overall analysis may not be tight.}

It is Adwords folklore that for small bids (large budgets) deterministic algorithms are as powerful as randomized ones, i.e., if there is a randomized $\alpha$ competitive algorithm then there exists a deterministic algorithm with competitive ratio at least $\alpha$.
Our results for unknown budgets imply a perhaps surprising gap between the two classes of algorithms. At a high level, 
randomness allows an online algorithm to (with some probability) conserve the budget of advertisers with low initial budgets 
without actually knowing the budgets. 

We demonstrate the usefulness of our budget oblivious algorithm in a variety of settings. First, we consider the setting of online matching (special case of Adwords) and multi-channel traffic. A stochastic generalization of this setting was introduced by \cite{multi} 
to model online recommendations on platforms such as VolunteerMatch.  
We show that our budget oblivious algorithm achieves the best possible competitive ratio guarantee for this setting, closing the gap between the lower and upper bound shown in \cite{multi}. Second, we consider a previously well known application of budget oblivious algorithms in the setting of online matching with stochastic rewards~\citep{deb}. Our result for Adwords with unknown budgets gives a new and more unified algorithmic result for this setting. Finally, we discuss potential applications of budget oblivious algorithms in auto-bidding and automated budget management, where a budget oblivious allocation algorithm may allow the platform to improve total budget utilization by re-optimizing advertisers' budgets during the planning period. 
\subsection{Related Work}\label{sec:ref}
\cite{kvv} introduced the classic setting of online bipartite matching that corresponds to Adwords with binary bids and unit budget for all advertisers. A bid of 1 denotes an edge in the bipartite graph and bid of 0 denotes the absence of an edge. Every advertiser can be matched to at most one neighboring query. \cite{kvv} showed (among other results) that randomly ranking advertisers at the start and then matching every query to the best ranked unmatched advertiser is a $(1-1/e)$ competitive algorithm for this setting. In fact, this algorithm, called Ranking, achieves the best possible guarantee for the problem.  
The analysis of Ranking was clarified and considerably simplified by \cite{baum} and \cite{goel2}. \cite{goel} considered the more general vertex weighted version of this problem 
which corresponds to Adwords with bids $b_{i,t}\in\{0,B_i\}\,\, \forall i\in I, t\in T$. They gave the Perturbed Greedy algorithm (that we use in this paper), and showed that the algorithm is $(1-1/e)$ competitive.

\cite{pruhs} considered the problem of online $b-$matching, which is a special case of Adwords with binary bids and identical (large) budget $b$ for every advertiser. They showed that as $b\to \infty$, the natural (deterministic) algorithm that balances the budget usage across advertisers is $(1-1/e)$ competitive. Generalizing this setting, \cite{msvv} introduced the Adwords problem 
and gave the bid pricing based $(1-1/e)$ algorithm for Adwords under the small bid assumption. \cite{buchbind} gave a primal-dual analysis for this algorithm. Subsequently, \cite{devanur} introduced the randomized primal-dual framework and used it to show all of the results mentioned above in a unified way. 

The Adwords setting without the small bids assumption, generalizes each of the settings discussed above. The budget-aware greedy algorithm that matches each query $t\in T$ according to the following rule, 
\[\argmax_{i\in I}\,\, (\min\{b_{i,t},B_i(t)\}),\]
where $B_i(t)$ is the remaining budget of $i$ on arrival of query $t$, is 0.5 competitive for Adwords without any assumption on the bids. 
\cite{kapralov} showed that without the small bids assumption, no online algorithm has competitive ratio better than 0.612 for Adwords\footnote{This applies when the competitive ratio is evaluated against an LP upper bound on the offline problem.}. Recently, \cite{huang2020adwords} gave the first algorithm with competitive ratio better than 0.5 for Adwords without the small bids assumption. 

\textbf{Concurrent with this paper:}  \cite{albers} and \cite{vazirani, Vazirani2022TowardsAP} independently show that the Perturbed Greedy algorithm is $(1-1/e)$ competitive for Adwords with binary bids and arbitrary resource budgets. 
To the best of our knowledge, these approaches do not yield a performance guarantee better than 0.5 for Generalized Perturbed Greedy (GPG) in the Adwords setting and this is posed as an open problem in \cite{Vazirani2022TowardsAP}. \cite{Vazirani2022TowardsAP} also identifies a key structural property, called \emph{no surpassing}, the absence of which prevents a generalization of their result to the Adwords setting. Through various numerical experiments (based on synthetic data), they demonstrate that the performance of GPG is at par with the algorithm of \cite{msvv}. We include a more detailed comparison with these papers in Appendix \ref{appx:related}. 

Very recently, \cite{zhihao} showed that for any function $f$, the algorithm that matches greedily using bid prices $b_{i,t} f(y_i)$, is strictly less than $(1-1/e)-\delta$ competitive for some constant $\delta>0$.

While the body of work discussed above considers an adversarial arrival sequence, there is also a long line of work on online matching and Adwords in stochastic and hybrid/mixed arrival models (for example, \cite{goel2,feldman,devhay,karande,vahideh,alaei,devsiv, mirrokni}).  For a comprehensive review of these settings we refer to \cite{survey}.

%
\medskip


\noindent \textbf{Outline for rest of the paper:} Section \ref{sec:prelim} discusses the assumption of small bids and presents a resource allocation version of the Adwords setting that we use interchangeably with the original formulation. In Section \ref{sec:deter}, we show that one cannot obtain a deterministic algorithm with competitive ratio better than 0.5. In Section \ref{sec:newmain}, we state and prove our main results for Adwords with unknown budgets. In Section \ref{sec:extension}, we show new results for our budget oblivious algorithm in settings beyond Adwords and discuss applications of budget obliviousness. Section \ref{sec:conclusion} concludes our discussion. 

%
%
%
%
%
%
\section{Preliminaries}\label{sec:prelim}
The Adwords problem generalizes a variety of settings in the literature on online resource allocation. We formalize this connection by discussing an online resource allocation problem that is known to be equivalent to the Adwords problem. Terminology from this setting and the Adwords problem will be used interchangeably throughout the paper.
\smallskip

\noindent \textbf{Online Budgeted Allocation (OBA):} Consider a complete bipartite graph $G$ with vertex set $(I,T)$. Vertices $i\in I$, called \emph{resources}, have capacities $(B_i)_{i\in I}$ and per-unit rewards $(r_i)_{i\in I}$. Resources, their rewards, and capacities are known to us. Vertices $t\in T$, called \emph{arrivals}, are unknown a priori and arrive sequentially. We use $T$ to denote the set of arrivals as well as the total number of arrivals. When a vertex $t\in T$ arrives, we see \emph{bids} $(b_{i,t})_{i\in I}$ 
that indicate $t$ is interested in up to $b_{i,t}$ amount of resource $i\in I$. Given the bids for arrival $t$, we must immediately and irrevocably match the arrival to at most one resource. If $t$ is matched to $i$, $b_{i,t}$ amount of $i$ are consumed, subject to availability.  
If the remaining capacity of $i$ is less than $b_{i,t},$ then matching $t$ to $i$ uses up all the remaining capacity. 
We receive a reward $r_i$ per-unit of $i$'s consumed capacity. The total reward from matching $i$ is capped at $r_iB_i$. The goal is to decide the allocation/matching for arrivals without any knowledge of future arrivals such that the total reward is maximized.
\smallskip

The Adwords problem is an instance of OBA where resources correspond to advertisers and arrivals correspond to queries. 
The per-unit rewards $r_i$ are set to 1 in Adwords for every $i\in I$. {\color{black} On the other hand, an instance of OBA with bids $b_{i,t}$, capacities $B_i$, and per-unit rewards $r_i$, is equivalent to the Adwords setting with scaled bids $r_ib_{i,t}$ and budgets $r_iB_i$.} Due to this observation, without loss of generality (w.l.o.g.), we let per-unit rewards  
\[r_i=1 \quad \forall i\in I.\]
A standard assumption in the Adwords setting is the that the maximum bid-to-budget ratio, \[\gamma:= \max_{i\in I,t\in T}\frac{b_{i,t}}{B_i}.\]
is small, i.e., 
$\gamma\to 0$. This is also called the \emph{small bid} or the \emph{large budget/capacity} assumption. This assumption is in line with 
the practice of search ads, where individual bids are typically much smaller than the overall budget. Recall that the $(1-1/e)$ guarantee of \cite{msvv} holds only in the small bid regime. Even in this regime, no online algorithm has a competitive ratio better than $(1-1/e)$. 

\noindent \textbf{OBA with Unknown Capacities:} In the setting we are interested in, the online algorithm has no prior knowledge of resource capacities. When the algorithm fully utilizes the capacity of a resource, the event is immediately revealed to the algorithm, i.e., capacity of a resource is revealed right after it is used up. 
We evaluate the competitive ratio of a capacity/budget oblivious online algorithm against the optimal offline matching with complete knowledge of the instance. 





\section{Upper Bound for Deterministic Algorithms}\label{sec:deter}
The folklore for Adwords and many other online resource allocation problems says that 
given an online randomized algorithm, one can construct an online  deterministic algorithm that matches arrivals \emph{fractionally} and emulates the expected performance 
of the randomized algorithm at every arrival. As we demonstrate below, this line of argument relies on prior knowledge of budgets.  

\begin{eg}\label{determeg}\emph{
Consider an instance of online matching with a \emph{big} resource that has capacity 2 and a \emph{small} resource that has capacity 1. The first two arrivals have an edge to both resources (bids of 1). The Ranking algorithm starts with a random ranking of the two resources and matches every arrival to an available resource with the best rank. The first arrival is matched uniformly randomly. The second arrival is matched to the big resource with probability (w.p.) 1; either Ranking matches both arrivals to the \emph{big} resource (w.p. 0.5) or matches the second arrival to the \emph{big} resource 
after discovering (w.p. 0.5) that the budget of \emph{small} resource has been used up. A deterministic algorithm that tries to emulate Ranking will match equal fractions (0.5) of the first arrival to both resources. Now, half of the \emph{small} resource's budget is available at the second arrival and without knowing the budgets, the deterministic algorithm fails to distinguish between the two resources. 
Given a third arrival that can only be matched to the small resource, Ranking matches it w.p. 0.5 but 
the deterministic counterpart will fail to save the \emph{small} resources' budget and cannot match any fraction of the final arrival. 
}\end{eg}
Building on this example, we establish the following result.
\begin{theorem}
Every deterministic budget oblivious online algorithm for Adwords has competitive ratio at most 0.5, even on instances with binary bids and large budgets. 
\end{theorem}
\begin{proof}{Proof.}
Let ALG denote a deterministic budget oblivious online algorithm. For a deterministic algorithm, we can construct a worst case arrival sequence and assign resource capacities based on the decisions made by the algorithm. We construct an instance with $n$ resources and binary bids. The arrival sequence has two phases. Given an arrival and a resource, an edge between the two corresponds to a bid of 1 and the absence of an edge corresponds to a bid of 0. Let $B>0$ be an arbitrary integer value. We assign capacity $nB$ to one resource and capacity $B$ to the other $n-1$ resources.

Phase one arrivals occur first and this phase has $nB$ arrivals that have an edge to every resource. When ALG matches a resource $B$ times, we reveal (to ALG) that the capacity of the resource is $B$, i.e., the resource has been used up. 
We do this up to $n-1$ times in total, i.e., when a resource $i$ is matched to $B$ arrivals and $i$ is not the sole resource with non-zero remaining capacity, we then inform the algorithm \alg\ that the capacity of $i$ has been exhausted. There is at least one resource with non-zero reamining capacity at the end of phase one. Let $j^*$ denote one such resource. 
Phase two has $(n-1)B$ arrivals that occur after phase one arrivals and have an edge to every resource except $j^*$. Therefore, ALG can only match these arrivals to resources in $[n]\backslash \{j^*\}$ with available capacity. The capacity of resource $j^*$ is $nB$ and the capacity of all other resources is $B$. 

Note that, optimal offline (\opt) matches every arrival in phase one to resource $j^*$ and matches phase two arrivals to the other resources, fully utilizing the total capacity. 
\[\opt= (2n-1)B.\]  
At the end of both phases, ALG matches at most $B$ arrivals to $j^*$ and therefore, at least $(n-1)B$ of the total resource capacity is unused in ALG. 
For $n\to+\infty$, we have
\[\frac{\text{ALG}}{\opt}\leq 1- \lim_{n\to +\infty} \frac{n-1}{2n-1}=0.5.\] 
Since the value of $B>0$ can be arbitrary, this upper bound also applies to instances with large budgets. 

\hfill\Halmos\end{proof}
\section{Randomized Budget Oblivious Algorithm and Analysis}\label{sec:main}\label{sec:randlb}\label{sec:newmain}
In this section, we state and prove our main results for the Adwords/OBA problem with unknown budgets. 
Consider the following family of randomized algorithms with parameter $\beta> 0$.

\begin{algorithm}[H]
\textbf{Inputs:} Set of advertisers $I$, parameter $\beta$\; 
Let $g(x)=e^{\beta(x-1)}$\;
For every $i\in I$ generate i.i.d.\ r.v.\ $y_i\sim U[0,1]$\;
\For{\text{every new arrival } $t$}{
Match $t$ to $i^*=\underset{ i\in I}{\arg\max}\,\, b_{i,t} (1-g(y_i))$\;
\textbf{if} {$i^*$ is out of budget} \textbf{then} \text{update} $I=I\backslash\{i^*\}$\;			}

\caption{Generalized Perturbed-Greedy (\alg)}
\label{rank}
\end{algorithm}

Observe that Algorithm \ref{rank} (\alg) is budget oblivious and matches every arrival $t\in T$ greedily based on \emph{randomized bid prices} $b_{i,t}\, (1-g(y_i))\,\, \forall i\in I$. The uniform random variables $(y_i)_{i\in I}$, called \emph{seeds}, are sampled independently for each $i\in I$. On any given problem instance, ties between bid prices of any two (different) resources occur with a probability of 0. 

The choice of function $g(\cdot)$ is influenced by the competitive ratio analysis of the algorithm. 
In particular, choosing $g(x)=e^{x-1}$ gives the optimal algorithm for closely related settings such as vertex weighted online matching~\citep{goel}. Motivated by this, we will focus on analyzing the competitive ratio of \alg\ for the family of exponential functions $g(x)=e^{\beta(x-1)}$ for $\beta>0$. {\color{black} Note that in Section \ref{sec:ourc}, we presented \alg\ with seed values $x_i=1-y_i$. For consistency with prior work~\citep{goel, devanur}, we use seeds $y_i$ in the subsequent discussion. }
{\color{black}
As we will discuss later, it is challenging to show a non-trivial guarantee for \alg\ directly. We lower bound the competitive ratio of \alg\ by first analyzing Algorithm \ref{frank} ($\falg$), which is a fractional relaxation of \alg.
\begin{algorithm}
	\textbf{Inputs:} Set of advertisers $I$, budgets $(B_i)_{i\in I}$, parameter $\beta$\;
	Initialize $I(0)=I$, $g(t)=e^{\beta(t-1)}$ and $B_i(0)=B_i$ for every $i\in I$\;
	Generate i.i.d.\ sample $y_i\in U[0,1]$ for every $i\in I$\;
	\For{\text{every new arrival } $t$}{
		Initialize total fractional of arrival matched $\delta=0$\; 
Initialize set of available advertisers $I(t)=I(t-1)$ and remaining budgets $B_i(t)=B_i(t-1)\,\, \forall i\in I$\; 
		\While{$\delta<1$ \textbf{\emph{and}} $I(t)\cap \{i\mid b_{i,t}>0\}\neq \emptyset$}{			
			Let $i^*=\underset{ i\in I(t)}{\arg\max}\,\, b_{i,t}r_i (1-g(y_i))$\; 
			Let $\delta_{i^*}(t)=\min\left\{1\,,\, \frac{B_{i^*}(t-1)}{b_{i^*,t}}\right\}$\;
			\textbf{if} $\delta_{i^*}(t)>0$ \textbf{then} update $B_{i^*}(t)=B_{i^*}(t-1)-b_{i^*,t}\,\delta_{i^*}(t)$ and $\delta=\delta+\delta_{i^*}(t)$\;
			Update the set of available advertisers $I(t)=\{i\mid B_i(t)>0\}$\;
	}
}
\caption{Fractional GPG ($\falg$)}
\label{frank}
\end{algorithm}
Clearly, $\falg$ is not budget oblivious. However, this is not of concern as the algorithm is only used as an intermediate step to analyze \alg.
$\falg$ is also a randomized algorithm where random seeds $(y_i)_{i\in I}$ are chosen at the beginning. Similar to \alg, ties between bid prices occur with a probability of 0 in $\falg$. The key difference in $\falg$ is that each arrival $t\in T$ is matched fractionally to (possibly) multiple resources. For $t\in T$ and some fixed values of the seeds $(y_i)_{i\in I}$, let $I(t)$ denote the set of resources with available budget when $t$ arrives and let $B_i(t)$ denote the available budget of resource $i$ right after arrival $t$ departs. In \alg, we simply match an arrival $t$ to resource $i^*$ that has non-zero budget available and has the highest randomized bid price, whereas, in $\falg$, if the remaining budget of $i^*$ is less than the bid $b_{i^*,t}$, i.e., $B_{i^*}(t-1)<b_{i^*,t}$, then we only match a fraction $\frac{B_{i^*}(t-1)}{b_{i^*,t}}$ of arrival $t$ to $i^*$. For example, if $B_{i^*}(t-1)=0.5b_{i^*,t}$, then we match 0.5 fraction of $t$ to $i^*$ and this uses up the remaining budget of $i^*$. Then, from the remaining set of resources we again select a resource with the highest (randomized) bid price and repeat the process, until $t$ is fully (fractionally) matched or no resource is available. The total amount of resource $i$'s budget allocated to arrival $t$ is given by $B_i(t)-B_i(t-1)$. Observe that an arrival is fractionally matched to at most $|I|$ different resources and at most $|I|$ arrivals are matched fractionally in $\falg$. 
 We show the following lower bound on the competitive ratio of $\falg$ against the optimal offline integral matching. 

\begin{theorem}\label{guafalg}
With $\beta=1.15$, $\falg$ is at least $0.522$ competitive against optimal (integer) offline allocation. When $\beta=1$, $\falg$ is at least $0.508$ competitive.
\end{theorem}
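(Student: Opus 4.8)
The plan is to prove Theorem~\ref{guafalg} with a randomized primal--dual / LP-free charging argument in the spirit of \cite{devanur,full}, applied to the \emph{fractional} trajectory of Algorithm~\ref{frank}. First I reduce to the case $r_i=1$ for all $i$ via the scaling equivalence of Section~\ref{sec:prelim} (replace $b_{it}\mapsto r_ib_{it}$ and $B_i\mapsto r_iB_i$), and I upper bound the integer offline optimum $\opt$ by an optimal feasible offline \emph{fractional} assignment $(m^*_{it})$ --- one with $\sum_i m^*_{it}\le 1$ for each arrival, $\sum_t m^*_{it}\le B_i$ for each resource, and $m^*_{it}\le b_{it}$ on each edge --- whose value $\sum_{i,t}m^*_{it}$ is at least $\opt$. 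It then suffices to show $\mathbb{E}[\alg]\ge\eta\sum_{i,t}m^*_{it}$ for Algorithm~\ref{frank}.

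Fix the seed vector $Y=(y_i)_{i\in I}$ and recall $g(z)=e^{\beta(z-1)}\in[e^{-\beta},1]$. Whenever the process pours a $d\tau$ fraction of arrival $t$ into $i=i^*(\tau)$, consuming $b_{it}\,d\tau$ of budget and earning reward $b_{it}\,d\tau$, I split the reward by crediting $g(y_i)\,b_{it}\,d\tau$ to a resource account $G_i$ and $(1-g(y_i))\,b_{it}\,d\tau$ to an arrival account $H_t$ (a tuned split function in place of $g$ may be needed; see below). Integrating, $G_i(Y)=g(y_i)\,x_i(T+1)$ and $H_t(Y)=\int_{[t,t+1)}\bigl(1-g(y_{i^*(\tau)})\bigr)\,b_{i^*(\tau)t}\,d\tau$, and since the split coefficients sum to $1$ pointwise, $\sum_i G_i(Y)+\sum_t H_t(Y)$ equals the reward $\sum_i x_i(T+1)$ of Algorithm~\ref{frank} on $Y$. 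Combining the per-edge inequality
\[
\frac{\mathbb{E}_Y[G_i]}{B_i}+\frac{\mathbb{E}_Y[H_t]}{b_{it}}\ \ge\ \eta\qquad\text{for every edge }(i,t)
\]
with $(m^*_{it})$ --- multiply by $m^*_{it}$, use $m^*_{it}\le b_{it}$, then sum over edges using $\sum_t m^*_{it}\le B_i$ and $\sum_i m^*_{it}\le 1$ --- gives $\eta\sum_{i,t}m^*_{it}\le \mathbb{E}_Y[\sum_i G_i+\sum_t H_t]=\mathbb{E}_Y[\alg]$. Hence the theorem reduces to establishing the per-edge inequality with $\eta=0.522$ when $\beta=1.15$ and with $\eta=0.508$ when $\beta=1$.

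To prove the per-edge inequality, fix an edge $(i,t)$ and condition on $Y_{-i}$, so that $y_i\sim U[0,1]$ is the only remaining randomness. The crux is a monotonicity/structural lemma: as $y_i$ increases, resource $i$ is consumed ``later and less'' along the trajectory, which yields a threshold description --- in the simplest form a single critical value $\theta=\theta(Y_{-i},i,t)\in[0,1]$ --- separating a regime in which $i$ is eventually exhausted, so that $x_i(T+1)=B_i$ and $G_i=g(y_i)B_i$, from a regime in which $i$ retains budget throughout the processing of arrival $t$, so that the greedy bid-price rule matches $t$ at price at least $b_{it}(1-g(y_i))$ at every $\tau\in[t,t+1)$ and hence $H_t\ge b_{it}(1-g(y_i))$. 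Making this precise is the hard part, and the place where structural insight beyond the $b$-matching analyses of \cite{albers,vazirani} is required: with time-varying bids and continuous reallocation across exhaustion events within each $[t,t+1)$, changing $y_i$ reshuffles the entire schedule of which arrivals $i$ serves and when it runs out, so the clean separation above has to be refined into a more careful accounting, and this is exactly why the guarantee drops below $1-1/e$. I expect this lemma to be by far the main obstacle.

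Granting such a lemma, the left-hand side of the per-edge inequality is bounded below, after minimizing over the threshold(s), by an explicit function $\eta(\beta)$ of $\beta$ alone; the remaining work is the routine one-variable evaluation of $\eta(\beta)$, which gives $\eta(1.15)\ge 0.522$ and $\eta(1)\ge 0.508$, with $\beta=1.15$ the numerically located maximizer of $\eta$. Combined with the reduction of the first two paragraphs this proves both assertions of Theorem~\ref{guafalg}. The reduction to $r_i=1$, the fractional-LP upper bound on $\opt$, the charging identity $\sum_iG_i+\sum_tH_t=\alg$, and the final scalar optimization are routine by comparison; the structural lemma is the real obstacle.
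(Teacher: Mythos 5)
Your overall plan---reduce to $r_i=1$, upper-bound $\opt$ by a fractional assignment, split each infinitesimal reward between a resource account and an arrival account, and close via dual feasibility---is the right family of ideas, but the specific certificate you chose is one that provably cannot work here. You require the \emph{per-edge} inequality $E_Y[G_i]/B_i+E_Y[H_t]/b_{it}\ge \eta$ for every edge $(i,t)$, i.e., classic randomized primal--dual feasibility in the sense of \cite{devanur}. With your split (and, more generally, with any split that depends only on the seed $y_i$ rather than on the fraction of budget spent, which is all a budget-oblivious bid price gives you) this fails well below $0.522$: take an arrival $t=1$ whose only nonzero bid is $b_{i1}=1$, so it is always matched to $i$ and $H_1=(1-g(y_i))$, and arrange later arrivals so that $i$ essentially never wins again while $B_i\to\infty$; then $E[H_1]=\int_0^1(1-g(x))\,dx=1/e$ for $\beta=1$ while $E[G_i]/B_i\to 0$, so the edge $(i,1)$ constraint caps $\eta$ near $1/e$. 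This is exactly the obstruction the paper isolates (Appendix A). The repair is to demand dual feasibility only \emph{in aggregate over each resource's optimal arrivals}, $\sum_{t\in\opt_i}\lambda_t+\theta_i\ge \alpha\,r_i\,\opt_i$, which still implies $\alpha$-competitiveness after summing over $i$, and which is saved in the bad instance because on every later arrival $t\in\opt_i$ resource $i$ is still available and losing, forcing $\lambda_t\ge b_{it}r_i(1-g(0))$. Without this aggregation your reduction cannot reach the claimed constants.

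Your structural step is also too optimistic. The clean dichotomy ``either $i$ is exhausted, so $G_i=g(y_i)B_i$, or $i$ survives, so every moment of $[t,t+1)$ is sold at price at least $b_{it}(1-g(y_i))$'' is not enough with time-varying bids: the budget of $i$ consumed is not monotone in $y_i$, and there are instances where $i$ retains budget throughout yet the amount it sells is a vanishing fraction of the optimal mass with critical threshold above $y_i$, so the resource account can be far smaller than the threshold picture suggests. What is actually needed (and what constitutes the real content of the proof) is, first, a monotonicity lemma showing the instantaneous selling price $\lambda_\tau(y_i)$ is minimized at $y_i=1$ and is always at least $b_{it}r_i(1-g(y_i^c(\tau)))$ for a per-segment critical threshold $y_i^c(\tau)$, and second, a balance-type lemma---proved by comparing the total price collected under seeds $y_i$ and $1$---giving $\falg_i(y_i)\ge\sum_{v\ge y_i}b(v)\,\frac{g(v)-g(y_i)}{1-g(y_i)}$ whenever $i$ is never exhausted. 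The resulting guarantee is $\alpha=\min_{v}\bigl[1-g(v)+\int_0^{v}\min\bigl\{g(y),\frac{g(v)-g(y)}{1-g(y)}\bigr\}dy\bigr]$, not the MSVV-style $1-g(v)+\int_0^v g$; only the final numerical evaluation of this expression (giving $0.522$ at $\beta=1.15$ and $0.508$ at $\beta=1$) is routine.
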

Given the similarities between $\falg$ and \alg\ it might be tempting to conclude similar competitive ratio results for \alg\ in the small bids regime. 
Perhaps surprisingly, it turns out that for the same seed values, \alg\ and $\falg$ can sometimes output very different matchings (see Example \ref{integfrac} in Appendix \ref{appx:challenge}). Nonetheless, we show that the objective value of the two algorithms is always similar and using this we establish the following guarantee for \alg. 
\begin{theorem}\label{main}
	Given a competitive ratio guarantee $\eta$ for $\falg$ (for some parameter value $\beta$), we have that \alg\ (with the same value of $\beta$) is $\frac{1}{1+\gamma}\,\eta$ competitive. 
\end{theorem}
\begin{corollary}\label{maincoro}
	For $\beta=1.15$, \alg\ is at least $\frac{1}{1+\gamma}0.522$ competitive. 
When $\beta=1$, \alg\ is at least $\frac{1}{1+\gamma}0.508$ competitive.
\end{corollary}
The corollary above follows by combining Theorems \ref{guafalg} and \ref{main}.
These results hold for any instance of OBA, i.e., for arbitrary value of $\gamma$. For small bids we have $\gamma\to0$ and 
the competitive ratio of \alg\ is strictly better than 0.5. 
 We also establish that the competitive ratio of \alg\ is strictly less than $(1-1/e)$. 
\begin{theorem}\label{gpgub}
	For every bid-to-budget ratio $\gamma\in(0,1]$ and $\beta\geq 1$, 
	the competitive ratio of \alg\ is at most $0.624$.
\end{theorem}
The proof of Theorem \ref{gpgub} is analytical but Monte Carlo simulations indicate that an even stronger upper bound of 0.604 may hold for $\beta=1$. For special cases of Adwords, our analysis yields a stronger guarantee that surpasses these upper bounds, as described in the next lemma\footnote{The $(1-1/e)$ guarantee for OBA with integer starting capacities and binary bids was also shown concurrently in \cite{albers} and \cite{Vazirani2022TowardsAP}. Further, the ``no-surpassing" condition of \cite{Vazirani2022TowardsAP} generalizes the notion of decomposable bids.}.
\begin{theorem}\label{decomp}
	For $\beta=1$, \alg\ is $\frac{1}{1+\gamma}(1-1/e)$ competitive for OBA when the bids are decomposable, i.e., $b_{i,t}\in\{0,b_i\times b_t\}\,\, \forall i\in I, t\in T$. For the special case of OBA with integer starting capacities and binary bids, \alg\ is $(1-1/e)$ competitive for arbitrary bid-to-budget ratio $\gamma$.
\end{theorem}

\noindent  In the following sections, we focus on 
proving Theorem \ref{guafalg} (Sections \ref{sec:overview} and \ref{sec:ana}) and Theorem \ref{main} (Section \ref{sec:fracint}). In the course of proving these results, we discuss the insights behind the improved guarantees stated in Theorem \ref{decomp} and defer a formal 
proof of Theorem \ref{decomp} to Appendix \ref{appx:decompose}. We discuss the main insights behind Theorem \ref{gpgub} in Section \ref{sec:randlb} and defer the formal proof of this result to Appendix \ref{appx:gpgub}.

\subsection{Overview of the Analysis of $\falg$}\label{sec:overview}
We start with some notation. Let the number of resources $|I|=n$. Let \opt\ refer to the (integer) optimal offline algorithm. Overloading notation, we also use \opt\ to denote the total reward of the optimal offline algorithm. Since there are no unknowns in the offline problem, i.e., budgets and bids are all known, the optimal offline solution is a deterministic  matching. Let $\opt_i$ denote the set of arrivals matched to $i$ as well as the total fraction of $i$'s budget that is matched in \opt. 
Note that $\opt=\sum_{i\in I}  \opt_i$.  

Because the discussion prior to Section \ref{sec:mainchal} applies to both \alg\ and $\falg$,
we use \aalg\ as a unified reference in place of separate references to \alg\ and to $\falg$. Overloading notation, we also use \aalg\ to refer to the expected reward of the online algorithms. Let $Y$ denote the vector of seeds $(y_i)_{i\in I}\in[0,1]^n$. For fixed seeds $Y,$ \aalg\ is deterministic, i.e., both \alg\ and $\falg$ are deterministic.   For this reason, we also refer to the vector $Y$ as a sample path of \aalg. 
Let $\aalg(Y)$ denote the (fractional) matching generated by \aalg\ on sample path $Y$. Let $\aalg_i(Y)$ denote the set of arrivals (fractionally) matched to $i$ in the matching $\aalg(Y)$.  Overloading notation, we also use $\aalg(Y)$ to denote the total reward of \aalg\ with seed $Y$ and $\aalg_i(Y)$ to denote the total budget of $i$ used in $\aalg(Y)$. Let $\aalg_t(Y)$ denote the set of resources (fractionally) matched to $t$ in $\aalg(Y)$ and let $\delta_{i,t}(Y)$ denote the fraction of $t$ matched to $i$ in $\aalg(Y)$. Observe that in \alg, the set $\alg_t(Y)$ will never have more than one element. 

 The (randomized) primal-dual method of \cite{devanur} is a standard technique for analyzing online algorithms for bipartite matching. Unfortunately, it is not obvious to us if one can use this method to obtain a non-trivial lower bound on the competitive ratio of \aalg. We illustrate this in more detail 
in Appendix \ref{appx:pmd}. To prove Theorem \ref{guafalg}, we use the flexible LP free analysis framework of \cite{full}. To prove a lower bound on the competitive ratio of an online algorithm ALG in this framework, it suffices to find a feasible solution to the following system of (linear) inequalities 
in variables $\lambda_t$ and $\theta_i$, 
\begin{eqnarray}
	\sum_{t\in T} \lambda_t + \sum_{i\in I} \theta_i &\leq & (1+\varepsilon)\, \text{ALG} \label{dual1}\\
	\theta_i+\sum_{t\in \opt_i} \lambda_t  &\geq &\alpha\, \opt_i \qquad \forall i\in I, \label{dual2}\\
	\lambda_t\geq 0, &&\theta_i\geq 0\qquad \forall t\in T,\, i\in I. \label{dual3}
\end{eqnarray}
\begin{lemma}\label{lpframe}
	Given a solution to the system defined by \eqref{dual1}$-$\eqref{dual3}, we have that \emph{ALG} is $\frac{\alpha}{1+\varepsilon}$ competitive against \opt.
\end{lemma}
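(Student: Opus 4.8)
The plan is to establish a weak-duality-style argument: the system \eqref{dual1}--\eqref{dual3} is constructed so that any feasible $(\lambda_t,\theta_i)$ certifies $\falg \geq \alpha\,\opt$. First I would sum the inequality \eqref{dual2} over all $i\in I$, obtaining
\[
\sum_{i\in I}\Bigl(\sum_{t\in\opt_i}\lambda_t + \theta_i\Bigr)\;\geq\;\alpha\sum_{i\in I} r_i\,\opt_i \;=\;\alpha\,\opt,
\]
where the last equality is the definition $\opt=\sum_{i\in I} r_i\,\opt_i$ recorded in the notation section. Then I would argue that the left-hand side is at most $\sum_{t\in T}\lambda_t + \sum_{i\in I}\theta_i$: the $\theta_i$ terms match exactly, and for the $\lambda$ terms the key observation is that in the offline optimal matching each arrival $t$ is matched to at most one resource, so the sets $\{\opt_i\}_{i\in I}$ are disjoint subsets of $T$. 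Hence each $\lambda_t$ appears in $\sum_{i\in I}\sum_{t\in\opt_i}\lambda_t$ at most once, and since $\lambda_t\geq 0$ by \eqref{dual3}, we get $\sum_{i\in I}\sum_{t\in\opt_i}\lambda_t \leq \sum_{t\in T}\lambda_t$.

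Chaining these two facts gives
\[
\alpha\,\opt \;\leq\; \sum_{i\in I}\Bigl(\sum_{t\in\opt_i}\lambda_t+\theta_i\Bigr)\;\leq\;\sum_{t\in T}\lambda_t+\sum_{i\in I}\theta_i\;\leq\;\falg,
\]
the final step being exactly \eqref{dual1}. Dividing through yields $\falg\geq\alpha\,\opt$, i.e.\ $\falg$ is $\alpha$ competitive against \opt, which is the claim.

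There is essentially no hard step here — the lemma is a packaging of LP weak duality and the proof is a three-line inequality chain. The only point that needs a word of care is the disjointness of the $\opt_i$'s: one should note that this uses the structural fact that every arrival is interested in (and matched to) at most one resource in OBA, together with nonnegativity of $\lambda_t$ so that dropping the arrivals not matched by \opt\ only decreases the sum. If instead one wanted to allow for the boundary subtlety that $\opt_i$ was defined as a set of arrivals (integers) while elsewhere $\opt_i$ also denotes a budget fraction, I would be explicit that here $\sum_{t\in\opt_i}$ ranges over the arrival set, consistent with \eqref{dual2}. No further machinery is required.
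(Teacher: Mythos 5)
Your proof is correct and is essentially the same argument as the paper's: sum \eqref{dual2} over $i\in I$, use disjointness of the sets $\opt_i$ together with $\lambda_t\geq 0$ to bound $\sum_{i}\sum_{t\in\opt_i}\lambda_t$ by $\sum_{t\in T}\lambda_t$, and conclude with \eqref{dual1}. The paper's version is just a one-line inequality chain; your extra remarks on disjointness and the dual use of the notation $\opt_i$ are accurate but not needed beyond what the paper already records.
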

\begin{proof}{Proof.}
	Summing up inequalities \eqref{dual2} over all $i\in I$, we have
	\begin{eqnarray*}
		\alpha\, \opt \,=\,	\alpha\, \sum_{i\in I} \opt_i \leq \sum_{i\in I} \sum_{t\in \opt_i} \lambda_t + \sum_{i\in I}\theta_i\, \overset{(*)}{\leq}\,
		\sum_{t\in T}\lambda_t + \sum_{i\in I}\theta_i\, \leq\, (1+\varepsilon)\, \text{ALG},
	\end{eqnarray*}
	here inequality $(*)$ follows from the fact that the optimal offline solution is integral, i.e., \opt\ matches each arrival to at most one resource and $\opt_i\cap \opt_j=\emptyset$ for any two (distinct) resources $i,j\in I$.
	\hfill\Halmos\end{proof}

Due to Lemma \ref{lpframe}, our main goal is to find a feasible solution to the system \eqref{dual1}$-$\eqref{dual3} with a suitably large value of $\frac{\alpha}{1+\varepsilon}$. Recall that $\aalg_t(Y)$ is the set of resources matched to $t$ in $\aalg(Y)$. To define the candidate solution, let
\begin{eqnarray}
	&& \lambda_t(Y)= 
\sum_{j\in \aalg_t(Y)} b_{j,t}\,\delta_{j,t}(Y)(1-g(y_j))\qquad  \forall t\in T,\, Y\in [0,1]^n,\label{lambda1}\\
	&& \theta_i(Y) =\sum_{t\in \aalg_i(Y)} b_{i,t}\,\delta_{i,t}(Y)\, g(y_i)\,=\,\aalg_i(Y)\, g(y_i)\qquad \forall i\in I,\, Y\in [0,1]^n.\label{theta1}
\end{eqnarray}
Our candidate solution is,
	\begin{eqnarray}
		&&\lambda_t = E_Y\left[\lambda_\tau(Y)\right]\quad \forall t\in T  \text{ and  }\quad
		\theta_i=E_Y[\theta_i(Y)]\quad  \forall i\in I,
		\label{dualdef}
	\end{eqnarray}
here $E_Y[\cdot]$ denotes expectation over the random seeds in \aalg.	The main idea behind this candidate solution is as follows. Matching a $\delta_{i,t}(Y)$ fraction of $t$ to $i$, generates reward $b_{i,t}\delta_{i,t}(Y)$. We split this reward is into two parts; $(1-g(y_i))$ fraction of the reward is added to $\lambda_t(Y)$ and the remaining $g(y_i)$ fraction of the reward is added to $\theta_i(Y)$. Then, we take expectation over the seeds $Y$. This reward splitting is a natural generalization of the one used \cite{devanur} for vertex weighted online bipartite matching. 
	
	The main challenge is to prove that this candidate solution satisfies inequalities \eqref{dual2} for a large enough value of $\alpha$. At a high level, we follow the same strategy as \cite{devanur}. Since it is challenging to analyze the expectation of multi-variate random variables such as $\lambda_t(Y)$ and $\theta_i(Y)$, 
	we fix the seeds for all but one resource, say $i\in I$, and evaluate conditional expectations with respect to (w.r.t.) randomness in $y_i$. Let $Y_{-i}\in[0,1]^{n-1}$ denote the vector of seeds of all resources except $i$. We fix $Y_{-i}$ and establish non-trivial lower bounds on the conditional expectations $E_{y_i}[\lambda_t(y_i, Y_{-i})\mid Y_{-i}]$ and $E_{y_i}[\theta_i(y_i, Y_{-i})\mid Y_{-i}]$. Then, combining the lower bounds gives us \eqref{dual2}.
	
	To lower bound the conditional expectations, we examine the changes in the matching generated by \aalg\ for different values of $y_i$. In particular, we compare the matching for any given value $y_i<1$, with the matching for  
	$y_i=1$. The latter scenario serves as a base scenario where the bid price of $i$ is 0 everywhere (since $1-g(1)=0$). 
	To lower bound  $E_{y_i}[\lambda_t(y_i,Y_{-i})\mid Y_{-i}]$, we show that $\lambda_t(y_i,Y_{-i})$ takes its minimum value in the base scenario, i.e., when $y_i=1$. This is similar to the ``monotonicity" property shown in \cite{devanur}. 
	To obtain a sufficiently strong lower bound on $E_{y_i}[\theta_i(y_i,Y_{-i})\mid Y_{-i}]$, ideally, we hope that as $y_i$ decreases from 1 to 0, the amount of $i$'s budget used ($\aalg_i(y_i,Y_{-i})$) increases monotonically at a ``fast enough" rate. This corresponds to the ``dominance" property in \cite{devanur}. All known analysis of (special cases) of the Perturbed Greedy algorithm reply on establishing these two properties \citep{goel, devanur, stochrew, econ, vazirani, albers, delong}. 
	
	\subsubsection{Main Challenge.}\label{sec:mainchal} Unfortunately, as we illustrate in Example \ref{ex} below, the ``dominance" property does not hold for \alg. In fact, $\alg_i(y_i,Y_{-i})$, which is the amount of $i$'s budget used, may decrease sharply as $y_i$ decreases from 1 to 0. This issue is somewhat mitigated in $\falg$, where $\falg_i(y_i,Y_{-i})$ increases as $y_i$ decreases but the rate at which it increases may be quite small in comparison to what the ``dominance" property requires. 
	We overcome this challenge by finding new structural insights into the problem that we use to show a non-trivial lower bound on $\falg_i(y_i,Y_{-i})$. 
	To obtain this lower bound, 
	our key new insight is to link 
	$\falg_i(y_i,Y_{-i})$ with the total increase in bid prices from the base scenario i.e., 
	\[\sum_{t\in T} (\lambda_t(y_i,Y_{-i})-\lambda_t(1,Y_{-i})).\] 
	In particular, we show that any overall gain in bid prices originates from the budget utilization of $i$ (Lemma \ref{alglbl}).  {\color{black} We also give an instance (Example \ref{tight}) where only the budget utilization of $i$ changes with $y_i$ and this new lower bound on $\falg_i(y_i,Y_{-i})$ is tight.} Importantly, this result is only true for $\falg$ and it is not true for \alg. The absence of this structural result for \alg\ is the main reason that we first analyze $\falg$ in order to show a guarantee for \alg. 
For more details, see Appendix \ref{appx:whyfrac} after reading Section \ref{sec:ana}. 
	
	In addition to the lower bound on $\falg_i(y_i,Y_{-i})$, we show that if $i$'s budget is unused for some $y_i<1$, then the bid prices have to increase by a certain amount (Lemma \ref{sum}). By combining these two properties, 
	we establish \eqref{dual2} for $\falg$ with a sufficiently large value of $\alpha$. 

	Now, we demonstrate that $\alg_i(Y)$ may decrease as $y_i$ decreases with a simple example where individual bids are comparable to the overall budget. In Appendix \ref{appx:challenge}, we give a similar example in the small bids regime (see Example \ref{integfrac}) and also an example to show that the ``dominance" property does not hold for $\falg$ (see Example \ref{fracbid} after reading Section \ref{sec:ana}).} 
	
	\smallskip
	
	\begin{eg}\label{ex}\emph{
			Consider an instance with two resources and starting budgets $B_1=2$ and $B_2=1$. We have two arrivals, $t$ and $t+1$. Arrival $t$ has bids $b_{1,t}=b_{2,t}=1$. Arrival $t+1$ has bids $b_{1,t+1}=2$ and $b_{2,t+1}=4$. Consider an execution of \alg\ with $\beta=1$ and the random seed $y_2$, for resource $2$, fixed at $0.5$. 
			Observe that, 
			\begin{enumerate}[(i)]
				\item $b_{1,t+1}(1-g(0))) < b_{2,t+1} (1-g(0.5))$. 
				\item $b_{1,t}(1-g(0.5)) = b_{2,t} (1-g(0.5))$.
			\end{enumerate}
			Consider the matching generated by $\alg$ as we decrease $y_1$ from $1$ to 0. For $y_1> 0.5$, \alg\ matches $t$ to resource $2$ and this uses up all of $2$'s budget. Consequently, $t+1$ is matched to resource $1$. For $y_1<0.5$, $t$ is matched to $1$ and $t+1$ to $2$. Thus, the amount of resource $1$ matched in $\alg$ decreases from 2 to 1 as $y_1$ decreases. This is somewhat surprising as the bid price of resource 1 increases when we decrease $y_1$. At a high level, there are two main reasons behind this phenomenon.
		}	
	\end{eg}
	%
	The first (and main) reason is that bid prices vary with both time and seed values $Y$. In the example above, for $y_1<0.5$, the bid price of resource $1$ exceeds that of resource $2$ at arrival $t$ but the time variation in bids ensures that the bid price of resource 2 dominates that of resource 1 at arrival $t+1$. 
	A second reason is that due to unknown budgets, \alg\  may overestimate the available budget of a resource. In the example above, recall that only 1 unit of resource $2$ is available at $t+1$. For $y_1<0.5$, we have $b_{1t+1}r_1(1-g(0.5)))> r_2 (1-g(y_2)),$ i.e., if we account for the remaining budget of each resource, then resource $1$ has a higher bid price at arrival $t+1$ than resource 2. However, \alg, being ignorant of the budget, 
	computes a higher bid price for resource $2$ at $t+1$. Observe that $\falg$ is not budget oblivious and never overestimates the available budget of a resource. However, even in $\falg$, for $y_1<0.5$, the time variation in bids ensures that when $t+1$ arrives, the bid price of resource 2 is higher than that of resource 1 and a fraction of arrival $t+1$ is matched to resource 2. 

\subsection{Lower Bound on Competitive Ratio of $\falg$ (Theorem \ref{guafalg})}\label{sec:ana}
{\color{black} Recall that, to lower bound the competitive ratio of $\falg$, it suffices find a feasible solution to the system \eqref{dual1}$-$\eqref{dual3} with a suitably large value of $\frac{\alpha}{1+\varepsilon}$. Also recall the candidate solution \eqref{dualdef} given by $\lambda_t = E_Y\left[\lambda_\tau(Y)\right]$ and $\theta_i=E_Y[\theta_i(Y)]$, where  
\begin{eqnarray*}
	&& \lambda_t(Y)= 
	\sum_{j\in \falg_t(Y)} b_{j,t}\,\delta_{j,t}(Y)(1-g(y_j))\qquad  \forall t\in T,\, Y\in [0,1]^n,\\
	&& \theta_i(Y) =\sum_{t\in \falg_i(Y)} b_{i,t}\,\delta_{i,t}(Y)\, g(y_i)\,=\,\falg_i(Y)\, g(y_i)\qquad \forall i\in I,\, Y\in [0,1]^n.
\end{eqnarray*}
Observe that on every sample path $Y$, $\lambda_t(Y)>0\,\, \forall t\in T$ and $\theta_i(Y)>0\,\, \forall i\in I$. Thus, constraints \eqref{dual3} are  satisfied. 
\begin{lemma}\label{dualreward}
	For the candidate solution given by \eqref{dualdef}, we have,
	\[\sum_{i\in I} \theta_i+\sum_{t\in T} \lambda_t = \falg, \]
	and	constraint \eqref{dual1} 
	is satisfied with $\varepsilon=0$. 
\end{lemma}
\begin{proof}{Proof.}
	Fix an arbitrary seed $Y$ and consider an arrival $t\in T$. The fraction of $t$ matched to resource $i$ in $\falg(Y)$ is given by $\delta_{i,t}(Y)$. This fractional match contributes a reward of $b_{i,t}\, \delta_{i,t}(Y)$ to $\falg(Y)$. By definition of $\theta_i(Y)$ and $\lambda_t(Y)$, the reward ($b_{i,t}\, \delta_{i,t}(Y)$) is split into two parts such that a fraction $g(y_i)$ of the reward goes to $\theta_i(Y)$ and the remaining $1-g(y_i)$ fraction goes to $\lambda_t(Y)$. From this observation, it follows that,
	\[\sum_{t\in T} \lambda_t(Y)+\sum_{i\in I} \theta_i(Y)=\sum_{i\in I, t\in T} b_{i,t}\, \delta_{i,t} (Y)=\falg(Y).\]
	Taking expectation over $Y$ on both sides completes the proof.
	\hfill\Halmos\end{proof}

It remains to show that 
inequalities \eqref{dual2} are satisfied. 
In fact, we show a stronger statement as described in the following lemma. 
\begin{lemma}
	Consider a resource $i\in I$ and let seed $Y_{-i}=(y_j)_{j\in I\backslash\{i\}}$ denote the random seed in $\falg$ for all resources except $i\in I$. Suppose that for the candidate solution \eqref{dualdef},  we have,
	\begin{equation}\label{dual2c}
		E_{y_i}\left[ \theta_i(Y)+\sum_{t\in \opt_i}\lambda_t(Y) \,\big|\, Y_{-i}\right]\geq \alpha\, \opt_i\quad \forall\, Y_{-i}\in[0,1]^{n-1},
	\end{equation}
	for some value $\alpha>0$. Then, inequality \eqref{dual2} is satisfied for resource $i$ with the same $\alpha$. 
\end{lemma}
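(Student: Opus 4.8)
The statement to prove is the reduction lemma: if the conditional inequality \eqref{dual2c} holds for every $Y_{-i}$, then the unconditional dual constraint \eqref{dual2} holds for resource $i$. This is essentially a tower/law-of-total-expectation argument, so the plan is straightforward.

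\paragraph{\textbf{Proof plan.}}
The plan is to unwind the definitions in \eqref{dualdef} and then apply the law of total expectation over the seed $y_i$, conditioning on $Y_{-i}$. First I would write out the left-hand side of \eqref{dual2}, namely $\sum_{t\in\opt_i}\lambda_t + \theta_i$, and substitute the candidate solution: $\lambda_t = E_Y\bigl[\int_{\tau=t}^{t+1}\lambda_\tau(Y)\,d\tau\bigr]$ and $\theta_i = E_Y[\theta_i(Y)]$. Since $\opt_i$ is a fixed (deterministic) set of arrivals — \opt\ has no randomness — the sum over $t\in\opt_i$ can be pulled inside the expectation, giving
\[
\sum_{t\in\opt_i}\lambda_t + \theta_i \;=\; E_Y\!\left[\sum_{t\in\opt_i}\int_{\tau=t}^{t+1}\lambda_\tau(Y)\,d\tau \;+\; \theta_i(Y)\right].
\]
Next I would decompose the full expectation over $Y=(y_i,Y_{-i})$ as an iterated expectation $E_{Y_{-i}}\bigl[E_{y_i}[\,\cdot\mid Y_{-i}]\bigr]$, using independence of the seeds. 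The inner conditional expectation is exactly the quantity appearing on the left of hypothesis \eqref{dual2c}, which by assumption is at least $\alpha\, r_i\,\opt_i$ for every realization of $Y_{-i}$. Therefore the outer expectation over $Y_{-i}$ of a quantity that is pointwise $\geq \alpha\, r_i\,\opt_i$ is itself $\geq \alpha\, r_i\,\opt_i$ (the bound is a constant, independent of $Y_{-i}$), which is precisely inequality \eqref{dual2}.

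\paragraph{\textbf{Main obstacle.}}
There is no real analytic difficulty here; the only thing to be careful about is the measure-theoretic bookkeeping — confirming that the expectations are well defined and finite (the integrands $\lambda_\tau(Y)$ and $\theta_i(Y)$ are bounded since bids and budgets are finite and $g(y_i)\in[1,e^{\beta}]$ is bounded), that the seeds $(y_j)_{j\in I}$ are mutually independent so the iterated-expectation split is valid, and that $\opt_i$ carries no randomness so it commutes with $E_Y$. Once these points are noted, the argument is a one-line application of the tower property together with monotonicity of expectation. I expect the entire proof to be two or three sentences; the ``hard'' content of establishing \eqref{dual2} for each $i$ has been deliberately offloaded into the hypothesis \eqref{dual2c}, whose verification is the substantive work carried out in the remainder of Section \ref{sec:ana}.
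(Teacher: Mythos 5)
Your proposal is correct and is exactly the paper's argument: the paper's one-line proof is ``take expectation over $Y_{-i}$ on both sides of \eqref{dual2c},'' which is precisely your tower-property/iterated-expectation decomposition spelled out in more detail. (Minor aside: $g(y_i)=e^{\beta(y_i-1)}$ lies in $[e^{-\beta},1]$ rather than $[1,e^{\beta}]$, but this does not affect the boundedness point or the argument.)
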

\begin{proof}{Proof.} The lemma follows by taking expectation over $Y_{-i}$ on both sides of \eqref{dual2c}. 
	
	\hfill\Halmos\end{proof}

\subsubsection{Notation and Definitions.}\label{sec:addnote} We fix an arbitrary resource $i \in I$, seeds $Y_{-i}$, and establish \eqref{dual2c} for a sufficiently large value $\alpha$. For brevity, we suppress dependence on $Y_{-i}$ from notation and highlight only the dependence on seed $y_i\in[0,1]$. So $\lambda_t(y_i,Y_{-i})$ and $\theta_j(y_i,Y_{-i})$ are denoted as $\lambda_t(y_i)$ and $\theta_j(y_i)$ respectively. Further, $\falg(y_i)$ is the matching generated by $\falg$ when it is executed with seed $Y=(y_i,Y_{-i})$. Similarly, $\falg_i(y_i)$ denotes the total amount of $i$'s budget allocated in $\falg(y_i)$. 
Let $I(t,y_i)$ denote the set of resources available in $\falg(y_i)$ when $t$ arrives. Let $\falg_t(y_i)$ denote the set of resource that are fractionally matched to $t$ in $\falg(y_i)$ and let $\delta_{i,t}(y_i)$ denote the fraction of $t$ matched to $i$.  

Overloading notation, we use $T$ to also denote the total number of arrivals. For the analysis, it will be useful to view $\falg$ as a continuous process during the interval $[1,T+1)$ where arrival $t$ is matched continuously over the interval $[t,t+1)$. On arrival of $t$, we start with the set $I(t,y_i)$ of resources and let $I(\tau,y_i)$ denote the set of resources available at any moment $\tau\in[t,t+1)$. During the interval $[\tau,\tau+d\tau)$ we match an infinitesimal fraction $d\tau$ of arrival $t$ to resource,
\[j_{\tau}=\underset{j\in I(\tau,y_i)}{\argmax}\,\, b_{j,t}\, (1-g(y_j)),\]
consuming an infinitesimally small amount $b_{j_{\tau}, t}\, d\tau$ of the budget of resource $j_{\tau}$. When the budget of resource $j_{\tau}$ is used up, we remove it from the set of available resources and the process continues till time $t+1$. Observe that this is an equivalent way to describe the fractional matching in $\falg$. In the continuous viewpoint, we refine the definition of $\opt_i$ to be the set of all moments $\tau\in[t,t+1)$ such that arrival $t$ is matched to $i$ in \opt. We similarly refine the definition of $\lambda_t(\cdot)$ and let
\[\lambda_\tau(y_i)= \underset{j\in I(\tau,y_i)}{\max}\,\, b_{j,t}\, (1-g(y_j))\quad \forall \tau\in[t,t+1),\, t\in T.\]
Observe that $\lambda_t(y_i)=\int_{\tau=t}^{t+1}\lambda_\tau(y_i)\, d\tau$.
To establish \eqref{dual2c}, we examine the changes in the matching generated by $\falg$ for different values of $y_i$. In particular, we compare the matching for any given value $y_i<1$, with the matching for  
$y_i=1$. The latter scenario serves as a base scenario where the bid price of $i$ is 0 everywhere. 
To enable this comparison we define some new objects. For every $i\in I, t\in T$, and $\tau\in [t,t+1)$, we define the \emph{critical threshold} $y^c_{i,\tau}\in[0,1]$ as the value for which,
\[\textbf{Critical threshold $\mb{y^c_{i,\tau}}$:}\qquad  b_{i,t}\,(1-g(y^c_{i,\tau}))=\max_{j\in I(\tau,1)} b_{j,t}\,(1-g(y_j)).\]
Set $y^c_{i,\tau}=0$ if no such value exists and $y^c_{i,\tau}=1$ if the set $I(\tau,1)$ is empty. 
Due to the monotonicity of $g(x)=e^{\beta(x-1)}$, we have a unique value of $y^c_{i,\tau}$. Consider the set of distinct critical thresholds, 
\[V:=\left\{v\mid   y^c_{i,\tau}=v \text{ for some } \tau\in \opt_i\right\}.\] 
Let
\[b(v):=\, \sum_{t \in \opt_i}b_{i,t}\int_{\tau=t}^{t+1}\onee(y^c_{i,\tau}=v)\,d\tau\,=\, \sum_{t \in \opt_i}b_{i,t}\delta_{i,t} (y_i) \qquad \forall v\in V,\]
i.e., $b(v)$ is the cumulative bid on $i$ from all moments that are in $\opt_i$ and that have critical threshold $v$. We also define
\[B(y_i)=\sum_{v\geq y_i;\, v\in V} b(v)\quad \forall y_i\in[0,1],\]
\[S(y_i)=\left\{\tau \mid y_i\leq y^c_{i,\tau} \text{ and } \tau\in \opt_i\right\}.\]
 $B(y_i)$ is the cumulative bid on $i$ from all moments in $\opt_i$ that have critical threshold at least $y_i$ and $S(y_i)$ is the set of all moments in $\opt_i$ with critical threshold at least $y_i.$  Recall that we use $\opt_i$ to also denote the total fraction of $i$'s budget that is matched in \opt. Observe that,
$B(0)=\sum_{v\in V} b(v) \opt_i$ and 
$S(0)= \opt_i.$

\subsubsection{Establishing \eqref{dual2} for the Candidate Solution.}
At a high level, our approach is to separately lower bound $\lambda_{\tau}(y_i)$ and $\theta_i(y_i)$ and then combine the two lower bounds together. We start with $\lambda_{\tau}(y_i)$ and show that the maximum bid price at $\tau$ takes its minimum value in the base scenario where $y_i=1$. 
\begin{lemma}\label{lambda}
	Given $i\in I$ and seed $Y_{-i}$, for every  $y_i\in[0,1], t\in T,$ and $\tau\in [t,t+1)$, we have
	\[\lambda_\tau(y_i)\, \geq\, \lambda_{\tau}(1)\,\geq\,  b_{i,t}(1-g(y^c_{i,\tau})).\]
\end{lemma}
The proof is included in Appendix \ref{appx:lambda}. Next, to lower bound $\theta_i(y_i)$, we establish a lower bound on $\falg_i(y_i)$. 
Recall that, $\falg_i(y_i)$ is the set of arrivals matched to $i$ in $\falg(y_i)$, as well as, the total budget of $i$ used in $\falg_i(y_i)$.	When bids $b_{j,t}\in\{0,b_j\times b_t\}$ for every $j\in I,\, t\in T$, we show in Appendix \ref{appx:decompose} that $\falg_i(y_i)\geq B(y_i)$, a lower bound that matches the ``dominance" property. 
In general, $\falg_i(y_i)$ may be strictly smaller than $B(y_i)$ with non-zero probability (see Example \ref{fracbid}). In fact, there are instances where $\frac{\falg(y_i)}{B(y_i)}\to 0$ for some $y_i\in(0,1)$ (see Example \ref{tight}). We give a novel lower bound on $\falg_i(Y)$ by linking 
it with the increase in bid prices over all arrivals. In particular, we show that any overall gain in bid prices originates from the budget utilization of $i$. 

\begin{lemma}\label{alglbl}
	Consider a resource $i\in I$ and seed vector $(y_i,Y_{-i})\in[0,1]^n$ such that $\falg_i(y_i)<B(y_i)$. Then, we have $\falg_i(y_i)\geq \sum_{v\in V,\, v\geq y_i}b(v)\, \frac{g(v)-g(y_i)}{1-g(y_i)}$.
\end{lemma}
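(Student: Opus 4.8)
The plan is to fix $i$ and $Y_{-i}$ as in the statement, abbreviate $f(z):=\falg_i(z)$ (the budget of $i$ consumed in $\falg(z)$) and $B(z):=\sum_{v\in V,\,v\ge z}b(v)=|S(z)|$, and prove the claim by sweeping $i$'s seed downward from $1$ to $y_i$ while tracking $f$. A first reduction: since $g'=\beta g$ we have $g(v)-g(y_i)=\beta\int_{y_i}^{v}g(s)\,ds$, so by Fubini $\sum_{v\ge y_i}b(v)(g(v)-g(y_i))=\beta\int_{y_i}^{1}g(s)B(s)\,ds$, and it suffices to prove
\[f(y_i)\,\bigl(1-g(y_i)\bigr)\ \ge\ \beta\int_{y_i}^{1}g(s)\,B(s)\,ds.\]

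For the setup I would establish two monotonicity facts, both obtained by a coupling comparing the executions $\falg(z)$ and $\falg(z')$ for $z'\le z$ (generalizing the containment argument in the proof of Lemma~\ref{lambda}; this is precisely where the fractional dynamics help, since in $\falg$ the per-resource usage profiles are absolutely continuous in the seed and such couplings go through, unlike for $\alg$, cf.\ Example~\ref{ex}): first, that decreasing $i$'s seed only increases $i$'s usage and only decreases every other resource's usage, so in particular $f$ is nonincreasing on $[y_i,1]$ and $I(\tau,z)\setminus\{i\}\subseteq I(\tau,z')$ at every $\tau$; and second — using the hypothesis $f(y_i)<B(y_i)\le\opt_i\le B_i$ together with monotonicity of $f$ — that $i$ never exhausts its budget in $\falg(z)$ for any $z\in[y_i,1]$. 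I would also record the consequence that in $\falg(z)$ resource $i$ is matched only at moments $\tau$ with $y^c_i(\tau)\ge z$, since when $i$ wins at $\tau$ it out-bids every resource of $I(\tau,1)\setminus\{i\}$ and hence the critical-threshold value.

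The crux is the marginal inequality
\[-\frac{d}{dz}f(z)\ \ge\ \frac{g'(z)}{1-g(z)}\,\bigl(B(z)-f(z)\bigr)\qquad\text{for a.e.\ }z\in[y_i,1],\]
which (as $f$ is nonincreasing) has content only when $B(z)>f(z)$. The intuition: lowering $i$'s seed from $z$ to $z-dz$ multiplies $i$'s bid price at every moment by $1+g'(z)\,dz/(1-g(z))$, and on the part of $S(z)$ that is currently served by another resource, this boost — propagated through the chain of reallocations among the other resources it sets off, the chain being controlled by the monotone-availability fact and the balance rule of $\falg$ (match an available resource of maximum bid price) — lets $i$ recapture at least a $g'(z)\,dz/(1-g(z))$ fraction of the outstanding deficit $B(z)-f(z)$. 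Once this is in hand the rest is a one-line integration: the marginal inequality says exactly that $z\mapsto f(z)(1-g(z))-\beta\int_z^1 g(s)B(s)\,ds$ is nonincreasing on $[y_i,1]$; it vanishes at $z=1$ (where $1-g(1)=0$ and the integral is empty), hence is $\ge 0$ at $z=y_i$, which is the displayed inequality, and dividing by $1-g(y_i)>0$ recovers the stated form.

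The main obstacle is the marginal inequality. The difficulty is that an infinitesimal change in $i$'s seed can trigger a cascade of reassignments among the other resources, so the gain to $i$ is guaranteed only in aggregate rather than at each individual moment of $S(z)$ (indeed, without the fractional relaxation the analogous step already fails); making it rigorous amounts to exhibiting, at the infinitesimal level, a flow/charging that routes a $g'(z)dz/(1-g(z))$ fraction of the deficit to $i$, using non-exhaustion of $i$, the monotone-availability structure, and the greedy-by-bid-price rule. Everything else — the Fubini rewriting, the coupling-based monotonicity and non-exhaustion facts, and the final integration — is routine.
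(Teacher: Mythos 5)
Your reduction of the target to $\falg_i(y_i)\,(1-g(y_i))\ \ge\ \sum_{v\in V,\,v\ge y_i}b(v)\,(g(v)-g(y_i))$ is correct and matches the inequality the paper actually establishes, and your observation that $i$ never exhausts its budget under the hypothesis is right (for the seed $y_i$ itself this is immediate from $\falg_i(y_i)<B(y_i)\le B_i$, which is all that is needed). But the proposal has a genuine gap at exactly the point you flag yourself: the ``marginal inequality'' $-\tfrac{d}{dz}\falg_i(z)\ \ge\ \tfrac{g'(z)}{1-g(z)}\bigl(B(z)-\falg_i(z)\bigr)$ is the entire content of the lemma in your formulation, and you offer only the heuristic that the bid-price boost ``lets $i$ recapture at least a $g'(z)\,dz/(1-g(z))$ fraction of the outstanding deficit'' through a cascade of reallocations. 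No charging scheme is exhibited, and it is not at all clear one exists pointwise in $z$: the deficit $B(z)-\falg_i(z)$ is measured against critical thresholds defined via the availability sets $I(\tau,1)$ of the run at seed $1$, whereas an infinitesimal comparison of $\falg(z)$ with $\falg(z-dz)$ sees the availability sets $I(\tau,z)$, which drift with $z$. The auxiliary claims your sweep relies on are also unproven and nontrivial: that $\falg_i(z)$ is nonincreasing and absolutely continuous in $z$ on $[y_i,1]$ does not follow from the containment argument of Lemma \ref{lambda} (which only controls the \emph{other} resources' availability), and without absolute continuity the final integration step is not licensed.

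The paper avoids the differential formulation entirely and proves the integrated inequality by a global accounting argument. Setting $\lambda_{net}(y)=\int_1^{T+1}\lambda_\tau(y)\,d\tau$, it sandwiches the single quantity $\lambda_{net}(y_i)-\lambda_{net}(1)$: from below by $r_i\sum_{v\ge y_i}b(v)(g(v)-g(y_i))$, using that $i$ is available at every $\tau\in S(y_i)$ so $\lambda_\tau(y_i)\ge b_{i\tau}r_i(1-g(y_i))$ while $\lambda_\tau(1)=b_{i\tau}r_i(1-g(y_i^c(\tau)))$ by definition of the critical threshold; and from above by $r_i\,\falg_i(y_i)(1-g(y_i))$, using the identity $\lambda_{net}(y)=\sum_j\falg_j(y)\,r_j(1-g(y_j))$ together with $g(1)=1$ (so $i$ contributes nothing at seed $1$) and the comparison $\falg_j(y_i)\le\falg_j(1)$ for every $j\ne i$, which is proved by a one-step argument from $\lambda_\tau(y_i)\ge\lambda_\tau(1)$ rather than by any monotone coupling across all intermediate seeds. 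If you want to salvage your route, you would need to actually construct the infinitesimal charging; as written, the crux is asserted rather than proved, so the proposal does not constitute a proof.
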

\begin{proof}{Proof.}
	Given resource $i$ and seed $Y_{-i}$, for every $y_i\in [0,1]$, define
	\[\lambda_{net}(y_i)= \int_{\tau=1}^{T+1}\lambda_\tau(y_i)\, d\tau\,=\,\sum_{t\in T} \lambda_{t}(y_i). \]
	Using Lemma \ref{lambda} we have, 
	$\lambda_{net}(y_i)\geq \lambda_{net}(1)\,\,\forall y_i\in[0,1]$.
	Now, fix a seed $y_i$ such that $\falg_i(y_i)<B(y_i)$. Observe that the main claim follows from the following upper and lower bounds on $\lambda_{net}(y_i)-\lambda_{net}(1)$.
	\[\falg_i(y_i)\,(1-g(y_i))\, \geq\,\lambda_{net}(y_i)-\lambda_{net}(1)\,\geq\, \sum_{v\in V,\,v\geq y_i} b(v) (g(v)-g(y_i)).\]
	\medskip
	
	\noindent \textbf{Proof of lower bound:}  Since $\falg_i(y_i)<B(y_i)\leq B_i$, we have that $i$ is available at every moment in $\falg(y_i)$ 
	and at every $\tau\in S(y_i)\cap[t,t+1)$, 
	$\falg$ selects a resource $j\in I$ such that,
	\[b_{j,t}\,(1-g(y_j)) \geq b_{i,t}\,  (1-g(y_i)).\]
	From this, we have for every  $\tau\in S(y_i)\cap[t,t+1)$,
	\[\lambda_\tau(y_i) -\lambda_\tau(1)\geq b_{i,t}\,  (g(y^c_{i,\tau})-g(y_i)),\]
	where we used the following facts (i) For $y^c_{i,\tau}>0$, we have $\lambda_\tau(1)= b_{i,t}\,  (1-g(y^c_{i,\tau}))$ and (ii) For $y^c_{i,\tau}=0$ and  $\tau\in S(y_i)\cap[t,t+1)$, we have $y_i=0$ and therefore, $g(y^c_{i,\tau})-g(y_i)=0$.
	
	Integrating over all moments in $S(y_i)$, we get 
	\[\int_{\tau\in S(y_i)}\left(\lambda_\tau(y_i)-\lambda_\tau(1)\right)\, d\tau \,\geq \, \sum_{v\in V,\, v\geq y_i}b(v)\left(g(v)-g(y_i)\right).\]
	Finally, from Lemma \ref{lambda} we have that $\lambda_{net}(y_i)-\lambda_{net}(1)\geq\int_{\tau\in S(y_i)}\left(\lambda_t(y_i)-\lambda_t(1)\right)\,d\tau$, completing the proof of the lower bound.
	\medskip
	
	\noindent \textbf{Proof of upper bound:}  We start by observing that for every seed $y\in[0,1]$ of resource $i$,
	\[\lambda_{net}(y)=\falg_i(y)\,(1-g(y))+\sum_{j\in I\backslash\{i\}} \falg_j(y)\, (1-g(y_j)). \]
	Therefore, 
	\[\lambda_{net}(y_i)-\lambda_{net}(1)=\falg_i(y_i)\,  (1-g(y_i))+\sum_{j\in I\backslash\{i\}} \left[\falg_j(y_i)-\falg_j(1)\right]\, (1-g(y_j)), \]
	where we used the fact that $g(1)=1$ for every value of $\beta>0$.
	Now, the desired upper bound on $\lambda_{net}(y_i)-\lambda_{net}(1)$ follows from the claim that, 
	\[\falg_j(y_i)\leq \falg_j(1)\qquad \forall j\in I\backslash\{i\}.\] 
	Since $\falg_j(y_i)\leq B_i$, the claim is obviously true when $\falg_j(1)=B_j$, so let $\falg_j(1)<B_j,$ i.e., resource $j$ is available at every moment in $\falg(1)$. Therefore, in $\falg(1)$, every moment $\tau\in [t,t+1)$ that is not matched to $j$, must be matched to a resource with (strictly) higher bid price (since ties between bid prices do not occur except on a probability 0 set of seed values), i.e., 
	\[\lambda_{\tau}(1)> b_{j,t}\,  (1-g(y_j)).\]
	Since $\lambda_\tau(y_i)\geq \lambda_\tau(1)$ (from Lemma \ref{lambda}), we have that every moment that is not matched to $j$ in $\falg(1)$, is not matched to $j$ in $\falg(y_i)$ either. Therefore, $\falg_j(y_i)\leq \falg_j(1)$. 
	
	\hfill\Halmos\end{proof}
In Appendix \ref{appx:whyfrac}, we discuss why Lemma \ref{alglbl} does not hold for \alg\ when we consider the natural candidate solution given by \eqref{lambda1}---\eqref{dualdef}. For $\falg$, the lower bound in Lemma \ref{alglbl} is tight and we give an example below. 
\begin{eg}[Tightness of Lemma \ref{alglbl}]\label{tight}\emph{ Consider an instance with resources $i\in[n]$ and arrivals $t\in[2n-1]$. Let 
		\[B_i=(1-\epsilon)^{i-1}\quad \forall i\in[n-1]\quad \text{ and  } \quad B_n= 
		\frac{1-(1-\epsilon)^n}{\epsilon}.\] 
		Observe that $B_n=\sum_{i\in[n-1]}B_i$. For $i\in[n-1]$, resource $i$ has non-zero bids from arrivals $i,i+1,$ and $n+i$ such that,
		\[b_{i,i}=b_{i,i+1}=b_{i,n+i}=(1-\epsilon)^{i-1}.\] 
		Resource $n$ has non-zero bids from the first $n$ arrivals, \[b_{n,t}=(1-\epsilon)^{t-1}\quad \forall t\in[n].\] 
		\opt\ matches resource $n$ to the first $n$ arrivals and matches resource $i$ to arrival $n+i,\,\, \forall i\in[n-1]$. This fully utilizes the initial budget of every resource. It is not hard to see that $\falg$ outputs an integral matching for all seeds $Y$. Consider $\falg$ with seeds $y_i=0.5\,\, \forall i\in[n-1]$. Notice that, $y^c_{n,t}=0.5\,\, \forall t\in [n]$. Now, $B(y_n)=B_n$ for $y_n<0.5$ but for small enough $\delta>0$, when $y_n=0.5-\delta$, $\falg$ matches arrival 1 to resource $n$ and matches arrival $t$ to resource $t-1\,\, \forall t\in [n]\backslash\{1\}$. 
		Thus, $\falg_n(0.5-\delta)=\frac{\epsilon}{1-(1-\epsilon)^n}B_n$. 
		For $\epsilon=\frac{\beta\delta}{e^{0.5\beta}-1}$, $\epsilon=\frac{1}{\sqrt{n}}$, and $n\to+\infty$,  we have
		\[\frac{\epsilon}{1-(1-\epsilon)^n}B_n\to 
		\epsilon B_n\quad \text{ and }\quad \frac{g(0.5)-g(0.5-\delta)}{1-g(0.5-\delta)}B_n\to \epsilon B_n,\]
		which shows that the lower bound in Lemma \ref{alglbl} is tight. 
}\end{eg}
We now combine Lemma \ref{lambda} and Lemma \ref{alglbl} to give a lower bound on $\theta_i(Y)+\sum_{t\in \opt_i}\lambda_t(Y)$. 

\begin{lemma}\label{sum}
	Given $i\in I$ and seed $Y_{-i}$, let $\onee(y_i\leq v)$ indicate the event $y_i\leq v$. Then, for every $y_i\in[0,1]$, we have 
	\begin{eqnarray}
		&&
		\theta_i(y_i)
		+\sum_{t\in \opt_i}\lambda_t(y_i)\nonumber\\
		&&\quad \geq\,  \sum_{v\in V}b(v)\Bigg[ 1-g(v)+
		\onee(y_i\leq v)\, \left( \min\left\{g(y_i),\frac{g(v)-g(y_i)}{1-g(y_i)} \right\}\right) \Bigg].\nonumber
	\end{eqnarray}
\end{lemma}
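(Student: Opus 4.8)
The plan is to prove the inequality for a fixed seed value $y_i$ by accounting for the two summands on the right-hand side separately. Lemma \ref{lambda} already hands us the base term: since $\lambda_\tau(y_i)\ge b_{it}r_i(1-g(y^c_i(\tau)))$ for every $\tau\in[t,t+1)$, integrating over $\tau\in\opt_i$ and grouping moments by the value $v=y^c_i(\tau)$ gives $\sum_{t\in\opt_i}\int_t^{t+1}\lambda_\tau(y_i)\,d\tau\ge r_i\sum_{v\in V}b(v)(1-g(v))$, with $\theta_i(y_i)\ge 0$ still untouched (after taking expectations this already recovers Corollary \ref{lambdaprocess}). So the whole game is to account for the extra term $r_i\sum_{v\ge y_i}b(v)\min\{g(y_i),(g(v)-g(y_i))/(1-g(y_i))\}$ using $\theta_i(y_i)$ together with whatever slack is left over in the $\lambda$-bound.

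First dispose of the case that resource $i$ is exhausted in $\falg(y_i)$, i.e.\ $\falg_i(y_i)=B_i$. Then $\theta_i(y_i)=r_i g(y_i)B_i\ge r_i g(y_i)\opt_i\ge r_i g(y_i)B(y_i)$, and since $\min\{g(y_i),\cdot\}\le g(y_i)$ this alone dominates the extra term; combined with the base term we are done. So assume $\falg_i(y_i)<B_i$; because a resource leaves $I(\cdot)$ forever once its budget is spent, this forces $i\in I(\tau,y_i)$ for every $\tau$. Taking $j=i$ in the $\max$ defining $\lambda_\tau(y_i)$ then yields $\lambda_\tau(y_i)\ge b_{it}r_i(1-g(y_i))$ for every $\tau\in S(y_i)$, which upgrades the $\lambda$-bound to $r_i\sum_{v\in V}b(v)(1-g(v))+r_i\sum_{v\ge y_i}b(v)(g(v)-g(y_i))$. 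Using $\theta_i(y_i)=r_i g(y_i)\falg_i(y_i)$ and the elementary inequality $\min\{g(y_i),(g(v)-g(y_i))/(1-g(y_i))\}\le (g(v)-g(y_i))+g(y_i)^2$ (verify the two regimes of the $\min$ separately, using the thresholds that define them), the whole statement in this case collapses to the single structural claim
\[\falg_i(y_i)\ \ge\ g(y_i)\,B(y_i)\ =\ g(y_i)\!\!\sum_{v\ge y_i}\!\! b(v),\]
i.e.\ that $\falg(y_i)$ spends on $i$ at least a $g(y_i)$-fraction of the budget that \opt\ assigns to $i$ over moments whose critical threshold is at least $y_i$.

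I expect this last claim to be the main obstacle, and its difficulty is exactly the one flagged in Section \ref{sec:challenges}. At each moment $\tau\in S(y_i)$ the defining inequality $y^c_i(\tau)\ge y_i$ says that $i$'s seed-$y_i$ bid price clears the top bid price among the resources available at $\tau$ in the run $\falg(1)$, so $i$ would capture $\tau$ in $\falg(y_i)$ were it not for resources that are still alive in $\falg(y_i)$ but already exhausted in $\falg(1)$. To control these one has to compare the two fractional runs $\falg(y_i)$ and $\falg(1)$ directly: lowering $i$'s seed from $1$ to $y_i$ only raises $i$'s bid price, so $i$ monotonically "takes over" matches from the other resources, and the very matches $i$ takes over are what let those resources survive longer in $\falg(y_i)$ than in $\falg(1)$. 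Turning this into the quantitative bound above — presumably through an exchange/monotonicity argument that processes the segments of $\falg(1)$ in time order while tracking $x^f_i(\tau,y_i)$ — is where the real work concentrates, and it is also where passing to the fractional relaxation is indispensable: the analogue for the integral algorithm $\alg$ fails (Example \ref{ex}), which is why Theorem \ref{main} later resorts to the inflated-budget comparison rather than a pathwise one.
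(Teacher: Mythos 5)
Your reduction is clean up to the last step, and the case analysis is sound: the base term from Lemma \ref{lambda}, the exhausted-budget case via $\theta_i(y_i)=r_ig(y_i)B_i\ge r_ig(y_i)B(y_i)$, the upgrade $\lambda_\tau(y_i)\ge b_{i\tau}r_i(1-g(y_i))$ on $S(y_i)$ when $i$ survives, and the elementary inequality $\min\{g(y_i),\frac{g(v)-g(y_i)}{1-g(y_i)}\}\le (g(v)-g(y_i))+g(y_i)^2$ all check out. The problem is the ``single structural claim'' you reduce everything to, $\falg_i(y_i)\ge g(y_i)B(y_i)$: you do not prove it, and it is false in general. The paper states explicitly (and Example \ref{fracbid} is a first step toward this) that there are instances with $\falg_i(y_i)/B(y_i)\to 0$ for a fixed $y_i\in(0,1)$, while $g(y_i)\ge e^{-\beta}$ is bounded away from zero. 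The mechanism is exactly the cascade you describe: $i$'s critical thresholds can all exceed $y_i$ by an arbitrarily small margin, in which case $i$ captures almost none of $S(y_i)$ because the resources it displaces early survive to outbid it later. In that regime the lemma is saved not by $\theta_i$ being large but by the term $\frac{g(v)-g(y_i)}{1-g(y_i)}$ in the $\min$ being close to $0$; your bound $\min\{\cdot\}\le (g(v)-g(y_i))+g(y_i)^2$ discards precisely this $v$-dependence and leaves you needing to cover a $g(y_i)^2B(y_i)$ deficit that $\theta_i(y_i)$ cannot cover.

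The correct quantitative statement (Lemma \ref{alglbl} in the paper) is $\falg_i(y_i)\ge\sum_{v\in V,\,v\ge y_i}b(v)\,\frac{g(v)-g(y_i)}{1-g(y_i)}$, valid when $\falg_i(y_i)<B(y_i)$; note its per-$v$ coefficient vanishes as $v\downarrow y_i$, so it is strictly weaker than your claim exactly where your claim fails. It is proved not by a segment-by-segment exchange argument but by sandwiching the aggregate dual mass $\lambda_{net}(y_i)-\lambda_{net}(1)=\int_1^{T+1}(\lambda_\tau(y_i)-\lambda_\tau(1))\,d\tau$: from below by $r_i\sum_{v\ge y_i}b(v)(g(v)-g(y_i))$ using the critical thresholds on $S(y_i)$, and from above by $r_i\falg_i(y_i)(1-g(y_i))$ using the monotonicity $\falg_j(y_i)\le\falg_j(1)$ for all $j\ne i$. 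Feeding this into $\theta_i(y_i)=r_ig(y_i)\falg_i(y_i)$ yields the $\frac{g(v)-g(y_i)}{1-g(y_i)}g(y_i)$ contribution, which combined with your upgraded $\lambda$-bound gives $\frac{g(v)-g(y_i)}{1-g(y_i)}$, and the $\min$ in the lemma is then just the worse of the two cases. So the missing ingredient is this sandwich argument; without it (or a substitute of comparable strength) your Case 2 does not close.
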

\begin{proof}{Proof.}
	Consider $i\in I$ and $Y_{-i}\in[0,1]^{n-1}$ and fix an arbitrary seed $y_i\in[0,1]$. 
	As discussed earlier, $\falg_i(y_i)$ may be much smaller than $B(y_i)$. 
	Keeping this in mind, we consider two cases. In the first case, all if $i$'s budget is used and $\falg_i(y_i)$ is sufficiently large. In the second case, some of $i$'s budget is unused and we show that the bid prices have to increase by a certain amount. 
	Combining the two cases 
	will give us the desired.
	
	\noindent \textbf{Case I}: $\falg_i(y_i)\geq B(y_i)$. Thus,
	\[	\theta_i(y_i)\geq B(y_i) g(y_i) = \sum_{v\in V} \onee(y_i\leq v)\, b(v)\,  g(y_i).\]
	Combining this with the lower bound from Lemma \ref{lambda}, we have
	\begin{equation}\nonumber 
		\sum_{t\in \opt_i} \lambda_t(y_i)+ \theta_i(y_i)\,\geq \,\sum_{t\in \opt_i}\int_{\tau=t}^{t+1} \lambda_\tau(1)\, d\tau + \theta_i(y_i)\, \geq \,
		\sum_{v\in V} b(v)\,  (1-g(v)+\onee(y_i\leq v)\,g(y_i)) . 
	\end{equation}
	\smallskip
	
	\noindent \textbf{Case II:} $\falg_i(y_i)<B(y_i)$ ($\leq B_i,$) i.e., resource $i$ is available at every moment in $\falg(y_i)$. First, we establish a refined lower bound on $\sum_{t\in \opt_i}  \lambda_{t}(y_i) $, followed by a lower bound on $\theta_i(y_i)$. 
	Since $i$ is available at every moment in $\falg(y_i)$, we have,
	\begin{equation}\nonumber
		\lambda_\tau(y_i) \geq b_{i,t}\, (1-g(y_i))\,=\,b_{i,t}\, \left(1-g(y^c_{i,\tau})+ g(y^c_{i,\tau})-g(y_i)\right)\quad \forall t\in T,\, \tau\in S(y_i)\cap[t,t+1). 
	\end{equation}
	Using this we have,
	\begin{eqnarray}
		\sum_{t\in \opt_i} \lambda_t(y_i)\geq 
		\sum_{v\in V} b(v) \left[1-g(v) + \onee(y_i\leq v) \left(g(v)-g(y_i) \right) \right].\label{inclamb}
	\end{eqnarray}
			Now, 
			$\theta_i(y_i)=\falg_i(y_i)\, g(y_i)$.
			From Lemma \ref{alglbl}, 
			\begin{eqnarray}
				\falg_i(y_i)\, g(y_i)
				&\geq & \sum_{v\in V;\, v\geq y_i} b(v)\, \frac{g(v)-g(y_i)}{1-g(y_i)}\, g(y_i).\label{balance}
			\end{eqnarray}
			
			\noindent Combining \eqref{inclamb} and \eqref{balance} we get,
			\begin{eqnarray*}
				&&	
				\theta_i(y_i)+\sum_{t\in \opt_i} \lambda_t(y_i)  \\
				&&\qquad \geq 	 \sum_{v\in V} b(v) \left[1-g(v) + \onee(y_i\leq v) \left(g(v)-g(y_i) +\frac{g(v)-g(y_i)}{1-g(y_i)}\, g(y_i)\right) \right]\\
				&&\qquad \geq 	\sum_{v\in V} b(v) \Big[1-g(v) + \onee(y_i\leq v) \frac{g(v)-g(y_i)}{1-g(y_i)} \Big]
			\end{eqnarray*}
			
			For any given $y_i$, we could be in the worse of the two cases above, resulting in the following combined lower bound,
			\begin{eqnarray*}
				\theta_i(y_i)+\sum_{t\in \opt_i} \lambda_t(y_i) 
				\geq 	\sum_{v\in V} b(v) \Big[1-g(v) +\onee(y_i\leq v)\, \min\left\{ g(y_i), \frac{g(v)-g(y_i)}{1-g(y_i)}\right\} \Big].
			\end{eqnarray*}
			\hfill\Halmos\end{proof}

		\begin{proof}{Proof of Theorem \ref{guafalg}.}
			Let $\alpha= \min_{v\in[0,1]} \left[1-g(v)+
			\int_0^v \left( \min\left\{g(y),\frac{g(v)-g(y)}{1-g(y)} \right\}\right)dy\right]$. Taking expectation over the randomness in seed $y_i$ on both sides of the inequality in Lemma \ref{sum}, we have
			\begin{eqnarray*}
				&&E_{y_i}\left[\theta_i(y_i)+\sum_{t\in \opt_i} \lambda_t(y_i) \mid Y_{-i}\right]\nonumber\\
				&&\quad \geq\,  \sum_{v\in V}b(v)\Bigg[ 1-g(v)+
				\int_0^v \left( \min\left\{g(y),\frac{g(v)-g(y)}{1-g(y)} \right\}\right) dy \Bigg],\nonumber\\
				&&\quad \geq \alpha\, \sum_{v\in V}b(v)\, =\, \alpha\, r_i \, \opt_i.
			\end{eqnarray*}
			It remains to lower bound $\alpha$. We do this step numerically. For $g(x)=e^{x-1}$, we obtain $\alpha>0.508$ (minimum at $x=0.586$) and for $g(x)=e^{1.15(x-1)}$ we obtain $\alpha>0.522$ (minimum at $x=0.789$, see Figure 1). %
			
			\hfill\Halmos\end{proof}
		\noindent \textbf{Remark:} As stated in Theorem \ref{decomp}, \alg\ with $\beta=1$ is $(1-1/e)$ competitive for special cases of Adwords. 
		We present the proof in Appendix \ref{appx:decompose}.
		\begin{figure}[h]\label{plott}
			\includegraphics[scale=0.4]{./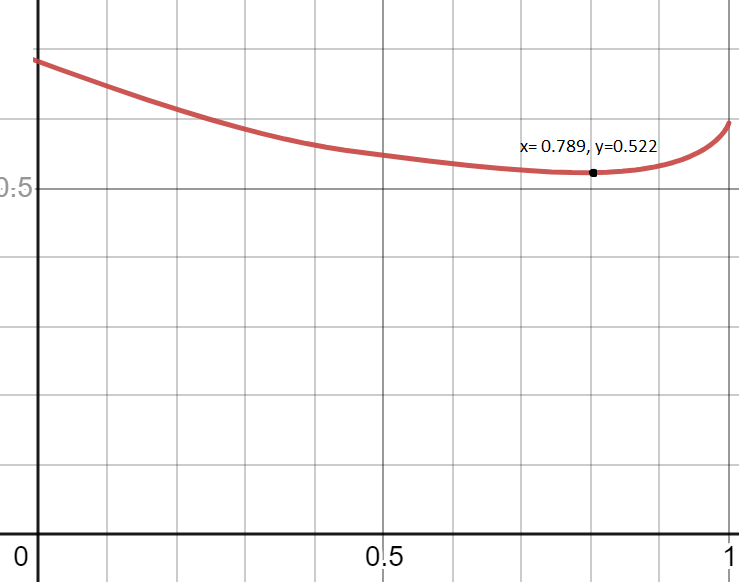}
			\centering
			\caption{Plot of $\alpha(x)= \left[1-g(x)+
				\int_0^x \left( \min\left\{g(y),\frac{g(x)-g(y)}{1-g(y)} \right\}\right)dy\right]$ for $g(x)=e^{1.15(x-1)}$.}
		\end{figure}
	
		\subsection{From Fractional to Integral Algorithm (Theorem \ref{main})}\label{sec:fracint}
		
		Since \alg\ is oblivious to budgets but $\falg$ is not, 
		there can be a substantial difference 
		between the output of these algorithm for the same seed $Y$ (see Example \ref{integfrac} in Appendix \ref{appx:challenge}).  Therefore, we compare \alg\ not with $\falg$ on the same instance but with the performance of $\falg$ on a modified instance where the budgets of resources are increased as described next.
		\smallskip
		
		\noindent \emph{Budget augmentation:} Consider an instance where the budget of every item is augmented as follows, 
		\[B^a_i=B_i+\max_{t\in T} b_{i,t} \quad \forall i\in I.\]
	For instances with bid-to-budget ratio $\gamma$, observe that,
	\[\frac{B^a_i}{B_i}\leq 1+\gamma \quad \forall i\in I.\]
		
		\begin{proof} {Proof of Theorem \ref{main}.}
			Let $\opt^a$ denote the total reward of optimal offline on an instance with augmented budgets $B^a_i$. Let $\falg^a$ denote the expected total reward of $\falg$ on the augmented instance.	For any seed $Y$, let $\alg(Y)$ denote the total reward of \alg\ as well as the matching output by \alg\ on the original instance with seed $Y$. Similarly, let	 $\falg^{a}(Y)$ denote the total reward of $\falg$ as well as the matching output by $\falg$ on the augmented instance with seed $Y$. For any resource $i\in I$, let $\falg^a_i(Y)$ denote the total revenue from resource $i$ in $\falg$ on the augmented instance with seed $Y$. Similarly, let $\alg_i(Y)$ denote the total revenue from $i$ in $\alg$ on the original instance with seed $Y$. Finally, let $x^{a,f}_j(t,Y)$ denote the remaining budget of resource $j\in I$ when $t$ arrives in $\falg^a(Y)$. Similarly, let $x_j(t,Y)$ denote the remaining budget of $j$ after arrival $t-1$ in $\alg(Y)$. 
			
			Now, fix an arbitrary seed $Y$ and a resource $i\in I$. We claim that,
			\begin{equation}\label{itemwise}
				\frac{B^a_i}{B_i}\, \alg_{i}(Y)\geq \falg^{a}_{i}(Y).
			\end{equation}
				Before proving \eqref{itemwise}, we show that \eqref{itemwise} suffices to prove the theorem. Taking expectation over seeds $Y$ and also a summation over $i\in I$ on both sides of \eqref{itemwise} gives us,
					\begin{equation}\label{pathwise}
					\alg\geq \frac{1}{1+\gamma}\,\, \falg^{a},
				\end{equation}
				here we use the fact that $\frac{B^a_i}{B_i}\leq 1+\gamma\,\, \forall i \in I$. Since an allocation that is feasible in the original instance is also feasible after augmenting the budgets, we have
			\begin{equation*}
				\opt^a\geq \opt.
			\end{equation*}
			Given a lower bound of $\eta$ on the competitive ratio guarantee of $\falg$ we have,
			\begin{equation}\label{fa}
				\falg^a\, \geq\, \eta \, \opt^a\,\geq\, \eta\, \opt.
			\end{equation}
		Combining \eqref{pathwise} and \eqref{fa} proves the theorem.
			It remains to show \eqref{itemwise} and we establish it in cases.
			\smallskip
			
			\noindent \textbf{Case I:} All $B_i$ of $i$'s budget is allocated in $\alg(Y)$ i.e., $\alg_i(Y)=B_i$. Then, \eqref{itemwise} follows by observing that $\falg^{a}_{i}(Y)\leq B^a_i\,=\, B_i^a\, \frac{\alg_i(Y)}{B_i}$. 
			\smallskip
			
			\noindent \textbf{Case II:} Some of $i$'s budget is not allocated to any arrival in $\alg(Y)$. In this case, we show more strongly that $\falg^a_{i}(Y)\leq \alg_i(Y)$. Define 
			\begin{eqnarray*}
				z^{a,f}_{j}(t,Y)&=&\max\{0,B_j-x^{a,f}_j(t,Y)\}=(B_j-x^{a,f}_{j}(t,Y))^+\quad \forall j\in I,\\
				z_j(t,Y)&=&(B_j-x_j(t,Y))^+\quad \forall j\in I.
			\end{eqnarray*}
			Notice that we use the original budget $B_i$ for defining $z^{a,f}_{j}(\cdot,\cdot)$ as well as $z_j(\cdot,\cdot)$.  
			As some of $i$'s budget is unused in $\alg(Y)$ after the final arrival $T$, we have $z_i(T+1,Y)>0$. 
			
			To prove that $\falg^a_{i}(Y)\leq \alg_i(Y)$, it suffices to show that $z^{a,f}_{i}(T+1,Y)\geq z_{i}(T+1,Y)$.
			In fact, we show more strongly that 
			\begin{equation}\label{morestrong}
				z^{a,f}_{j}(t,Y)\geq z_{j}(t,Y)\quad \forall j\in I, t\in T.
			\end{equation}
			This is true at $t=1$. Suppose this is true just prior to arrival of $t$. We will show that the inequality also holds just prior to arrival of $t+1$. Then by induction we have the desired.
			
			Given $z^{a,f}_{j}(t,Y)\geq z_{j}(t,Y)$, we have 
			\begin{equation}\label{dominance}
				\{j\mid z_j(t,Y)>0\}\,\subseteq\, \{j\mid z^{a,f}_j(t,Y)>0\},
			\end{equation}
			where the LHS is the set of resources available in $\alg(Y)$ at $t$. Let $e$ be the resource matched to $t$ in $\alg(Y)$. 
			 Recall that when $\falg$ matches an arrival to multiple resources, it matches fractions of the arrival to each resource in descending order of bid prices. Consider resources in the set $\{j\mid z_j(t,Y)>0\}$ and let $w$ denote the first resource (if any) in this set that is fractionally matched to $t$ in $\falg^a(Y)$. 
			We consider two cases based on whether $w$ exists.
			 
			 \noindent \textbf{Case II.A:} Resource $w$ does not exist, i.e., no resource in the set $\{j\mid z_j(t,Y)>0\}$ is matched to $t$ in $\falg^a(Y)$. Then, 
			 \begin{eqnarray*}	 
			 	z^{a,f}_q(t+1,Y)&=&z^{a,f}_q(t,Y)\geq z_q(t,Y) \geq z_q(t+1,Y)\quad \forall q\in \{j\mid z_j(t,Y)>0\},\\
			 	z^{a,f}_q(t+1,Y)&\geq&0=z_q(t,Y)=z_q(t+1,Y)\quad \forall q\in \{j\mid z_j(t,Y)=0\}.
			 \end{eqnarray*}
				\noindent \textbf{Case II.B:} Resource $w$ exists. Since both $\alg$ and $\falg$ match greedily according to bid prices, we have
					\[b_{w, t}\,(1-g(y_{w}))=\max_{\{j\mid j\in z_j(t,Y)>0\}}b_{j,t} (1-g(y_j))=\, 
					b_{e,t}\, (1-g(y_e)),\]
			 here the first equality follows from \eqref{dominance} and the fact that $w$ is the first resource in $\{j\mid z_j(t,Y)>0\}$ that is matched to $t$ in $\falg^a(Y)$. The second equality follows by definition of $e$. Since the bid prices of two different resources are unequal w.p.\ 1, we have that,
			 \[w=e.\]
			 Now, using $z^{a,f}_e(t,Y)\geq z_e(t,Y)$ and the fact that $t$ is fractionally matched in $\falg(Y)$, we have,
			  \[z^{a,f}_{e}(t+1,Y)\geq z_e(t+1,Y).\] 
			Further, since $z^{a,f}_{e}(t,Y)> 0$, at least $\max_{t\in T} b_{e,t}$ of $e$'s augmented budget is available in $\falg^a(Y)$ when $t$ arrives. Hence, in $\falg^a(Y)$, no resource from the set $\{j\mid j\neq e, z_j(t,Y)>0\}$ is matched (fractionally) to $t$, i.e., \[z^{a,f}_q(t+1,Y)=z^{a,f}_q(t,Y)\geq z_q(t,Y)=z_q(t+1,Y)\qquad \forall q\in \{j\mid j\neq e, z_j(t,Y)>0\}.\] 
			Similar to Case II.A, we also have,
\[				z^{a,f}_q(t+1,Y)\geq0=z_q(t,Y)=z_q(t+1,Y)\quad \forall q\in \{j\mid z_j(t,Y)=0\}.\]
			This completes the proof.
\hfill\Halmos		\end{proof}
		\noindent \textbf{Remark:} When every bid is either 0 or 1 ($b$-matching case), it is easy to see that $\falg$ and \alg\ are identical on every sample path. Since $b$-matching is a special case of Adwords with decomposable bids ($b_{i,t}\in\{0,b_i\times b_t\}\,\, \forall i\in I, t\in T$), and $\falg$ is $(1-1/e)$ competitive for Adwords with decomposable bids (Appendix \ref{appx:decompose}), we have that, \alg\ is $(1-1/e)$ competitive for $b$-matching (with arbitrary budgets).}

\subsection{Upper Bound on Competitive Ratio of \alg}\label{sec:randub}
The upper bound of $(1-1/e)$ for online bipartite matching and its many generalizations (including Adwords), follows from a family of instances where arrivals get pickier over time; the first arrival has an edge to every resource and for every subsequent arrival, one edge is dropped uniformly randomly, i.e., the $t$-th arrival has exactly $n-t+1$ edges and the $t+1$-th arrival has an edge to all but one (randomly chosen) neighbor of arrival $t$~\citep{kvv}. This family of instances has identical bids and budgets; features that are also sufficient for a lower bound of $(1-1/e)$ on the competitive ratio of \alg\ (Theorem \ref{decomp}). 
To obtain a tighter upper bound on GPG, we start by constructing an instance where bids are not decomposable and GPG is at most $(1-1/e)\, \opt$. 
Then, by a suitable modification of this instance, we obtain a family of instances such that for every $\beta\geq 1$, there exists an instance in the family where \alg\ is strictly less than $(1-1/e)\, \opt$. 


To describe the main ideas we let $\beta=1$ and allow arbitrary bid to budget ratio. Consider an instance with $n+1$ resources and $2n$ arrivals. Each arrival bids $1$ on every resource in $\{1,2,\cdots,n\}$. Arrival $t\in[n]$ bids \[w(t)=\frac{1-e^{\frac{t}{n+1}-1}}{1-e^{-1}},\] 
on resource $n+1$. Arrivals $t\geq n+1$ do not have an edge to resource $n+1$. Resources $\{1,2,\cdots,n\}$ have unit budget and resource $n+1$ has a budget of $\sum_{t\in[n]} w(t)$. The optimal matching allocates resource $n+1$ to every $t\in[n]$ and matches arrival $t$ to resource $t-n$ for $t\geq n+1$. We have,
\[\opt=n+\sum_{t\in[n]} w(t),\]
here $\lim_{n\to+\infty}\frac{1}{n}\sum_{t\in[n]} w(t)\to \frac{1}{e-1}$. Therefore, $\lim_{n\to+\infty}\frac{1}{n}\opt\to \frac{e}{e-1}$. 

Let $y_i$ denote the random seed of resource $i\in[n+1]$ in \alg. 
For $k\in[n]$, let $z_k$ denote the $k$-th order statistic of the random variables $\{y_i\}_{i\in[n]},$ i.e., the $k$-th smallest seed value. For $k\in[n]$, random variable $z_k$ has mean $\frac{k}{n+1}$. Let $I(t)$ denote the (random) set of resources available at $t$ in \alg. 
Resource $n+1$ is matched to arrival $t\in[m]$ if \[w(t)(1-e^{y_{n+1}-1})>\max_{i\in I(t)} (1-e^{y_i-1})\geq (1-g(z_t)),\]
here the last inequality follows from the fact that the $t$-th lowest seeded resource in $[n]$ will not be matched to any of the arrivals preceding $t$. For the sake of intuition, suppose that for $n\to+\infty$, we have $z_{t}\approx E[z_{t}]=\frac{t}{n+1}\,\, \forall t\in [n]$, with high probability (w.h.p.). Then, at a high level, we argue as follows. Resource $n+1$ is matched to $t$ only if $y_{n+1}\approx 0$, which occurs with a vanishingly small probability. By linearity of expectation, the expected total consumed capacity of resource $n+1$ is $\approx 0$. Therefore, the expected value of $\frac{1}{n}\alg$ is $\approx 1$, 
and we have $\frac{\alg}{\opt}\approx(1-1/e)$.

Now, consider a family of instances parameterized by $\epsilon\geq 0$. Every instance in the family is similar to the instance above 
except that the bid from arrival $t\in[n]$ to resource $n+1$ is, 
\[w(t,\epsilon)=\frac{1-e^{\frac{t}{n+1}-1}}{1-e^{-1+\epsilon}},\] 
and the capacity of resource $n+1$ is $\sum_{t\in[n]} w(t,\epsilon)$. All other bids and parameters are left unchanged. Observe that for a fixed $t$, $w(t,\epsilon)$ is a strictly increasing function of $\epsilon$. For non-zero $\epsilon$, the value of offline (\opt) increases by 
\begin{equation*}
\sum_{t\in[n]} (w(t,\epsilon)-w(t,0))=\sum_{t\in[n]} w(t,0)\frac{e^{-1}(e^{\epsilon}-1)}{1-e^{-1+\epsilon}}
=O(\epsilon n).
\end{equation*}
To obtain an upper bound that is strictly less than $(1-1/e)$, we show that the expected total value of \alg\ increases by an amount that is significantly lower than $O(\epsilon n)$. 
 The proof is presented in Appendix \ref{appx:gpgub}. We would like to note that for \alg\ with $\beta=1$, performing Monte Carlo simulations with $n=5000$ indicates a stronger upper bound of 0.604 when bids from arrivals $t\in[n]$ to resource $n+1$ are set as follows,
\[\hat{w}(t,\epsilon)=\frac{1-e^{\frac{t}{n+1}-1-\epsilon}}{1-e^{-1}},\]
with $\epsilon=0.2$. 

\section{Extensions and Applications}\label{sec:extension}
In this section, we demonstrate the flexibility and usefulness of budget oblivious algorithms through different settings. Our first two results are in settings that are different from Adwords. 
Then, we discuss potential applications of budget oblivious allocation in 
automated budget management for Adwords. 
\subsection{
Online Matching with Multi-channel Traffic}

Inspired by applications in two-sided matching platforms such as VolunteerMatch -- where volunteers are matched to nonprofits -- \cite{multi} introduce a setting where volunteer traffic comes from two types of sources. \emph{Internal} traffic consists of volunteers who arrive on the platform with an interest in (possibly) many nonprofits and the platform selects a set of opportunities to recommend. \emph{External} traffic models volunteers who directly arrive at a nonprofit's volunteering page via an external link that bypasses the platform's matching/recommendation system.  \cite{multi} model this as an instance of online matching where nonprofit opportunities with a finite need for volunteers correspond to offline resources with finite capacity and each volunteer corresponds to an online arrival. Up on arrival, a volunteer is probabilistically matched to at most one opportunity. External arrivals have an edge to exactly one resource. They show that when the number of external arrivals is a small fraction of the total capacity of all resources, a natural generalization of the algorithm of \cite{msvv}, which is oblivious to the source of arrivals, achieves the optimal competitive ratio for the problem. However, when external arrivals are a significant fraction of the total capacity of all resources, one can achieve strictly better performance with an algorithm that reacts to external and internal arrivals in different ways. For a deterministic version of their (stochastic) problem, we show that the budget (and arrival source) oblivious GPG algorithm achieves the best possible guarantee for every possible composition of the arrival sequence. This closes the (non-zero) gap between the best known guarantee and the best known upper bound shown for the deterministic version of the setting in \cite{multi}. 

Formally, we consider a setting with $I$ resources that have arbitrary capacities $(B_i)_{i\in I}$. Every arrival $t\in T$ comes with a set of edges to resources $(i,t)\in E$. Matching an arrival to a resource uses one unit of the resource's capacity and an arrival can only be matched to a neighboring resource with available capacity. The goal is to match the maximum number of arrivals. Every arrival is either internal or external. Let $T^\ntr$ and $T^\xtr$ denote the set of internal arrivals and external arrivals, respectively. Let $T^\xtr_i$ denote the set of external arrivals with an edge to resource $i\in I$. \cite{multi} define the \emph{fraction of external traffic} (of a given instance) as the fraction of total capacity that can be matched to external arrivals.   
\[\text{Effective Fraction of External Traffic}:
\qquad \delta=\frac{\sum_{i\in I} \min\{|T^\xtr_i|,\, B_i\}}{\sum_{i\in I} B_i}.\]

For any given $\delta$, this setting corresponds to 
OBA with binary bids and without the large capacity assumption. Therefore, 
GPG with $\beta=1$ is $(1-1/e)$ competitive for this setting (Theorem \ref{decomp}). We analyze the competitive ratio of \alg\ with $\beta=1$ as a function of $\delta$. Since $\beta=1$ leads to the best possible guarantee, we hereby fix $\beta$ at 1 and refer to \alg\ with $\beta=1$ as simply \alg. Note that, \cite{multi} consider a model where matches can fail with some probability and the starting capacities are large. We do not make the large capacity assumption but consider a deterministic setting where every match succeeds with probability one.   


Our main goal is to find how the performance of \alg\ changes with $\delta$. While \alg\ is at least $(1-1/e)$ competitive for every $\delta\in[0,1]$, at a high level, as $\delta\to 1$, the competitive ratio of \alg\ approaches 1. To see this, consider an arbitrary resource $i\in I$ and sample path $Y$ in \alg. On this sample path, $\alg$ matches every arrival in $T^\xtr_i$ provided there is some budget remaining. Thus, $\alg$ uses at least $\min\{|T^\xtr_i|, B_i\}$ of resource $i$'s budget on every sample path. 
Now, by definition of $\delta$, the expected value of \alg\ is at least $ \delta \sum_{i\in I}{B_i}\geq \delta\, \opt$. 
Overall,  we have
\[\alg\geq \max\{\delta, (1-1/e)\}\,\opt.\]
The gives a lower bound on the competitive ratio that improves as $\delta$ increases but the dependence on $\delta$ is not tight. For example, we show that \alg\ is $1+(1-1/e)\ln (1-1/e)$ ($>0.71$) competitive for $\delta=(1-1/e)$. From the upper bound shown in \cite{multi}, this is the best guarantee achievable by any online algorithm. 
\begin{repeattheorem}[Theorem 1 in \cite{multi}.] For any value of $\delta\in[0,1]$ and any resource capacities, no (randomized) online algorithm can achieve a competitive ratio better than $\max\{(1-1/e),1+\delta\ln \delta\}$.
\end{repeattheorem}
We show that the competitive ratio of \alg\ matches this upper bound for all values of $\delta\in[0,1]$. This improves the competitive ratio achieved by the novel Adaptive Capacity (AC) algorithm proposed in \cite{multi}\footnote{Unfortunately, there is no closed form expression for the guarantee of AC. See Theorem 2 in \cite{multi} for the factor revealing LP that gives a lower bound on the competitive ratio of AC. Figure 2 in \cite{multi} illustrates that there is a non-zero gap between the known competitive ratio of AC and the upper bound.}. It is perhaps surprising that, unlike the AC algorithm, \alg\ achieves the optimal guarantee without requiring any information about the channel that each arrival belongs to. {\color{black} We would like to note that this result does not require the new structural properties (such as Lemma \ref{alglbl}) that we show for the Adwords problem.}

\subsubsection{Overview of Analysis.}
In the same vein as \cite{multi}, we lower bound the competitive ratio by 
combining the LP free analysis technique with a factor revealing program (FRP). At a high level, every possible instance of the problem can be projected to a feasible solution for FRP with objective value equal to the ratio $\frac{\alg}{\opt}$ on that instance. Computing the optimal value of the FRP gives a lower bound on the worst case ratio, i.e., the competitive ratio of \alg. In general, such a program may be intractable to analyze but given the right set of constraints one can find a strong lower bound on the optimal value of the program. We derive the constraints of our FRP using the LP free framework and structural properties of \alg\ and \opt. 
We start with 
some useful structural properties of worst case instances that simplify the analysis. 

\begin{lemma} \label{simplify}
For any given $\delta\in[0,1]$,	to lower bound the competitive ratio of \alg\ it suffices to consider instances where 
\begin{enumerate}[(i)]
\item \opt\ matches every arrival. 
\item Internal traffic arrives before 
external traffic, i.e., there is no internal traffic after the first external arrival.
\end{enumerate}
\end{lemma}
We present the proof in Appendix \ref{appx:multi}. To state the FRP, let $G(x)=\int_0^x g(u) du= e^{x-1}-e^{-1}$ and notice that $G(1)=(1-1/e)$. The FRP has decision variables $\{o^\ntr_i,\, o^\xtr_i,\, x^\ntr_i,\, x^\xtr_i\}_{\forall i\in I}$ and $\zeta$. 
\begin{eqnarray}
\textbf{FRP}: \quad
&\underset{{\zeta, \{o^\ntr_i,\, o^\xtr_i,\, x^\ntr_i,\, x^\xtr_i\}_{ i\in I}}}{\inf}\quad &\zeta \nonumber\\
&\text{s.t.} & 
o^\ntr_i+o^\xtr_i\leq B_i\quad \forall i\in I, \label{cap}\\
&& x^\xtr_i\leq o^\xtr_i \quad \forall i\in I,\label{capx}\\
&& \sum_{i\in I} o^\xtr_i = \delta \sum_{i\in I} B_i,\label{fet}\\
&& \sum_{i\in I} x^\xtr_i \geq\, (\zeta-1+\delta)\,\sum_{i\in I} B_i,\label{lx}\\
& \zeta \sum_{i\in I} (o^\ntr_i+o^\xtr_i)&\geq \sum_{i\in I}\left(G(1)\, o^\ntr_i + x^\xtr_i+ o^\xtr_i    G\left(1-\frac{\underset{j\in I}{\sum}x^\xtr_i}{\underset{j\in I}{\sum}o^\xtr_i}\right)\right). 
\label{lpfree}
\end{eqnarray}
\begin{theorem}\label{crlb}\label{frpval}
For any given $\delta\in(0,1]$, the optimal value of FRP is a lower bound on the competitive ratio of $\falg$. For $\delta\geq \frac{1}{e}$, the optimal value of FRP is at least $1+\delta \ln \delta$.
\end{theorem}

%
We give an overview of the first part of the Theorem \ref{crlb} here and present the full proof of the theorem in Appendix \ref{appx:multibase}. The first part states that the optimal value of FRP is a lower bound on the competitive ratio. 
The main idea is as follows. Given an instance with fraction of external traffic $\delta$, we show that there exists a feasible solution to FRP with objective value equal to the performance ratio $\frac{\alg}{\opt}$ on that instance. Consider the following candidate solution for the FRP. 

\noindent \textbf{Candidate solution for FRP:} For resource $i\in I$, let $o^\xtr_{i}$ be the total budget of $i$ allocated to external arrivals in \opt. Let $x^\xtr_i$ be the expected budget of $i$ allocated to external arrivals in $\falg$. Let $o^\ntr_i$ and $x^\ntr_i$ be the (expected) budget of $i$ allocated to internal arrivals in \opt\ and in $\falg$, respectively. Finally, let 
\[\zeta=\frac{\sum_{i\in I} (x^\ntr_i+x^\xtr_i)}{\sum_{i\in I}(o^\ntr_i+o^\xtr_i)}.\] 
Observe that $\zeta$ is equal to the performance ratio $\frac{\alg}{\opt}$. We start by noting that for $\delta>0$, all constraints (including \eqref{lpfree}) are well defined.  In particular, for $\delta>0$, using Lemma \ref{simplify}$(i)$ we have $\sum_{i\in I} o^\xtr_i >0$ on every non-trivial instance of the problem. By using the structure of the instance and borrowing ideas from \cite{multi}, it can be shown that the candidate solution satisfies constraints \eqref{cap}--\eqref{lx}. 
The crux of our analysis is to show that inequality \eqref{lpfree} is satisfied and we give a detailed proof in Appendix \ref{sec:multianalysis}. 

\subsection{Online Matching with Stochastic Rewards}\label{sec:stochrew} 

Introduced by \cite{deb}, the problem of online matching with stochastic rewards generalizes online bipartite matching by associating a probability of success $p_{i,t}$ with every pair $i\in I, t\in T$. 
When a match is made, i.e., edge is chosen, it succeeds independently with this probability. 
If the match succeeds, the resource cannot be matched to any other arrival. If the match fails, the arrival departs but the resource is available for future rematch. The edge probabilities are revealed sequentially with each arrival. When edge probabilities are binary, i.e., $p_{i,t}\in\{0,1\}\,\, \forall i\in I, t\in T$, this setting reduces to online bipartite matching. The natural greedy algorithm that matches each arrival to an available resource with the highest success probability is 0.5 competitive for this problem~\citep{negin}. In general, this is the best known result but 
better algorithms are known for several well studied special cases~\citep{deb,mehta,stochrew,huang}. We discuss the state-of-the-art below and refer to \cite{stochrew} and \cite{huang} for a more detailed review.
\begin{enumerate}[(i)]
\item \textbf{Decomposable probabilties:}  \cite{stochrew} introduced the special case where edge probabilities are decomposable i.e., $p_{i,t}\in\{0, p_i\times p_{t}\}\,\, \forall i\in I, t\in T$. They showed that Algorithm \ref{srank} with $\beta=1$ is 
is $(1-1/e)$ competitive. \cite{when} recently gave a simpler and more general proof of this result. Note that Algorithm \ref{srank} corresponds to the GPG algorithm with bids replaced by edge probabilities. Since online bipartite matching is a special case of the setting of decomposable probabilities, $(1-1/e)$ is also the best possible competitive ratio guarantee.
\item \textbf{Vanishing probabilities:} Another well studied special case is when the edge probabilities are vanishingly small, i.e., $p_{i,t}\to 0\,\, \forall i\in I, t\in T$. \cite{stochrew} showed that a \emph{deterministic} algorithm, originally proposed by \cite{mehta} and distinct from Perturbed Greedy, has competitive ratio guarantee of at least 0.596 in this setting. 
\end{enumerate}
\smallskip

\begin{algorithm}[H]
\textbf{Inputs:} Set of resources $I$, parameter $\beta$\; 
Let $g(x)=e^{\beta(x-1)}$\;
For every $i\in I$ generate i.i.d.\ r.v.\ $y_i\sim U[0,1]$\;
\For{\text{every new arrival } $t$}{
Match $t$ to $i^*=\underset{ i\in I}{\arg\max}\,\, p_{i,t} (1-g(y_i))$\;
\textbf{if} {match succeeds} \textbf{update} $I=I\backslash\{i^*\}$\;			}

\caption{Generalized Perturbed-Greedy for Stochastic Rewards}
\label{srank}
\end{algorithm}
\smallskip

Inspired by these positive results, a perhaps natural question is if the algorithms developed for special cases point to a candidate algorithm that may outperform greedy for the general problem setting. Since the setting of stochastic rewards includes online bipartite matching as a special case, no deterministic algorithm can beat greedy in general~\citep{kvv}. This rules out all the algorithms that are known to beat greedy in the special case of vanishing probabilities. That leaves Algorithm \ref{srank}, which gives the optimal competitive ratio guarantee for decomposable probabilities (for $\beta=1$). 
Previously, it was not known if this algorithm outperforms greedy for vanishing probabilities. \cite{stochrew} highlight some challenges with proving such a result 
and leave the analysis of Algorithm \ref{srank} for vanishing probabilities as an open problem. We show that Algorithm \ref{srank} with $\beta=1$ is at least 0.508 for vanishing probabilities. 
While this does not beat the best known competitive ratio result for vanishing probabilities, it gives a single algorithm (namely Algorithm \ref{srank}) that outperforms greedy in \emph{all} previously studied special case of the problem. Recently, \cite{when} showed that in general, the competitive ratio of Algorithm \ref{srank} is strictly less than $(1-1/e)$. 
\begin{theorem}\label{stoch}
For online matching with stochastic rewards and vanishing probabilities, i.e., $\max_{i\in I, t\in T} p_{i,t}\to 0$, Algorithm \ref{srank} is 0.522 competitive for $\beta=1.15$ and $0.508$ competitive for $\beta=1$.
\end{theorem}
We present the proof in Appendix \ref{appx:stoch}. The result follows directly from an equivalence between the setting of vanishing probabilities and Adwords with unknown (stochastic) budgets. This equivalence was first observed by \cite{deb} and later generalized by \cite{huang} anc \cite{stochrew}. 
The result states an instance of online matching with vanishing probabilities is equivalent to an instance of Adwords with unknown budgets (on the same graph) with bids equal to the edge probabilities, i.e., $b_{i,t}=p_{i,t}\,\, \forall i\in I, t\in T,$ and unknown budgets 
sampled independently for each resource from the exponential probability distribution with unit mean, i.e., $B_i\sim \text{Exp}(1)\,\, \forall i\in I$. 

\subsection{Advantage of Budget Obliviousness in Automated Budget Management} \label{sec:automated}
The high level goal for a typical advertiser is to target their customers through multiple ad campaigns and marketing channels. Starting with an overall budget, an advertiser must determine a good distribution of their budget to individual ad campaigns. For search ads, advertisers must also determine the bids for relevant key words. While these decisions play a crucial role in the success of their ad campaigns, determining a good distribution of budgets between diverse options and deciding optimal bids for specific campaigns can be an incredibly challenging task for any advertiser. As a result, many advertisers rely on automated bidding and budget management tools. 
The motivation behind these tools is to improve performance for advertisers while simplifying the usage of ad platforms \citep{autobid}. 
%
Budget oblivious allocation algorithms, such as GPG, do not require fixed initial budget as input. This increases the degrees of freedom available to automated budget management tools and may allow significantly better outcomes for both the advertisers and the platform. We illustrate via a stylized example in Appendix \ref{appx:budmage}.  

\section{Conclusion}\label{sec:conclusion}
We considered the classic Adwords setting with unknown budgets and showed that a natural generalization of Perturbed Greedy algorithm, that computes random bid prices for resources without using budget information, is 0.522 competitive. This is the first result that improves on the guarantee of 0.5 obtained by the greedy algorithm. To show the result, first, we analyze the fractional version of the algorithm using recent innovations in analysis of online matching algorithms, alongside various novel structural insights. We also showed that no deterministic algorithm can do better than greedy and also gave an upper bound of 0.624 on our randomized algorithm (and observed a stronger upper bound 0f 0.604 in simulations). Finally, we demonstrated the usefulness of budget oblivious algorithms for online resource allocation in a variety of applications. An immediate open question from our work is to find out the best possible competitive ratio guarantee for the Adwords problem with unknown budgets. 

\ACKNOWLEDGMENT{We thank the anonymous referees for their valuable and helpful feedback. We are also grateful to Vijay Vazirani for many insightful discussions on the topic of this paper. Finally, we thank Will Ma for pointing us to a simpler proof of Theorem 1 and thank Sebastian Schubert for pointing out an error in an earlier version of this paper.}
\small
\bibliographystyle{informs2014.bst}
\bibliography{bib_2}
\newpage

\begin{APPENDICES}
\def\alg{\textsc{GPG}}
\section{Further Discussion of Concurrent Work}\label{appx:related}
{\color{black} \cite{albers} and \cite{vazirani, Vazirani2022TowardsAP} concurrently show that the Perturbed Greedy algorithm is $(1-1/e)$ competitive for Adwords with binary bids and arbitrary resource budgets, similar to Theorem \ref{decomp}. The overall proof structure in these results is essentially the same; each proof is based on finding a feasible solution to a linear system. Further, the candidate solutions proposed in the papers are also essentially the same and the proof of feasibility relies on establishing ``monotonicity" and ``dominance" properties similar to \cite{devanur}. The minor differences arise from the linear system used in each paper. \cite{albers} perform a primal-dual analysis based on a configuration linear program (LP) introduced for a related problem by \cite{huang}.  \cite{Vazirani2022TowardsAP} generalize the `economic viewpoint' of \cite{econ} to obtain a linear system that closely resembles that of \cite{albers} (which is stated below). 
	\begin{eqnarray*}
		\sum_{t\in T}\lambda_t+\sum_{i\in I} \theta_i &\leq &\alg,\\
		\sum_{t\in \Delta}\lambda_t +\theta_i &\geq & \alpha\, \min\{|\Delta|, B_i\}\quad \forall i\in I, \Delta\subseteq T,\\
		\lambda_t,\theta_i & \geq &0.
	\end{eqnarray*}
Comparing this with the constraints \eqref{dual1}---\eqref{dual3} that we introduce in Section \ref{sec:overview}, the main difference is that we only impose the second constraint for a specific set $\opt_i$, which is the set of arrivals matched to $i$ in the optimal offline solution. This difference does not play an important role here but can make a significant difference in settings with stochastic elements (see, for example, \citep{stochrew, full}).
 
	  These approaches do not yield a performance guarantee better than 0.5 in the general Adwords setting. In particular, \cite{Vazirani2022TowardsAP} identifies a key structural property, called \emph{no surpassing}, the absence of which prevents a generalization of their result to the Adwords setting. 
For the general setting, we replace the ``dominance" with a new lower bound (shown in Lemma \ref{alglbl}) and analyze a fractional algorithm ($\falg$) in order to analyze \alg. }
\section{Obstacles with Using Classic Primal-Dual Analysis}\label{appx:pmd}
The primal-dual framework of \cite{buchbind} and \cite{devanur} is a general technique for proving guarantees for online matching and related problems. To describe the framework, consider the following primal and dual problems that upper bound the optimal offline solution for the Adwords problem.

\begin{minipage}{7.5cm}
\begin{eqnarray}
\textbf{Primal:} &\min &\sum_{i\in I, t\in T}b_{i,t}\, x_{it} \nonumber\\
&s.t.\ & \sum_{t\in T}b_{i,t}x_{it} \leq B_i\quad \forall i\in I,\nonumber \\
&& \sum_{i\in I}x_{it} \leq 1\quad \forall t\in T,\nonumber \\
&& x_{it}\geq 0 \quad \forall i\in I, t\in T.\nonumber	
\end{eqnarray} 
\end{minipage}
\begin{minipage}{7.5cm}
\begin{eqnarray}
\textbf{Dual:}	&\min &\sum_{t\in T}\lambda_t +\sum_{i\in I} B_i\, \theta_{i} \nonumber\\
&s.t.\ & \lambda_{t} +b_{i,t}\theta_i \geq b_{i,t}\quad \forall i\in I, t\in T\nonumber \\
&& \lambda_t, \theta_{i}\geq 0 \quad \forall t\in T,i\in I.\nonumber	\\
&&\nonumber\\
&&\nonumber\\
&&\nonumber
\end{eqnarray} 
\end{minipage}
\smallskip

\noindent \textbf{Primal-dual certificate~\cite{devanur}:} 
To prove $\alpha$ competitiveness for $\alg$, 
it suffices to find a set of non-negative values $\lambda_t,\theta_i$ such that,
\begin{enumerate}[(i)]
\item$\lambda_t+b_{i,t}\theta_i\geq\alpha b_{i,t} \quad \forall i\in I, t\in T,$
\item $ \varepsilon\, \alg\geq \sum_{t\in T}\lambda_t + \sum_{i\in I} B_i\, \theta_i,$ 
\end{enumerate} 

To understand the obstacles in using this framework to analyze GPG for OBA, we start by defining a natural candidate solution for the system based on the decisions of \alg. 
Let \alg\ denote Algorithm \ref{rank} as well as its expected reward. In \alg, let $g(t)=e^{t-1},$ i.e., $\beta=1$. Given that we have external randomness in \alg\ through $Y=({y}_i)_{i\in I}$, we shall define variables $\lambda^{Y}_t$, $\theta^{Y}_i$ and subsequently set $\lambda^{{}}_t=E_{Y}[\lambda^{Y}_t]$ and $\theta^{{}}_i=E_{Y}[\theta^{Y}_i]$. Inspired by \cite{devanur}, we set $\lambda^{Y}_t$ and $\theta^{Y}_i$ as follows. 
Initialize all dual variables to 0. Conditioned on $Y$, for any match $(i,t)$ in \alg\ set, 
\begin{equation}
\lambda^{Y}_t=b_{i,t}(1-g(y_i)) \text{ and increment } \theta^{Y}_i \text{ by } \frac{b_{i,t}}{B_i}g(y_i).\label{dualset}
\end{equation}

Clearly, $\lambda^{Y}_t$ is set uniquely since \alg\ offers at most one resource $i$ to arrival $t$, and $\theta^{Y}_i$ takes a non-zero value only if it is also accepted by some $t$, and if this occurs $\theta^{Y}_i$ is never re-set. The following lemma (stated without proof) declares that this candidate solution satisfies constraint (ii) with $\varepsilon$ close to 1 in the small bid regime. 

%
\begin{lemma}
For the candidate solution given by \eqref{dualset}, constraint (ii) in the primal-dual certificate is satisfied with 
$\varepsilon=1+ \gamma$.
\end{lemma}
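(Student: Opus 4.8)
The plan is to evaluate the dual objective $\sum_{t\in T}\lambda_t+\sum_{i\in I}B_i\theta_i$ in closed form for the candidate solution \eqref{dualset} and compare it against \alg\ seed by seed. First I would fix the random seed $Y$. Because \alg\ assigns each arrival $t$ to at most one resource, $\lambda^{Y}_t$ is well defined (and equals $0$ on unmatched arrivals), and because $\theta^{Y}_i$ is only ever incremented along matched pairs, summing the two contributions $(1-g(y_i))$ and $g(y_i)$ attached to a match $(i,t)$ cancels the perturbation: that pair contributes exactly $b_{it}r_i$ to $\sum_{t}\lambda^{Y}_t+\sum_{i}B_i\theta^{Y}_i$. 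Hence, writing $S_i(Y)$ for the set of arrivals matched to $i$ under seed $Y$, the entire dual objective collapses to $\sum_{i\in I}r_i\sum_{t\in S_i(Y)}b_{it}$, i.e. the ``nominal'' bid value collected by \alg\ as if there were no budget caps. The $g(y_i)$ split is there only to make constraint (i) hold later; it plays no role here.

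The second step is a per-resource comparison of this nominal value with the reward \alg\ actually books from $i$, namely $r_i\min\{\sum_{t\in S_i(Y)}b_{it},\,B_i\}$. If $i$'s budget is never exhausted the two coincide. If it is exhausted, let $t^\ast$ be the last arrival matched to $i$; since $i$ was still in the active set $I$ at the instant $t^\ast$ was processed, the budget of $i$ consumed strictly before that match was below $B_i$, and adding the last bid $b_{it^\ast}\le \gamma B_i$ (by the definition of the bid-to-budget ratio $\gamma$) gives $\sum_{t\in S_i(Y)}b_{it}<(1+\gamma)B_i$. Either way, $r_i\sum_{t\in S_i(Y)}b_{it}\le (1+\gamma)\,r_i\min\{\sum_{t\in S_i(Y)}b_{it},\,B_i\}$.

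Summing this inequality over $i\in I$ and invoking the first step gives $\sum_{t}\lambda^{Y}_t+\sum_{i}B_i\theta^{Y}_i\le (1+\gamma)\,\alg(Y)$ for every seed $Y$; taking expectations over $Y$ and using $\lambda_t=E_Y[\lambda^{Y}_t]$, $\theta_i=E_Y[\theta^{Y}_i]$, and $\alg=E_Y[\alg(Y)]$ then yields constraint (ii) with $\delta=1+\gamma$. I expect the only genuine obstacle to be the overflow bookkeeping in the second step --- making precise why a resource's total consumption overshoots its budget by at most one bid, which is exactly where the loss factor $1+\gamma$ (rather than $1$) is born; the remainder is linearity of expectation together with the telescoping identity $1-g(y_i)+g(y_i)=1$.
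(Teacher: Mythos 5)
Your proof is correct; note that the paper states this lemma without proof, and your argument (the per-match telescoping $1-g(y_i)+g(y_i)=1$ collapsing the dual objective to the nominal bid value, plus the observation that a resource's consumption overshoots $B_i$ by at most one bid, i.e.\ by at most $\gamma B_i$) is exactly the standard justification behind it. Nothing is missing.
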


Unfortunately, there exist instances such that constraints (i) do not hold for any value of $\alpha>0.5$. For example, suppose that the first arrival has a non-zero bid exclusively from resource $i$. In fact, let $b_{i,1}=1$. Consequently, the first arrival is always matched to $i$ and we have, $\lambda^Y_1=  (1-g(y_i))\,\, \forall Y\in[0,1]^{n}$. Subsequent arrivals have higher bids from many resources other than $i$ such that they are matched to $i$ with very small probability. Thus, $\theta_i\approx \frac{1}{B_i}\,  \int_0^1g(x)\, dx$, and for $B_i\to+\infty$, we have 
\[E_{Y}[\lambda^Y_1]+\theta_i\approx   \left(\int_0^1 (1-g(x))\, dx + \frac{1}{B_i}  \int_0^1 g(x)\, dx \right)\to \int_0^1 (1-g(x))\, dx . \]
Notice that when $g(x)=e^{x-1}$, we have $\int_0^1 (1-g(x))\, dx =g(0)=1/e< 0.5$. 

Now, let $\opt$ denote the offline solution and let $\opt_i$ denote the set of arrivals matched to $i$ in $\opt$. In contrast to the primal-dual scheme, the LP free scheme 
imposes the following linear combination of the LP constraints,
\[\sum_{t\in\opt_i}\lambda_t + B_i\, \theta_i \geq \alpha \,  \opt_i,\]
where we summed constraints (i) over all arrivals in $\opt_i$ and scaled $\theta_i$ in accordance with the LP free system (the counterpart of constraint (ii) in the LP free system replaces $B_i\theta_i$ with $\theta_i$). Since $i$ is never matched to any arrival after the first one, we have that, $\lambda_t \geq b_{i,t}\,  (1-g(0))\,\, \forall t\geq 2$. Thus,
\[\sum_{t\in\opt_i}\lambda_t + \theta_i \geq  (1-1/e)\, \left(\sum_{t\in \opt_i,\, t\geq 2} b_{i,t}\right) +  \int_0^1 g(x)\, dx\, \geq\,  (1-1/e)\,\opt_i.\]
\section{Additional Examples to Demonstrate Main Analytical Challenge}\label{appx:challenge}
The first example shows that for \alg, even in the small bids regime, the total used budget of a resource $i$ can decrease as $y_i$ decreases and the bid prices of resource $i$ increase. The example also demonstrates that \alg\ and $\falg$ can output a very different matching on the same instance.
\begin{eg}\label{integfrac}
\emph{
	Consider an instance with $n$ resources $\{1,\cdots,n\}$, with 
	budget $n-1$ 
	for resources $j\in[n-1]$ and budget $(n-1)^{1.98}$ for resource $n$. 
	We focus on a snippet of this instance by considering arrivals $\{t+1,\cdots,t+2n-2\}\subset T$. We execute Algorithm \ref{rank} (with $\beta=1$) on this instance with seed $Y_{\mns n}$ for resources $j\in[n-1]$ fixed, and observe the change in output as seed $y_n$ varies. Suppose that exactly 1 unit of budget is available at arrival $t+1$ for every resource $j\in[n-1]$ and every value of $y_n\in[0,1]$. Further, $\forall y_n\in[0,1]$, suppose that all of resource $n$'s budget is available at $t+1$. The bids are as follows. 
\begin{enumerate}[(i)]
	\item Arrival $\tau\in\{t+1,\cdots,t+n-1\}$ bids $1$ for resource $j=\tau-t$, bids $2$ for resource $n$, and bids 0 for all other resources.  
	\item Arrival $\tau\in\{t+n,\cdots,t+2n-2\}$ bids $(n-1)^{0.99}$ for resource $j=\tau-(t+n-1)$, bids $(n-1)^{0.98}$ for resource $n$, and bids 0 for all other resources. 
\end{enumerate} 
Let $n\to +\infty$ so that we are in the small bid regime. Now, observe that there exists some $c\in(0,1)$ such that when $y_n\in(c,1]$, \alg\ matches resource $n$ to the last $n-1$ arrivals and all of $n$'s budget is utilized. On the other hand, for $y_n\in(0,c)$, \alg\ matches resource $n$ only to the first $n$ arrivals and almost none of $n$'s budget is used. Thus, as $y_n$ decreases from 1 to 0 and the bid price of resource $n$ increases, less of $n$'s budget is used in \alg. This is despite the fact that \alg\ makes allocation greedily w.r.t.\ the bid prices.} 
\end{eg}
$\falg$, which is not budget oblivious, does not exhibit the above behavior when started at arrival $t+1$ with the same initial conditions. Since at $t+1$ only 1 unit of budget is available for resources $j\in[n-1]$, for all values of $y_n$, $\falg$ will match at most a tiny fraction of the last $n$ arrivals to resources $j\in[n-1]$. In other words, for all values of $y_n$, $\falg$ (fractionally) matches resource $n$ to the last $n$ arrivals and uses up almost the entire budget of $n$. Thus, the matchings output by the two algorithms differ substantially for $y_n\in(0,c)$. 

Next, we consider an example that 
illustrates the challenge with analyzing $\falg$. As we discussed in Section \ref{sec:mainchal}, the main difficulty stems from the fact that the ``dominance" property does not hold. Consider a resource $i$ and fix all seed values except $y_i$. To define the example we use the notion of critical thresholds and the quantity $B(y_i)$ defined in Section \ref{sec:addnote}. Recall that $B(y_i)$ is the cumulative bid on $i$ from all arrivals matched to $i$ in \opt\ that have critical threshold at least $y_i$. 
The ``dominance" property states that 
the amount of $i$'s budget used is always at least $B(y_i)$. 

\begin{eg}\label{fracbid}\emph{
Consider a snippet of an instance with 3 resources $\{1,\cdots,3\}$, budget $n$ for resources 1 and 2 and $1.5n$ for resource 3. 
Consider a sequence of $[2n]$ arrivals. For every arrival $t\in[2n]$, we have $b_{i,t}=1$ from $i\in\{1,2\}$. Bids from resource $3$ are as follows, $b_{3t}=1$ for $t\in[n]$ and $b_{3t}=0.5$ otherwise. Notice that matching all $2n$ arrivals to resource 3 exaclty utilizes the budget of the resource.}

\emph{Let $n\to +\infty$. 
We execute $\falg$ with $\beta=1$ on this instance with seeds $(y_1,y_2)$ for resources $\{1,2\}$ fixed, and observe the change in output as seed $y_3$ varies. In particular, let $y_1\in(1/8,1/4)$ and $y_2\in(3/4,1)$. 
\begin{enumerate}[(i)]
\item For $y_3=1$, the first $n$ arrivals are matched to resource 1 and the next $n$ to resource 2. 
\item For $y_3\in (1/4, 0.4)$, on the last $n$ arrivals, the bid price of resource 3 is higher than the bid price of resource 2. Therefore, the first $n$ arrivals are matched to resource 1 and the next $n$ to resource 3.
\item For $y_3\leq 1/8$, the first $n$ arrivals are matched to resource 3. However, the last $n$ arrivals are now matched to resource 1.
\end{enumerate} 
Notice that the critical threshold of every arrival w.r.t.\ resource 3 is at least 1/8. Thus, \[B(y_3)=1.5n\,\, \forall y_3<1/8.\] 
Despite this, only the first $n$ arrivals are matched to resource 3 when $y_3\leq 1/8$ and the total budget of resource 3 used in $\falg$ does not exceed $n$ for $y_3<1/8$. }
\end{eg}
\section{Proof of Lemma \ref{lambda}}\label{appx:lambda}
\begin{repeatlemma}[Lemma \ref{lambda}.]
	Given $i\in I$ and seed $Y_{-i}$, for every  $y_i\in[0,1], t\in T,$ and $\tau\in [t,t+1)$, we have
	\[\lambda_\tau(y_i)\, \geq\, \lambda_{\tau}(1)\,\geq\,  b_{i,t}(1-g(y^c_{i,\tau})).\]
\end{repeatlemma}
\begin{proof}{Proof.}
	Given $i$ and $Y_{-i}$, consider an arbitrary seed $y_i \in[0,1]$ and arrival $t\in T$. By definition of $y^c_{i,\tau}$, we have
	$\lambda_\tau (1)\geq b_{i,t}\, (1-g(y^c_{i,\tau}))$.
	It remains to show that $\lambda_\tau(y_i)\geq \lambda_\tau(1)$. We claim that this follows from, 
	\[I(\tau,1)\backslash \{i\}\subseteq I(\tau,y_i).\]
	To see this, observe that given the above nesting of sets, we have
	\[\lambda_\tau(y_i)\,=\, \max_{j\in I(\tau,y_i)} b_{j,t}(1-g(y_j))\,\geq\, \max_{j\in I(\tau,1)\backslash\{i\}} b_{j,t}(1-g(y_j))\,=\, \max_{j\in I(\tau,1)} b_{j,t}(1-g(y_j))= \lambda_\tau(1).\]
	We prove that $I(\tau,1)\backslash\{i\}\subseteq I(\tau,y_i)$ by contradiction. Let $y_i$ be such that $ I(\tau, 1)\backslash\{i\} \not\subset I(\tau,y_i)$. Let $\tau_1\leq \tau$ be the earliest moment such that there exists a resource $i_1\in I(\tau_1,1)\backslash (I(\tau_1,y_i)\cup\{i\}),$ i.e., $i_1$ ($\neq i$) is available at $\tau_1$ in $\falg(1)$ but unavailable at $\tau_1$ in $\falg(y_i)$. This occurs only if $i_1$ is matched at some moment $\tau_0<\tau_1$ in $\falg(y_i)$ but not matched at $\tau_0$ in $\falg(1)$. Now, the following statements are true.
	\begin{enumerate}[(i)]
		\item $I(\tau_0,1)\backslash\{i\}\subseteq I(\tau_0,y_i)$. This follows from the definition of $\tau_1$ and the fact that $\tau_0<\tau_1$.
		\item $i_1\in I(\tau_0,1)\backslash\{i\}$. Follows from $i_1\in I(\tau_1,1)\backslash\{i\}$ and the fact that $I(\tau_1,1)\subseteq I(\tau_0,1)$.
	\end{enumerate}
	Overall, we have $i_1\in I(\tau_0,1)\backslash\{i\}\subseteq I(\tau_0,y_i)$.
	At every arrival, $\falg$ picks an available resource with highest bid price. So if 
	$\tau_0$ is matched to $i_1$ in $\falg(y_i)$, then from $(i)$ and $(ii)$ it must also be matched to $i_1$ in $\falg(1)$ (the bid price of $i_1$ at $\tau_0$ does not change). This contradicts the definition of $\tau_0$.
	\hfill\Halmos\end{proof}
\section{Obstacle to Analyzing \alg\ Directly}\label{appx:whyfrac} 
{\color{black}
Let $I(t,Y)$ denote the set of resources available at $t$ in \alg. Consider the candidate solution,
	\begin{eqnarray}
	&&\lambda_t = E_Y\left[\lambda_\tau(Y)\right]\quad \forall t\in T  \text{ and  }\quad
	\theta_i=E_Y[\theta_i(Y)]\quad  \forall i\in I,
	\label{gdualdef}
\end{eqnarray}
where,
\begin{eqnarray}
	&& \lambda_t(Y)= 
	\max_{j\in I(t,Y)} b_{j,t}(1-g(y_j))\qquad  \forall t\in T,\, Y\in [0,1]^n,\label{glambda1}\\
	&& \theta_i(Y)= \alg_i(Y)\, g(y_i)\qquad \forall i\in I,\, Y\in [0,1]^n.\label{gtheta1}
\end{eqnarray}
This definition is a natural generalization of the candidate solution used by \cite{devanur} to analyze \alg\ for vertex weighted online bipartite matching.

On every sample path $Y$, $\lambda_t(Y)>0\,\, \forall t\in T$ and $\theta_i(Y)>0\,\, \forall i\in I$. Thus, constraints \eqref{dual3} are  satisfied. We claim (without proof) that Lemmas \ref{dualreward} and \ref{lambda} from the analysis of $\falg$, can be generalized for \alg\ as stated below. 
\begin{lemma}
	For the candidate solution given by \eqref{dualdef}, we have,
	\[\sum_{i\in I} \theta_i+\sum_{t\in T} \lambda_t \leq \sum_{i\in I}\left(1+\max_{t\in T}\frac{b_{i,t}}{B_i}\right)
	\alg_i, \]
	and	constraint \eqref{dual1} 
	is satisfied with $\varepsilon=1+ \gamma$. 
\end{lemma}
\begin{lemma}
	Given $i\in I$ and seed $Y_{-i}$, for every  $y_i\in[0,1], t\in T$, we have
	\[\lambda_t(y_i)\, \geq\,  \lambda_{t}(1)\,\geq\,  b_{i,t}(1-g(y^c_i(t))).\]
\end{lemma}
Unfortunately, Lemma \ref{alglbl} need not hold for \alg\ and this is the main reason that we first analyze $\falg$ and then link the performance of $\alg$ with $\falg$. In the proof of Lemma \ref{alglbl}, the main step that fails is the one that links $\theta_i(y_i)$ to the difference $\lambda_{net}(y_i)-\lambda_{net}(1)$.  In \alg, we have that for every seed value $y\in[0,1]$ of resource $i$,
\[\lambda_{net}(y)\geq \alg_i(y)\,(1-g(y))+\sum_{j\in I\backslash\{i\}} \alg_j(y)\, (1-g(y_j)). \]
Notice that this is an inequality whereas in case of $\falg$ the left hand side equals the right hand side. In \alg, the value of $\lambda_{net}(y)$ may be strictly larger than the right hand side because \alg\ may overestimate the remaining budget of resources. In particular, when $i$ is matched to arrival $t$, the remaining budget of $i$ may be smaller than the bid $b_{i,t}$. However, we set $\lambda_t(y)=b_{i,t} (1-g(y))$, which may be  larger than the reward from matching $i$ to $t$. Now, consider the difference
\[\lambda_{net}(y_i)-\lambda_{net}(1)=(1-g(y_i))\sum_{t\in \falg_i(y_i)}b_{i,t}\,  +\sum_{j\in I\backslash\{i\}} \left[\sum_{t\in \falg_j(y_i)} b_{j,t}-\sum_{t\in \falg_j(1)}b_{j,t}\right]\, (1-g(y_j)), \]
where we used the fact that $g(1)=1$ for every value of $\beta>0$. In the analysis of $\falg$, we show that the second term is non-positive. However, in \alg\ the second term may be strictly positive. In particular, it is possible that for $j\in I\backslash\{i\}$, 
\[\alg_j(1)=\sum_{t\in \falg_j(1)}b_{j,t}=B_j\] 
but 
\[\sum_{t\in \falg_j(y_i)}b_{j,t}>B_j.\] 
In general, the increase in $\lambda_{net}(y_i)-\lambda_{net}(1)$ is not due to $\alg_i(y_i)$ alone and Lemma \ref{alglbl} does not hold for \alg.
}
\section{Proof of Theorem \ref{gpgub}}\label{appx:gpgub}
\begin{repeattheorem}[Theorem \ref{gpgub}.]
For every bid-to-budget ratio $\gamma\in(0,1]$ and $\beta\geq 1$, 
the competitive ratio of \alg\ is at most $0.624$.
\end{repeattheorem}
\begin{proof}{Proof.}
Consider a family of instances parameterized by $\epsilon\geq 0$, integer budget to bid ratio $m=\frac{1}{\gamma}\geq 1$, and $\beta\geq1$. Each instance in the family has $n+1$ resources. There are $2mn$ arrivals and each arrival bids $\frac{1}{m}$ on every resource in $\{1,2,\cdots,n\}$. Arrival $t\in[mn]$ bids \[w(t,\epsilon)=\frac{1-e^{\beta(\frac{t}{n+1}-1})}{m(1-e^{-\beta(1-\epsilon)})}\] 
on resource $n+1$. Arrivals $t\geq mn+1$ do not have an edge to resource $n+1$. Resources $\{1,2,\cdots,n\}$ have unit budget and resource $n+1$ has a budget of $\sum_{t\in[mn]} w(t,\epsilon)$. The optimal matching allocates resource $n+1$ to every $t\in[mn]$ and matches arrival $t$ to resource $\ceil{\frac{t}{m}}-n$ for $t\geq mn+1$. We have,
\[\opt=n+\sum_{t\in[mn]} w(t,\epsilon),\]
here $\lim_{n\to+\infty}\frac{1}{n}\sum_{t\in[mn]} w(t,0)\to \frac{e^\beta (1-\beta^{-1}) + \beta^{-1}}{e^\beta-1}$. Therefore, for $\epsilon=0$, we have, $\lim_{n\to+\infty}\frac{1}{n}\opt\to \frac{e^\beta (2-\beta^{-1})  -1+\beta^{-1}}{e^\beta-1}$. Let 
\[H(\beta)=\frac{e^\beta (2-\beta^{-1})  -1+\beta^{-1}}{e^\beta-1}.\] 
Observe that $H(1)=\frac{e}{e-1}$ and $H(\beta)>\frac{e}{e-1}$ for $\beta\geq1$. 

To understand the high level idea, consider $\epsilon=0$ and $\beta=1$. Let $y_i$ denote the random seed of resource $i\in[n+1]$ in \alg. 
For $k\in[n]$, let $z_k$ denote the $k$-th order statistic of the random variables $\{y_i\}_{i\in[n]},$ i.e., the $k$-th smallest seed value. For $k\in[n]$, random variable $z_k$ has mean $\frac{k}{n+1}$. Let $I(t)$ denote the (random) set of resources available at $t$ in \alg. 
Resource $n+1$ is matched to arrival $t\in[nm]$ if \[w(t,0)(1-e^{y_{n+1}-1})>\max_{i\in I(t)} \frac{1}{m}(1-e^{y_i-1})\geq \frac{1}{m}(1-g(z_t)),\]
here the last inequality follows from the fact that the $t$-th lowest seeded resource in $[n]$ will not be matched to any of the arrivals preceding $t$. For the sake of intuition, suppose that for $n\to+\infty$, we have $z_{\ceil{\frac{t}{m}}}\approx E[z_{\ceil{\frac{t}{m}}}]=\frac{{\ceil{\frac{t}{m}}}}{n+1}\,\, \forall t\in [mn]$, with high probability (w.h.p.). Then, resource $n+1$ is matched to $t$ only if $y_{n+1}\approx 0$, which occurs with a vanishingly small probability. By linearity of expectation, the expected total consumed capacity of resource $n+1$ is $\approx 0$. Therefore, the expected value of $\frac{1}{n}\alg$ is $\approx 1$, 
and we have $\frac{\alg}{\opt}\approx(1-1/e)$.


For the rest of this proof, we are interested in the asymptotic case where $n\to+\infty$. We use $O(\cdot)$ notation to hide constants independent of $n$. Now, observe that for a fixed $t$, $w(t,\epsilon)$ is a strictly increasing function of $\epsilon$. When we increase $\epsilon$ from 0, the value of offline (\opt) increases by 
\begin{equation}\label{optinc}
\sum_{t\in[mn]} (w(t,\epsilon)-w(t,0))=\sum_{t\in[mn]} w(t,0)\frac{e^{-\beta}(e^{\epsilon\beta}-1)}{1-e^{\beta(-1+\epsilon)}}
=n(H(\beta)-1)\frac{e^{-\beta}(e^{\epsilon\beta}-1)}{1-e^{\beta(-1+\epsilon)}}=O(\epsilon n).
\end{equation}
We show that the expected total value of \alg\ increases by an amount that is significantly lower than $O(\epsilon n)$. 
Let $z_0=0$ and $z_{n+1}=1$. The minimum of all ranks (first order statistic), $z_1$, is a random variable with mean $\frac{1}{n+1}$ and variance $\frac{n}{(n+1)^2(n+2)}=\frac{O(1)}{n^2}$~\citep{orderstat}. Using Chebyshev's inequality,
\[\mathbb{P}\left(\Big|\,z_1-z_0-\frac{1}{n+1}\,\Big|\geq \frac{1}{n^{4/3}}\right)\leq \frac{O(1)}{n^{4/3}}.\]
In general, for every $k\in\{0,1,\cdots,n\}$, the random variable $z_{k+1}-z_{k}$ has mean $\frac{1}{n+1}$ and variance $\frac{n}{(n+1)^2(n+2)}=\frac{O(1)}{n^2}$~\citep{orderstat}. Again, using Chebyshev's inequality,
\[\mathbb{P}\left(\Big|\,z_{k+1}-z_{k}-\frac{1}{n+1}\,\Big|\geq \frac{1}{n^{4/3}}\right)\leq \frac{O(1)}{n^{4/3}}\qquad \forall k\in[n]\cup\{0\}.\]
Applying the union bound, we have,
\[\mathbb{P}\left(\bigcup_{k\in[n]\cup\{0\}}\left\{\Big|\,z_{k+1}-z_{k}-\frac{1}{n+1}\,\Big|\geq \frac{1}{n^{4/3}}\right\}\right)\leq \frac{O(1)}{n^{1/3}}.\]
We have that for large $n$, the following event occurs w.h.p., 
\begin{equation}\label{goode}
\Big|\,z_{k+1}-z_{k}-\frac{1}{n+1}\,\Big|<\frac{1}{n^{4/3}}\qquad \forall k\in[n]\cup\{0\}.
\end{equation}
In the rest of this proof, we condition on this high probability event. Consequently, we have
\[\Big|z_{\ceil{\frac{t}{m}}}-\frac{\ceil{\frac{t}{m}}}{n+1}\Big|\leq \frac{\ceil{\frac{t}{m}}}{n^{4/3}}\qquad \forall t\in[mn].\]
We will show that in expectation, \alg\ uses a very small fraction of the budget of resource $n+1$. For any given $t\leq mn-mn^{3/4}$, resource $n+1$ is matched to arrival $t$ if,
\begin{eqnarray*}
m\,w(t,\epsilon)(1-g(y_{n+1}))&> &1-g(z_{\ceil{\frac{t}{m}}}),\\
\frac{1-e^{\beta\left(\frac{{\ceil{\frac{t}{m}}}}{n+1}-1\right)}}{1-e^{\beta(-1+\epsilon)}}(1-e^{\beta(y_{n+1}-1)})&>&1-e^{\beta\left(\frac{{\ceil{\frac{t}{m}}}}{n+1}+\frac{t}{n^{4/3}}-1\right)},\\
\Rightarrow\quad  y_{n+1}&< &\epsilon + \frac{O(1)}{n^{1/3}}.
\end{eqnarray*} 
For large enough $n$, resource $n+1$ is not matched to any arrival in \alg\ when $y_{n+1}>\epsilon + \frac{1}{n^{1/4}}$. In the following discussion, we condition on the event,
\begin{equation}\label{goode2}
y_{n+1} \leq \epsilon + \frac{1}{n^{1/4}}, 
\end{equation}
which occurs w.p.\ $\epsilon + \frac{1}{n^{1/4}}$ (independent of event \eqref{goode}). Finally, we claim that resource $n+1$ is matched to at most $\epsilon\, m(n+1) +mn^{3/4},$ i.e., $O(\epsilon mn)$, of the first $mn$ arrivals. To prove this, it suffices to show that for every $t\geq m(1+\epsilon(n+1)+n^{3/4})$, resource $\tau(t)={\ceil{\frac{t}{m}}}-n^{3/4}-\epsilon(n+1)$ is matched to one of the first $mn$ arrivals w.p.\ 1. Consider an arbitrary $t\geq m(1+\epsilon(n+1)+n^{3/4})$. Resource $\tau(t)$ is not matched to any arrival prior to $t$ if,
\begin{eqnarray*}
\frac{1-e^{\beta\left(\frac{{\ceil{\frac{t}{m}}}}{n+1}-1\right)}}{1-e^{\beta(-1+\epsilon)}}(1-e^{\beta(y_{n+1}-1)})\,&>&\, 1-e^{\beta\left(\frac{\tau(t)}{n+1}+\frac{{\ceil{\frac{t}{m}}}}{n^{4/3}}-1\right)},\\
&>&\, 1-e^{\beta\left(\frac{{\ceil{\frac{t}{m}}}}{n+1}-1-\epsilon\right)},
\end{eqnarray*}
which does not occur for any value of $y_{n+1}$. Overall, conditioned on events \eqref{goode} and \eqref{goode2}, the best that \alg\ can do is to match the first $O(\epsilon mn)$ arrivals to resource $n+1$, using up $O(\epsilon n)$ of the resource's budget. More precisely, w.p.\ 1, \alg\ uses at most,
\[n\frac{\int_0^\epsilon (1-e^{\beta(u-1)})du}{1-e^{\beta(-1+\epsilon)}} + o(n),\]
of resource $n+1$'s budget.
Unconditioning on events \eqref{goode} and \eqref{goode2}, the expected budget of resource $n+1$ used in \alg\ is $O(\epsilon^2 n)$. Thus,
\[\frac{\alg}{n}\leq 1+O(\epsilon^2)+o(1).\]
From \eqref{optinc}, we have
\[\frac{\opt}{n}\geq H(\beta)+O(\epsilon)+o(1).\]
Combining these inequalities, for $n\to+\infty$, we have $\frac{\alg}{\opt}\leq \frac{1}{H(\beta)}< (1-1/e)$ for some $\epsilon>0$ and any $\beta\geq 1$. As $m$ is arbitrary, the upper bound holds for small bids as well. 
We can obtain a stronger upper bound by considering a value of $\epsilon$ that minimizes the ratio in the limit $n\to+\infty,$ i.e.,
\[\min_{\epsilon\geq 0} \lim_{n\to+\infty}\frac{\alg}{\opt}\leq \min_{\epsilon\geq 0}\frac{1+\epsilon \frac{\int_0^\epsilon (1-e^{\beta(u-1)})du}{1-e^{\beta(-1+\epsilon)}}}{H(\beta)+(H(\beta)-1)\frac{e^{-\beta}(e^{\epsilon\beta}-1)}{(1-e^{\beta(-1+\epsilon)})}}<0.624\quad \forall \beta\geq 1,\]
where the final inequality follows by setting $\epsilon\approx 0.1336$.

\hfill\Halmos	
\end{proof}
\section{Proof of Theorem \ref{decomp}}
\label{appx:decompose} 
To prove Theorem \ref{decomp}, we first need the following intermediate result for $\falg$.
\begin{lemma}\label{lemdec}
Given $i\in I$, seed $Y_{-i}$ in $\falg$, and decomposable bids $b_{i,t}\in \{0,b_ib_t\}\,\, \forall i\in I, t\in T$, we have, 
\begin{eqnarray}
	&&E_{y_i}\left[\min\{\theta_i(y_i),g(y_i)\opt_i\}\mid Y_{-i}\right]+\sum_{t\in \opt_i} E_{y_i}[\lambda_t(y_i)\mid Y_{-i}] \, \geq\,  \sum_{v\in V}b(v)\Bigg[ 1-g(v)+
	\int_0^{v} g(x)dx  \Bigg].\nonumber
\end{eqnarray}
\end{lemma}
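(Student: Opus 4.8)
The plan is to revisit the two-case argument of Lemma \ref{sum} under the additional hypothesis that bids decompose as $b_{it}\in\{0,b_ib_t\}$, and to show that in this special case the ``bad'' Case II of Lemma \ref{sum} essentially cannot hurt us — more precisely, that whenever $\falg_i(y_i)<B(y_i)$ we actually get $\falg_i(y_i)\ge \sum_{v\in V,\,v\ge y_i} b(v)$, rather than the weaker bound $\sum_{v\ge y_i} b(v)\,\frac{g(v)-g(y_i)}{1-g(y_i)}$ from Lemma \ref{alglbl}. The key structural fact I would establish is: \emph{with decomposable bids, for every $y_i$ either $\falg_i(y_i)=B_i$ (so Case I applies), or $\falg_i(y_i)\ge B(y_i)$.} Intuitively, when bids decompose, the critical threshold $y^c_i(\tau)$ for a moment $\tau$ is determined purely by the competing resources' state and their seeds, not by how much of $i$ has already been consumed; and if $y_i\le y^c_i(\tau)$ then $i$ has the strictly highest bid price at $\tau$, so $\falg$ must route $\tau$ to $i$ unless $i$ is exhausted. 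Since $S(y_i)$ consists of exactly the moments $\tau\in\opt_i$ with $y_i\le y^c_i(\tau)$ and $|S(y_i)|=B(y_i)$, routing all of $S(y_i)$ to $i$ forces $\falg_i(y_i)\ge B(y_i)$; and if $i$ is exhausted at some such moment then $\falg_i(y_i)=B_i\ge B(y_i)$ anyway. This is the place where the Adwords-with-unknown-budgets difficulty (non-monotonicity illustrated in Example \ref{fracbid}) genuinely disappears, because Example \ref{fracbid} relied on non-decomposable bids.

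Granting that claim, the rest is a clean integration. Fix $i$ and $Y_{-i}$. For every $y_i$ we then have $\falg_i(y_i)\ge B(y_i)=\sum_{v\in V,\,v\ge y_i} b(v)$, hence
\[
\theta_i(y_i)=r_i\,\falg_i(y_i)\,g(y_i)\ \ge\ r_i\, g(y_i)\sum_{v\in V,\,v\ge y_i} b(v)\ =\ r_i\sum_{v\in V} b(v)\,\onee(y_i\le v)\,g(y_i).
\]
Combining with the pointwise lower bound on $\sum_{t\in\opt_i}\int_t^{t+1}\lambda_\tau(y_i)\,d\tau$ from Corollary \ref{lambdaprocess}/Lemma \ref{lambda}, namely $\frac1{r_i}\sum_{t\in\opt_i}\int\lambda_\tau(y_i)\,d\tau\ge \sum_{v\in V} b(v)(1-g(v))$, we obtain for each fixed $y_i$
\[
\frac1{r_i}\Big(\sum_{t\in\opt_i}\int_{\tau=t}^{t+1}\lambda_\tau(y_i)\,d\tau+\theta_i(y_i)\Big)\ \ge\ \sum_{v\in V} b(v)\big[\,1-g(v)+\onee(y_i\le v)\,g(y_i)\,\big].
\]
Taking $E_{y_i}[\cdot\mid Y_{-i}]$ and using $\int_0^1 \onee(y\le v)\,g(y)\,dy=\int_0^v g(y)\,dy$ yields exactly the claimed inequality.

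The main obstacle I anticipate is making the structural claim fully rigorous, since $\falg$'s allocation is a continuous-time process and ``$i$ has the highest bid price at $\tau$'' must be tracked across the entire horizon $[1,\tau)$, not just at $\tau$ in isolation. The careful version should mirror the contradiction argument in Lemma \ref{lambda}: one shows by a ``first bad moment'' argument that, as long as $i$ is never exhausted, the set of moments routed to $i$ in $\falg(y_i)$ is nondecreasing as $y_i\downarrow$, and in particular contains every $\tau$ with $y^c_i(\tau)\ge y_i$; here decomposability is used precisely to guarantee that $i$'s bid price ordering relative to the other resources at such a $\tau$ is unaffected by $i$'s own consumption history, which is what can fail in the general Adwords setting. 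A secondary (routine) point is handling the boundary conventions $y^c_i(\tau)\in\{0,1\}$ and the measure-zero seed sets on which ties occur, exactly as in the proofs of Lemma \ref{lambda} and Lemma \ref{alglbl}.
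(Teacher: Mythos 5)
Your overall skeleton is the same as the paper's: reduce the lemma to the single structural claim $\falg_i(y_i)\geq B(y_i)$ for all $y_i$, deduce $\theta_i(y_i)\geq r_i\sum_{v\in V}b(v)\,\onee(y_i\leq v)\,g(y_i)$, add the pointwise bound $\sum_{t\in\opt_i}\int_t^{t+1}\lambda_\tau(y_i)\,d\tau\geq r_i\sum_{v\in V}b(v)(1-g(v))$ from Lemma \ref{lambda}, and integrate over $y_i$. That part of your argument is correct and matches the paper. The gap is that the structural claim itself --- which is the entire mathematical content of the lemma --- is only asserted, and the intuition you give for it is wrong as stated. You argue that ``if $y_i\le y^c_i(\tau)$ then $i$ has the strictly highest bid price at $\tau$, so $\falg$ must route $\tau$ to $i$ unless $i$ is exhausted.'' But $y^c_i(\tau)$ is defined as the threshold at which $i$'s bid price matches the maximum over $I(\tau,1)$, the set of resources still available at $\tau$ in $\falg(1)$. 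Lemma \ref{lambda} gives only the one-sided containment $I(\tau,1)\backslash\{i\}\subseteq I(\tau,y_i)$: lowering $y_i$ can \emph{add} resources to the available set at $\tau$ (resources that were exhausted by $\tau$ in $\falg(1)$ but spared in $\falg(y_i)$ because $i$ absorbed their arrivals earlier), and such a resource can outbid $i$ at $\tau$ even though $y_i\le y^c_i(\tau)$. So $i$ having the highest bid price among $I(\tau,1)$ does not imply it has the highest bid price among $I(\tau,y_i)$, and your claim does not follow from a local comparison at $\tau$.

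Your diagnosis of where decomposability enters is also off: the bid price $b_{it}r_i(1-g(y_i))$ never depends on $i$'s consumption history in this algorithm, so ``unaffected by $i$'s own consumption history'' is not the point. What decomposability actually buys is that the pairwise comparison $b_{it}r_i(1-g(y_i))$ versus $b_{jt}r_j(1-g(y_j))$ has the common factor $b_t$ cancel, so the preference order between resources is the \emph{same at every arrival}. The paper exploits exactly this: if $\tau\in\opt_i$ with $y^c_i(\tau)\geq y_0$ is matched in $\falg(y_0)$ to some $i_1$ instead of $i$, then $i_1\succ i$ globally; $i_1$ must be absent from $I(\tau,1)$ (else it would contradict the definition of $y^c_i(\tau)$), so some earlier moment matched to $i_1$ in $\falg(1)$ is matched in $\falg(y_0)$ to some $i_2\succ i_1$, and iterating produces a strictly increasing chain $i\prec i_1\prec i_2\prec\cdots$ of distinct resources, which is impossible since $I$ is finite. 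Your proposed ``first bad moment / monotonicity in $y_i$'' route is not developed enough to substitute for this: monotonicity of the set matched to $i$ as $y_i$ decreases starts from the empty set at $y_i=1$ (where $1-g(1)=0$), so by itself it yields nothing about which moments are eventually captured; the descent-through-better-resources argument (or an equivalent) is genuinely needed and is missing from your proposal.
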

\begin{proof}{Proof.}
It suffices to show that,
\begin{equation}\label{thetalb}
	\theta_i(y_i) \geq  B(y_i)\, g(y_i)\quad \forall y_i\in[0,1).
\end{equation}
Given this inequality, using the fact that $\opt_i\geq B(y_i)\,\, \forall y_i\in[0,1]$ and taking expectation over $y_i\sim U[0,1]$, we have,
\begin{equation*}
	E_{y_i}\left[\min\{\theta_i(y_i),g(y_i)\opt_i\}\mid Y_{-i}\right] \geq \sum_{v\in V}\left(b(v)\,\int_0^{v} g(x)dx\right).
\end{equation*}
Combining the inequality above with the lower bound in Lemma \ref{lambda} gives us the desired.

Notice that, if $b_i=0$, then \eqref{thetalb} is trivially true. Further, all arrivals $t\in T$ where $b_t=0$ can be ignored. W.l.o.g., let $b_i>0$ and $b_{t}>0\,\, \forall t\in T$. Now, it suffices to show that 
\[\falg_i(y_i)\geq B(y_i)\,\,\forall y_i\in[0,1).\]
Given these inequalities, \eqref{thetalb} follows by definition of $\theta_i(y_i)$ (see \eqref{theta1}).  For the sake of contradiction, consider a value $y_i=y_0\in[0,1)$ such that $\falg_i(y_0)<B(y_0)\leq B_i$. Then, there exists an arrival $\tau\in \opt_i$ such that, $y_i^c(\tau)\geq y_0$, but $\tau$ is not matched to $i$ in $\falg(y_0)$. Since, $\falg(y_0)<B_i$, resource $i$ is available at $\tau$. Therefore, in $\falg(y_o)$, $\tau$ is matched to 
a resource $i_1$ such that, 
\[b_{i_1,\tau}(1-g(y_{i_1}))\,\geq\, b_{i,\tau}(1-g(y_0))\,\geq \,b_{i,\tau}(1-g(y^c_i(\tau))),\]
here the first inequality is strict w.p.\ 1 (ties occur w.p.\ 0). Using the decomposability of bids, we have (w.p.\ 1), 
\[b_{i_1}(1-g(y_{i_1}))>b_i(1-g(y_0)).\] 
Therefore, $i_1$ is preferred over $i$ at all arrivals. We say that $i_1$ is \emph{better} than $i$, or $i_1\succ i$ (in $\falg(y_0)$).  Now, notice that $i_1$ must be unavailable at $\tau$ in $\falg(1)$. Therefore, there exists an arrival $\tau_{i_1}$ prior to $\tau$ such that, $\tau_{i_1}$ is matched to $i_1$ in $\falg(1)$ but in $\falg(y_0)$, $\tau_{i_1}$ it is matched to a resource $i_2$ that is better than $i_1$. Repeating this argument a number of times, we get a sequence of arrivals $\tau>\tau_{i_1}>\cdots> \tau_{i_k}$ and resources $i\prec i_1\prec\cdots\prec i_k$. The number of resources is finite, so the sequence must terminate and, w.l.o.g., $i_k=i_\ell$ for some $\ell<k\leq n$, contradiction. 

\hfill\Halmos\end{proof}
\begin{repeattheorem}[Theorem \ref{decomp}.]
\alg\ with $\beta=1$ is $\frac{1}{1+\gamma}(1-1/e)$ competitive for OBA when the bids are decomposable, i.e., $b_{i,t}\in\{0,b_i\times b_t\}\,\, \forall i\in I, t\in T$. For the special case of OBA with integer starting capacities and binary bids, \alg\ is $(1-1/e)$ competitive for arbitrary bid-to-budget ratio $\gamma$.
\end{repeattheorem}
\begin{proof}{Proof.}
For $\falg$, when $g(x)=e^{x-1}$ we have $\forall i\in I,\, Y_{-i}\in[0,1]^{n-1}$,
\begin{eqnarray*}
	\sum_{t\in \opt_i} E_{y_i}[\lambda_t(y_i)\mid Y_{-i}]+ E_{y_i}[\theta_i(y_i)\mid Y_{-i}]
	&\geq & \sum_{v\in V}b(v)\left( 1-e^{-v}+
	e^{-v}-e^{-1} \right),\\
	&= &(1-e^{-1})\,\sum_{v\in V}b(v),\\
	&= &(1-e^{-1})\,\opt_i.
\end{eqnarray*} 
Now, combining this with Lemma \ref{dualreward} we have that $\falg$ with $\beta=1$ is $(1-1/e)$ competitive for decomposable bids. Using Theorem \ref{guafalg}, we have that \alg\ is $\frac{1}{1+\gamma} (1-1/e)$ competitive for decomposable bids.  The case of binary bids is a special case of decomposable bids where \alg\ and $\falg$ are identical. Thus, \alg\ with $\beta=1$ is unconditionally $(1-1/e)$ competitive for binary bids.

\hfill\Halmos\end{proof}
\section{Missing Proofs for Online Matching with Multi-Channel Traffic}\label{appx:multibase}
\noindent \textbf{Candidate solution for FRP:} Recall the candidate solution for FRP. For resource $i\in I$, let $o^\xtr_{i}$ be the total budget of $i$ allocated to external arrivals in \opt. Let $x^\xtr_i$ be the expected budget of $i$ allocated to external arrivals in $\falg$. Let $o^\ntr_i$ and $x^\ntr_i$ be the (expected) budget of $i$ allocated to internal arrivals in \opt\ and in $\falg$, respectively. Finally, let 
\[\zeta=\frac{\sum_{i\in I} (x^\ntr_i+x^\xtr_i)}{\sum_{i\in I}(o^\ntr_i+o^\xtr_i)}.\] 
Observe that $\zeta$ is equal to the performance ratio $\frac{\alg}{\opt}$. 
We start by noting that for $\delta>0$, all constraints (including \eqref{lpfree}) are well defined.  In particular, for $\delta>0$, using Lemma \ref{simplify}$(i)$ we have $\sum_{i\in I} o^\xtr_i >0$ on every non-trivial instance of the problem. By using the structure of the instance and borrowing ideas from \cite{multi}, it can be shown that the candidate solution satisfies constraints \eqref{cap}--\eqref{lx}. 
The crux of our analysis is to show that inequality \eqref{lpfree} is satisfied and we show this in the next section. The rest of the proof of Theorem \ref{crlb} is included in Appendix \ref{appx:multicr}.
\subsection{Proof of Inequality \eqref{lpfree}} \label{sec:multianalysis}
We start with some notation. Recall that $T^\ntr$ and $T^\xtr$ denote the set of internal arrivals and external arrivals, respectively. 
On sample path $Y$ in $\falg$, let $x^\xtr_i(Y)$ and $x^\ntr_i(Y)$ denote the budget of $i$ allocated to external and internal arrivals, respectively. Also, on sample path $Y$ in \alg, let $I(t,Y)$ denote the set of neighboring resources available at $t$. \alg\ may match more than $o^\ntr_i$ internal arrivals to $i$ on some sample paths. To capture these ``excess" matches we define, 
\[xs_i(Y)=\max\{x^\ntr_i(Y)-o^\ntr_i,\, 0\}\qquad \forall i\in I.\]
Excess matches play a crucial role in proving \eqref{lpfree}. In fact, using the LP free framework, we first establish a lower bound on $\zeta\sum_{i\in I} (o^\ntr_i+o^\xtr_i)$ in terms of the expectation $E_Y[\sum_{i\in I}xs_i(Y)\, g(y_i)]$. 
Given a candidate solution to the FRP, the random variables $\{xs_i(Y)\}_{i\in I}$ and their expected values are not uniquely defined. We address this by selecting the ``worst" possible way to define these random variables for any given candidate solution. 
To obtain a lower bound on $E_Y[\sum_{i\in I}xs_i(Y)\, g(y_i)]$ we now state two useful lemmas. The first lemma gives a lower bound on $xs_i(Y)$ in terms of $x^\xtr_i(Y)$, which also a random variable that is not uniquely defined (given a candidate solution). The second lemma gives a result for real valued functions that is reminiscent of the calculus of variations.
\begin{lemma} \label{xs}
	On every sample path $Y,$ $xs_i(Y)\geq o^\xtr_i-x^\xtr_i(Y)\quad \forall i\in I.$
\end{lemma}
\begin{proof}{Proof.}
	Suppose that every external arrival with an edge to $i$ is matched in $\falg(Y)$, then from Lemma \ref{simplify}$(i)$ we have, $x^\xtr_i(Y)= o^\xtr_i$, and the claim follows. If an external arrival with edge to $i$ is unmatched in $\falg(Y)$, then all of $i$'s budget is consumed in $\falg(Y),$ i.e., $x^\ntr_i(Y) +x^\xtr_i(Y)=B_i$. Using $B_i\geq o^\ntr_i+o^\xtr_i$ and the fact that $xs_i(Y)\geq x_i^\ntr(Y)-o^\ntr_i$,  completes the proof. 
	\hfill\Halmos\end{proof}
\begin{lemma} \label{func}
Given $g(x)=e^{x-1}$, $G(x)=\int_0^x g(u) du$, real values $h_0, H_0>0$ and an integrable function $h:[0,1]\to[0,H_0]$ with $\int_0^1 h(u) du=h_0$, we have \[\int_0^1 h(u)g(u)du\,\leq\, H_0\left(G(1)-G\left(1-\frac{h_0}{H_0}\right)\right).\]
\end{lemma}
\begin{proof}{Proof.}
	Consider the function 
	\[ f(u)=\begin{cases}
		0 \;\;\;\;\quad u\in[0,1-\frac{h_0}{H_0}),\\
		H_0 \quad u\in[1-\frac{h_0}{H_0},1].
	\end{cases} \]
	Observe that $\int_0^1 f(u) du=h_0$ and $\int_0^1 f(u)g(u)du=H_0\left(G(1)-G(1-\frac{h_0}{H_0})\right)$.
	It suffices to show that, $\int_0^1 h(u)g(u)du\leq \int_0^1 f(u)g(u)du$. Let $\mathcal{H}$ denote the set of all integrable functions $h:[0,1]\to [0,H_0]$ with $\int_0^1h(u)du=h_0$. Let \[h^*=  \sup_{h\in \mathcal{H}}\int_0^1 h(u)g(u) du.\]
	Now, if $\int_0^1f(u)g(u)du < \int_0^1h^*(u)g(u)du $, then there exists a closed interval $I_1\subseteq [0,1-\frac{h_0}{H_0})$ with non-zero measure and a closed interval $I_2\subseteq[1-\frac{h_0}{H_0},1]$ with non-zero measure such that, 
	\[h^*(u)>0\,\, \forall u\in I_1\quad \text{  and   }\quad h^*(u)<H_0\,\, \forall u\in I_2.\] 
	Let $\epsilon=\min\left\{\min_{u\in I_i} h^*(u),\min_{u\in I_2}H_0-h^*(u)\right\}$. Let $|I|$ denote the length/measure of an interval $I$. Consider function $h'$ such that $h'(u)=h^*(u)-\epsilon|I_1|^{-1}\,\, \forall u\in I_1$, $h'(u)=h^*(u)+\epsilon|I_2|^{-1}\,\, \forall u\in I_2$ and $h'(u)=h^*(u)$ for all other values of $u\in[0,1]$. We have,
	$\int_0^1 h'(u) du=h_0$ and $\int_0^1 h'(u)g(u)du>\int_0^1h^*(u)g(u)du$, contradiction.  
	\hfill\Halmos\end{proof}
Next, we state two intermediate inequalities that we use to establish inequality \eqref{lpfree}.
\begin{lemma}\label{plpfree}
On every problem instance under consideration, the candidate solution to FRP satisfies, 
\begin{enumerate}[(i)]
\item 	$\zeta\, \sum_{i\in I} (o^\ntr_i+o^\xtr_i) \geq\, \sum_{i\in I}\left(G(1)\, o^\ntr_i + x^\xtr_i+ E_Y[xs_i(Y)\,g(y_i)]\right).$ 
\item  $\sum_{i\in I} E_{Y}[ xs_i(Y)\,g(y_i)]\,\geq\,  \sum_{i\in I} o^\xtr_i    G\left(1-\frac{\underset{j\in I}{\sum}x^\xtr_i}{\underset{j\in I}{\sum}o^\xtr_i}\right).$
\end{enumerate}

\end{lemma}
\noindent Before we prove Lemma \ref{plpfree}, observe that a straightforward combination of inequalities $(i)$ and $(ii)$ in the lemma proves that the candidate solution satisfies \eqref{lpfree}.
\begin{proof}{Proof of Lemma \ref{plpfree}.}
The proof of part $(i)$ is based on the LP free framework. In particular, we find a feasible solution to the following system of inequalities,
\begin{eqnarray}
\sum_{t\in T} \lambda_t + \sum_{i\in I} \theta_i &\leq & \zeta\, \sum_{i\in I} (o^\ntr_i+o^\xtr_i)\label{mdual1}\\
\theta_i+\sum_{t\in \opt_i} \lambda_t  &\geq & G(1)o^\ntr_i+x^\xtr_i+E_Y[xs_i(Y)\,g(y_i)] \qquad \forall i\in I, \label{mdual2}\\
\lambda_t\geq 0, &&\theta_i\geq 0\qquad \forall t\in T,\, i\in I. \label{mdual3}
\end{eqnarray}
Given a feasible solution the system, summing up inequalities \eqref{mdual2} over all $i\in I$ gives us the desired. 
Our candidate feasible solution is based on the random variables,
\begin{eqnarray}
&& \theta^\ntr_i(Y)= x^\ntr_i(Y)\,	 g(y_i)\qquad \forall i\in I,\label{thintr}\\
&& \theta^\xtr_i(Y)= x^\xtr_i(Y) \qquad \forall i\in I,\label{thextr}\\
&&	 \lambda_t(Y)= 
\begin{cases}
	\max_{j\in I(t,Y)} (1-g(y_j))\qquad  &t\in T^\ntr,\\
	0 \qquad  &t\in T^\xtr.
\end{cases} 
\end{eqnarray}
We note that random variables $\{\lambda_t(Y)\}_{t\in T^\ntr}$ and $\{\theta^\ntr_i(Y)\}_{i\in I}$ are the same as their counterparts (\eqref{lambda1} and \eqref{theta1}) in Section \ref{sec:ana}. For external arrivals, the fact that there is exactly one edge incident on each arrival eliminates the need to 
set a non-zero value for $\lambda_t$ and we set $\lambda_t(Y)=0\,\, \forall \in T^\xtr$.  Consider the (non-negative) candidate solution,
\begin{eqnarray*}
&&\lambda_t = E_Y\left[\lambda_t(Y)\right]\quad \forall t\in T, \qquad 	\theta_i=E_Y[\theta^\ntr_i(Y)+\theta^\xtr_i(Y)]\quad  \forall i\in I.
\end{eqnarray*}
Observe that,
\begin{eqnarray*}\label{summer}
E_{Y}\left[\sum_{i\in I} \left(\theta^\ntr_i(Y)+\theta^\xtr_i(Y)\right)+\sum_{t\in T}\lambda_t(Y)\right]=\sum_{i\in I} (x^\ntr_i+x^\xtr_i),
\end{eqnarray*}
Thus, our candidate solution satisfies inequality \eqref{mdual1}.
It remains to show that inequality \eqref{mdual2} holds for all $i\in I$. We prove this via the same high level approach as the analysis of $\falg$ in Section \ref{sec:main}. Fix an arbitrary resource $i\in I$ and seeds $Y_{-i}$ for all resources except $i$. To prove \eqref{mdual2}, it suffices to show that,
\begin{equation}\label{mdual4}
E_{y_i}\left[\theta^\ntr_i(Y)+\theta^\xtr_i(Y)+\sum_{t\in \opt_i} \lambda_t(Y)\mid Y_{-i}\right] \geq\, G(1)\, o^\ntr_i + x^\xtr_i +E_{y_i}[xs_i(Y)\,g(y_i)\mid Y_{-i}]. 	
\end{equation}	
%
To show inequality \eqref{mdual4}, we first consider 
a ``truncated" instance of the problem without the external arrivals, i.e., the arrival sequence is $T^\ntr$ and all budgets and bids are unchanged. Let $\opt^\ntr$ denote the optimal offline solution of the truncated instance. 
{\color{black} W.l.o.g., $\opt^\ntr$ is identical to the internal traffic portion of \opt. This follows from the fact that \opt\ matches every arrival in the original instance (Lemma \ref{simplify}$(i)$) and $\opt^\ntr$ can do no better than match all arrivals. } 
For every $i\in I$, exactly $o^\ntr_i$ of resource $i$'s budget is used in $\opt^\ntr$ and the set of arrivals matched to $i$ is given by $\opt^\ntr_i= T^\ntr\cap \opt_i$.
Now, using Lemma \ref{lemdec} for decomposable bids (see Appendix \ref{appx:decompose}), we have 
\begin{equation}\label{borrow}
	E_{y_i}\left[\min\{x^\ntr_i(Y),\,o^\ntr_i\}\,g(y_i)+\sum_{t\in \opt^\ntr_i} \lambda_t(Y) \mid Y_{-i} \right]\, \geq\, G(1)\, o^\ntr_i,
\end{equation}
where, $x^\ntr_i(Y)\, g(y_i)=\theta_i^\ntr(Y)$, by definition (see \eqref{thintr}).
Now, we claim that,
\[\theta^\ntr_i(y_i)+\theta^\xtr_i(y_i)=(\min\{x^\ntr_i(y_i),o^\ntr_i\}+xs_i(y_i))g(y_i)+x^\xtr_i(y_i)\quad \forall y_i\in[0,1].\]
Combining this equality with \eqref{borrow} gives us \eqref{mdual4} and completes the proof of part $(i)$ of the lemma. To show the equality above, fix  an arbitrary $y_i\in[0,1]$ and consider the following cases. 


\noindent \textbf{Case I}: $x^\ntr_i(y_i)\geq  o^\ntr_i$. We have, $o^\ntr_i+xs_i(y_i)=x^\ntr_i(y_i)$.
From \eqref{thintr} and \eqref{thextr},
\[\theta^\ntr_i(y_i)+\theta^\xtr_i(y_i)=(o^\ntr_i+xs_i(y_i))g(y_i)+x^\xtr_i(y_i),\]
as desired.

\smallskip

\noindent \textbf{Case II:} $x^\ntr_i(y_i)< o^\ntr_i$ ($\leq B_i$), i.e., resource $i$ is available at every internal arrival in $\falg(y_i)$. 
In this case, $xs_i(y_i)=0$ and
\[\theta^\ntr_i(y_i)+\theta^\xtr_i(y_i)=(x^\ntr_i(y_i)+xs_i(y_i))g(y_i)+x^\xtr_i(y_i),\]
as desired.

Part $(ii)$ of the lemma gives a lower bound on $ \sum_{i\in I} E[g(y_i)\, xs_i(y_i)]$ in terms of $x^\xtr_i$ and  $o^\xtr_i$. Since $xs_i(y_i)< o^\xtr_i$, we can ignore any resource $i\in I$ for which $o^\xtr_i=0$ (for this part). Let $o^\xtr_i>0\,\, \forall i\in I$ and recall that $G(x)=\int_0^x g(u)du=e^{x-1}-e^{-1}$. Now, to prove $(ii)$ we claim that it suffices to show, 
\begin{equation}\label{dlb2}
	E_{y_i}[ xs_i(y_i)\,g(y_i)]\,\geq\,  G\left(1-\frac{x^\xtr_i}{o^\xtr_i}\right)o^\xtr_i\qquad \forall i\in I, Y_{-i}\in[0,1]^{n-1}.
\end{equation} 
First, let us show why \eqref{dlb2} is sufficient. 
\begin{eqnarray}
	\sum_{i\in I} G\left(1-\frac{x^\xtr_i}{o^\xtr_i}\right)o^\xtr_i &= &\left(\sum_{i\in I} o^\xtr_i \right)\left[\sum_{i\in I}  G\left(1-\frac{x^\xtr_i}{o^\xtr_i}\right)\frac{o^\xtr_i}{\sum_{j\in I} o^\xtr_i}\right],\nonumber\\
	&\geq &\left(\sum_{i\in I} o^\xtr_i \right) \left[  G\left(1-\frac{\sum_{j\in I}x^\xtr_i}{\sum_{j\in I}o^\xtr_i}\right)\right],\label{jensen}
\end{eqnarray}
here inequality \eqref{jensen} follows by using the fact that $G(x)$ is a convex function of $x$ and applying Jensen's inequality.  It remains to show \eqref{dlb2}. First, by a direct application of Lemma \ref{xs},
\[E_{y_i}[xs_i(y_i)\, g(y_i)]\geq E_{y_i}[(o^\xtr_i-x^\xtr_i(y_i))\, g(y_i)]=\, G(1)\,o^\xtr_i-\int_0^1 x^\xtr_i(u)g(u)du.\]
Using the fact that there are exactly $o^\xtr_i$ external arrivals with an edge to $i$ (Lemma \ref{simplify}$(i)$), we have, $x^\xtr_i(y_i)\leq o^\xtr_i$. Further, $x^\xtr_i(u)$ is an integrable function of $u$ with $\int_0^1 x^\xtr_i(u) du=x^\xtr_i$. 
Applying Lemma \ref{func} with $h(u)=x^\xtr_i(u)$, $h_0=x^\xtr_i$, and $H_0=o^\xtr_i$, we have, 
\[\int_0^1 x^\xtr_i(u)g(u)du\,\leq\, o^\xtr_i\left(G(1)-G\left(1-\frac{x^\xtr_i}{o^\xtr_i}\right)\right)\]
Therefore,
\[E_{y_i}[xs_i(y_i)\, g(y_i)]\geq \, G(1)\,o^\xtr_i-o^\xtr_i\left(G(1)-G\left(1-\frac{x^\xtr_i}{o^\xtr_i}\right)\right)\geq\, G\left(1-\frac{x^\xtr_i}{o^\xtr_i}\right)o^\xtr_i.\]

\hfill\Halmos\end{proof}

\subsection{Proof of Theorem \ref{crlb}}\label{appx:multicr}
\begin{repeattheorem}[Theorem \ref{crlb}.]
For any given $\delta\in(0,1]$, the optimal value of FRP is a lower bound on the competitive ratio of $\falg$. For $\delta\geq \frac{1}{e}$, the optimal value of FRP is at least $1+\delta \ln \delta$.
\end{repeattheorem}
\begin{proof}{Proof.}
Given an instance with fraction of external traffic $\delta$, we show that there exists a feasible solution to FRP with objective value equal to the performance ratio $\frac{\alg}{\opt}$ on that instance. Consider the following candidate solution for the FRP. 
For resource $i\in I$, let $o^\xtr_{i}$ be the total budget of $i$ allocated to external arrivals in \opt. Let $x^\xtr_i$ be the expected budget of $i$ allocated to external arrivals in $\falg$. Let $o^\ntr_i$ and $x^\ntr_i$ be the (expected) budget of $i$ allocated to internal arrivals in \opt\ and in $\falg$, respectively. Finally, let 
\[\zeta=\frac{\sum_{i\in I} (x^\ntr_i+x^\xtr_i)}{\sum_{i\in I}(o^\ntr_i+o^\xtr_i)}.\] 
Observe that $\zeta$ is equal to the performance ratio $\frac{\pg}{\opt}$. We start by noting that for $\delta>0$, all constraints (including \eqref{lpfree}) are well defined.  In particular, for $\delta>0$, using Lemma \ref{simplify}$(i)$ we have $\sum_{i\in I} o^\xtr_i >0$ on every non-trivial instance of the problem.
Now, constraints \eqref{cap} follow from the fact that \opt\ cannot match a resource more times than the resource's initial budget. 
Since \opt\ matches every arrival (Lemma \ref{simplify}$(i)$), no algorithm can match more than $o^\xtr_i$ external arrivals to resource $i\in I$. Thus, the candidate solution satisfies \eqref{capx}. 
Constraint \eqref{fet} follows directly from the definition of $\delta$. 
Constraint \eqref{lx} is inspired by a constraint in the factor revealing program of \cite{multi}. To show that the candidate solution satisfies \eqref{lx}, we start by giving a lower bound on the expected number of internal arrivals matched by \alg. Using the fact that 
\opt\ matches all arrivals we have,
\[		\sum_{i\in I} x^\ntr_i\leq \sum_{i\in I} o^\ntr_i\leq  (1-\delta)\sum_{i\in I} B_i.\]
Applying the inequalities above in the definition of $\zeta$, we get,
\begin{eqnarray*}
\sum_{i\in I} x^\xtr_i&= &\zeta\sum_i (o^\ntr_i+o^\xtr_i)-\sum_{i\in I} x^\ntr_i,\\
& \geq &\zeta\, \delta\sum_{i\in I} B_i - (1-\zeta)\sum_{i\in I} o^\ntr_i,\\
&\geq & (\zeta\, \delta -(1-\zeta)(1-\delta))\sum_{i\in I} B_i,\\
&=& (\zeta-1+\delta)\sum_{i\in I} B_i.
\end{eqnarray*}
Recall that we showed constraint \eqref{lpfree} is satisfied by the candidate solution in Section \ref{sec:multianalysis}. It remains to show that the optimal value of FRP is at least $1+\delta\ln \delta$ for $\delta\geq 1/e$.
Let $W^\xtr= \sum_{i\in I} o^\xtr_i$ and $X^\xtr=\sum_{i\in I}x^\xtr_i$. From \eqref{lx}, we have, \begin{equation}\label{lxmod}
X^\xtr\,\geq\, (\zeta-1+\delta)\left(\sum_{i\in I} B_i\right)\,\geq 0,
\end{equation}
here the non-negativity of the expression follows from the fact that $\zeta\geq (1-1/e)\geq 1-\delta$. 
From \eqref{lpfree}, we have
\begin{eqnarray*}
\zeta \sum_{i\in I} (o^\ntr_i+o^\xtr_i)&\geq &\eta\,\sum_{i\in I} o^\ntr_i + W^\xtr \left(\frac{X^\xtr}{W^\xtr}+   G\left(1-\frac{X^\xtr}{W^\xtr}\right)\right),\\
\zeta \sum_{i\in I} o^\ntr_i +\zeta\,\delta \sum_{i\in I} B_i	&\overset{(a)}{\geq} &\eta\,\sum_{i\in I} o^\ntr_i + \delta \left(\sum_{i\in I} B_i\right) \left(\frac{\zeta-1+\delta}{\delta}+   G\left(1-\frac{\zeta-1+\delta}{\delta}\right)\right),\\
(\zeta-\eta)\frac{\sum_{i\in I} o^\ntr_i}{\sum_{i\in I} B_i} +\zeta\,\delta &\overset{}{\geq} & \delta\left(\frac{\zeta-1+\delta}{\delta}+   G\left(\frac{1-\zeta}{\delta}\right)\right)  ,\\
(\zeta-\eta) (1-\delta) +\zeta\, \delta&\overset{(b)}{\geq} &\zeta-1+\delta+ \delta  G\left(\frac{1-\zeta}{\delta}\right),\\
(1-\eta)\frac{1-\delta}{\delta}&\overset{(c)}{\geq}  & e^{\frac{1-\zeta-\delta}{\delta}}-e^{-1},\\
\zeta&\geq &1-\delta\left(1+\ln\left(\frac{1}{e\delta}\right)\right),\\
&= &1+\delta\ln \delta.
\end{eqnarray*}
Inequality $(a)$ follows from \eqref{fet}, \eqref{lxmod}, and the fact that $J(u)=u+G(1-u)$ is an increasing function for $u\geq 0$. Inequality $(b)$ follows from the upper bound $\sum_{i\in I} o^\ntr_i \leq (1-\delta)\sum_{i\in I} B_i$ that is implied by \eqref{cap} and \eqref{fet}. Finally, inequality $(c)$ follows by plugging in $G(u)=e^{u-1}-e^{-1}$.
\hfill\Halmos\end{proof}
\subsection{Other Missing Proofs for Multi-Channel Traffic}\label{appx:multi}
\begin{repeatlemma} [Lemma \ref{simplify}.]
	For any given $\delta\in[0,1]$,	to lower bound the competitive ratio of \pg\ it suffices to consider instances where 
	\begin{enumerate}[(i)]
		\item \opt\ matches every arrival. 
		\item Internal traffic arrives before 
		external traffic, i.e., there is no internal traffic after the first external arrival.
	\end{enumerate}
\end{repeatlemma}
\begin{proof}{Proof.}
	To prove part $(i)$, it suffices to show that for any graph $G$ and capacities $(B_i)_{i\in I}$, removing unmatched arrivals from $G$ does not decrease the ratio $\frac{\alg}{\opt}$. 
	Clearly, removing arrivals that are not matched in \opt\ has no effect on \opt. It suffices to show that 
	on every sample path $Y,$ $\pg(Y)$ does not increase if we remove an arrival $t\in T$. Consider two cases based on the type of resource matched to $t$ in $\pg(Y)$. First, suppose that $t$ is either unmatched or matched to a resource with non-zero remaining budget at the end of the arrival sequence. Then, removing $t$ does not change the matching. Second, suppose that $t$ is matched to resource $i$ such that $\pg_i(Y)=B_i$. Removing $t$ may change the matching but does not increase the total budget consumption of any resource.
	
	Consider an external arrival $t\in T$ in an arbitrary arrival sequence. First, notice that changing the sequence of arrivals does not change \opt. To prove part $(ii)$, it suffices to show that moving arrival $t$ to the end of the arrival sequence (keeping all other arrivals fixed) can not increase $\pg$. Consider an arbitrary sample path $Y$ in $\pg$. If $t$ is unmatched in $\pg(Y)$, moving $t$ to the end of the sequence does not change the matching. However, if $t$ is matched in $\pg(Y)$ then moving $t$ to the end of the arrival sequence does not increase the total budget consumption of any resource. 
	\hfill\Halmos\end{proof}
\section{Missing Proofs for Online Matching with Stochastic Rewards}\label{appx:stoch}
One common way to state asymptotic competitive ratio results for vanishing probabilities is as follows,
\begin{equation}\label{10}
\textsc{ALG}\geq\, \alpha\, \opt - o(1),
\end{equation}
where $\alpha$ is the competitive ratio and $o(1)$ is a term that goes to 0 as $\max_{i\in I, t\in T} p_{i,t} \to 0$. This is the form used (explicitly) in \cite{huang} and (implicitly) in \cite{mehta}. An alternative approach is to define an entirely multiplicative ratio as follows,
\[\textsc{ALG}\geq (1-o(1))\,\alpha\, \opt,\]
where $o(1)$ is a term that goes to zero as $\max_{i\in I, t\in T} p_{i,t} \to 0$.
Guarantees of this form were shown in \cite{deb} and \cite{stochrew}. For simplicity, we state and prove our result in the sense of \eqref{10}. 

\noindent \textbf{From stochastic rewards to Adwords:} Given an instance of online matching with stochastic rewards where the edge probabilities are given by $p_{i,t}$ and $p=\max_{i\in I, t\in T} p_{i,t}$, consider the Adwords instance on the same graph with bids \[b_{i,t}=p_{i,t}\quad \forall i\in I, t\in T.\]
and unknown stochastic budgets,
\[B_i\sim Exp(1) \quad \forall i\in I,\]
here $Exp(1)$ is the exponential distribution with mean 1. Taking expectation over the budgets, we have the following performance guarantee for this stochastic instance of Adwords.

\begin{lemma}\label{stochads}
For the stochastic instance of Adwords described above, \alg\ is at least $0.522-o(1)$ competitive for $\beta=1.15$ and at least $0.508-o(1)$ competitive for $\beta=1$ (in expectation over the random budgets and seed values).
\end{lemma}
\begin{proof}{Proof.}
Given starting budgets $\mathcal{B}=\{B_i\}_{i\in I}$, let 
\[\gamma_i(B_i)=\max_{t\in T} \frac{b_{i,t}}{B_i}\leq \frac{p}{B_i}\quad \forall i\in I.\]
Let $\opt(\mathcal{B})$ and $\alg(\mathcal{B})$ denote the (expected) total reward in \opt\ and \alg\ respectively. For any fixed value of $\beta$, let $\alpha$ denote the asymptotic performance guarantee of \alg\ in the limit $\gamma\to+\infty$. We have,
\[\alpha(\mathcal{B})=\frac{\alg(\mathcal{B})}{\opt(\mathcal{B})}\geq \frac{1}{1+\max_{i\in I} \gamma_i(B_i)} \alpha\geq \frac{\min_{i\in I} B_i}{\min_{i\in I} B_i+p}\alpha.\]
Thus, $E_{\mathcal{B}}[\alpha(\mathcal{B})]\geq E_{\mathcal{B}}[\frac{\min_{i\in I} B_i}{\min_{i\in I} B_i+p}]\, \alpha$. Using the inequality $\frac{1}{1+x}\geq 1-x$, we have,
\[E_{\mathcal{B}}\left[\frac{\min_{i\in I} B_i}{\min_{i\in I} B_i+p}\right]\geq 1-E_{\mathcal{B}}\left[\frac{p}{\min_{i\in I} B_i}\right].\]
For any finite set $I$, let $E$ denote the event that $\min_{i\in I} B_i \geq \sqrt{p}$. Notice that the probability that $E$ occurs equals 1 in the limit of $p\to 0$. Thus, $\alpha > E_{\mathcal{B}}[\alpha(\mathcal{B})]\geq \alpha-o(1)$, as desired.
\hfill\Halmos\end{proof}
\begin{repeatlemma}{Lemma 14 in \cite{huang}.}\label{restate}
An online algorithm that is $\alpha-o(1)$ competitive for the stochastic instances of Adwords defined above, is also $\alpha-o(1)$ competitive
for online matching with stochastic rewards.
\end{repeatlemma}
\begin{proof}{Proof of Theorem \ref{stoch}.}
Using Lemma 14 in \cite{huang} (stated above) and Lemma \ref{stochads}, gives us the desired.
\hfill\Halmos	\end{proof}
\section{Potential Application in Automated Budget Management}\label{appx:budmage}
In using a automated tools to manage their portfolio, the advertiser decides the overall budget, creates a portfolio of ad campaigns, and specifies a high level performance goal for the portfolio. Using these specifications, the tool aims to automatically determine (over time) a good budget distribution for the portfolio, as well as, daily budgets and bids for key words in each campaign. As the example below illustrates, fixing a (daily) budget at the start of the day may be highly sub-optimal.
\begin{eg}
	\emph {Consider an advertiser with a portfolio composed of two search ad campaigns labeled $\{1,2\}$. In the absence of a budget constraint, let the total expenditure (per day) in campaign 1 be a Bernoulli random variable $X_1\in\{0,1\}$. Similarly, let $X_2\in\{0,1\}$ represent the expenditure (per day) in campaign 2. We know that $X_1$ and $X_2$ are identically distributed with mean 0.5 and $X_1=1-X_2$, i.e., the campaign expenditures are in perfect negative correlation. Now, suppose we have a total daily budget of 1 that needs to be distributed between the two campaigns at the start of each day.  If we distribute the overall budget evenly i.e., allocate a daily budget of 0.5 to each campaign, then the total budget utilization in campaign $i\in\{1,2\}$ is at most $E[\min\{0.5, X_i\}]=0.25$, i.e., only $50$\% of the total budget is used. Splitting the budget across the campaigns in a different proportion does not improve the total utilization. 
	}
	
	\emph{ 
		Now, suppose that instead of fixing the daily budget for each campaign at the beginning, we start by giving a budget of 0.5 to each campaign and use live cross-campaign data to adjust the distribution during the day. Since the campaigns are negatively correlated, exactly one of the two campaigns will have a non-zero expenditure on each day. Let $i^*$ denote this campaign on a given day. 
		Before the initial budget of $i^*$ runs out, we can redirect the rest of the budget to $i^*$ and achieve a budget utilization of 100\%. 
}\end{eg}
\smallskip

Platforms with in-house tools for automated budget optimization (such as Search Ads 360 by Google) may, in fact, have the cross-campaign data necessary to perform adjustments to budget distribution in real-time. At a conceptual level, armed with a budget oblivious allocation algorithm and live cross-campaign data, such a platform can start the day with some distribution of budget to each campaign and adjust the distribution during the day to improve overall utilization. When a campaign is about to run out of its tentatively assigned budget, the platform can either transfer unused budget from another active campaign to this one, or, let the campaign expire for the day. In this way, campaigns that do particularly well during the day would receive more budget and this improves the overall utilization of budget for the portfolio. A budget oblivious allocation algorithm is agnostic to such adjustments and only needs to be notified when the budget of a campaign expires for the day. 

\end{APPENDICES}
\end{document}